\newtheorem{thm}{Theorem}[section]
\newtheorem{cor}[thm]{Corollary}
\newtheorem{lem}[thm]{Lemma}
\newtheorem{rem}{Remark}
\newtheorem{define}{Definition}
\def\pc{H}
\def\gen{G}
\def\bh{\zeta}
\def\t{t}
\def\Fbar{\overline{F}}
\def\GF{\mathbb{GF}}
\newcommand{\minW}[2]{w^{#2}_{\min,#1}}
\newcommand{\minWi}[1]{\minW{#1}{}}
\newcommand{\numW}[3]{N_{#1,#2}^{#3}}
\newcommand{\numWi}[2]{\numW{#1}{#2}{}}
\newcommand{\qedcustom}{\hfill \raggedright $\Box$}
\def\E{\mathbb{E}}
\def\mE{\mathcal{E}}
\def\Re{{\mathbb{R}}}
\def\mones{\mathbf{1}_{m_x}}
\newcommand{\argmin}{\operatornamewithlimits{argmin}}
\newcommand{\bra}[1]{\left(#1\right)}
\providecommand{\OO}[1]{\operatorname{O}\bigl(#1\bigr)}
\title{Error Correcting Codes for Distributed Control}
\author{\IEEEauthorblockN{Ravi Teja Sukhavasi and} \and \IEEEauthorblockA{Babak Hassibi}
\thanks{Ravi Teja Sukhavasi is a graduate student with the department of Electrical Engineering, California Institute of Technology, Pasadena, USA
        {\tt\small teja@caltech.edu}}%
\thanks{Babak Hassibi is a faculty with the department of Electrical Engineering, California Institute of Technology,
        Pasadena, USA
        {\tt\small hassibi@caltech.edu}}%
\thanks{This work was supported in part by the National Science Foundation under grants CCF-0729203, CNS-0932428 and CCF-1018927, by the Office of Naval Research under the MURI grant N00014-08-1-0747, and by Caltech's Lee Center for Advanced Networking.}
}
\begin{document}

\maketitle
\thispagestyle{empty}
\pagestyle{empty}

\begin{abstract}
The problem of stabilizing an unstable plant over a noisy communication link is an increasingly important one that arises in applications of networked control systems. Although the work of Schulman and Sahai over the past two decades, and their development of the notions of \lq\lq tree codes\rq\rq\phantom{} and \lq\lq anytime capacity\rq\rq, provides the theoretical framework for studying such problems, there has been scant practical progress in this area because explicit constructions of tree codes with efficient encoding and decoding did not exist. To stabilize an unstable plant driven by bounded noise over a noisy channel one needs real-time encoding and real-time decoding and a reliability which increases exponentially with decoding delay, which is what tree codes guarantee. We prove that linear tree codes occur with high probability and, for erasure channels, give an explicit construction with an expected decoding complexity that is constant per time instant. We give novel sufficient conditions on the rate and reliability required of the tree codes to stabilize vector plants and argue that they are asymptotically tight. This work takes an important step towards controlling plants over noisy channels, and we demonstrate the efficacy of the method through several examples. 
\end{abstract}

%
\section{Introduction}
\label{sec: Introduction}
Control theory deals with regulating the behavior of dynamical systems using real-time output feedback. Most traditional control systems are characterized by the measurement and control subsystems being co-located. Hence, there were no loss of measurement and control signals in the feedback loop. There is a very mature theory for this setup and there are concrete theoretical tools to analyze the overall system performance and its robustness to modeling errors \cite{Hinfinity}. There are increasingly many applications of networked control systems, however, where the measurement and control signals are communicated over noisy channels. Some examples include the smart grid, distributed computation, intelligent highways, etc (e.g., see \cite{Murray}). 

Applications of networked control systems represent different levels of decentralization in their structure. At a high level, the measurement unit and the controller are not co-located but each is individually centralized. In addition, the measurement and control subsystems are themselves comprised of arrays of sensors and actuators that in turn communicate with each other over a network. Our focus is on the former. We consider the setup where the measurement and control subsystems are individually centralized but are separated by communicated channels. 

Several aspects of this problem have been studied in the literature \cite{borkar, WongI, WongII, Nillson, Walsh}. When the communication links are modeled as rate-limited noiseless channels, significant progress has been made (see e.g.,\cite{Nair, Matveev, Minero}) in understanding the bandwidth requirements for stabilizing open loop unstable systems. \cite{Martins} considered robust feedback stabilization over communication channels that are modeled as variable rate digital links where the encoder has causal knowledge of the number of bits transmitted error free. Under a packet erasure model, \cite{Sinopoli} studied the problem of LQG (Linear Quadratic Gaussian) control in the presence of measurement erasures and showed that closed loop mean squared stability is not possible if the erasure probability is higher than a certain threshold. So, clearly the measurement and control signals need to be encoded to compensate for the channel errors. 

There are two key differences between the communication paradigm for distributed control and that traditionally studied in information theory. Shannon's information theory, in large part, is concerned with reliable one-way communication while communication for control is fundamentally interactive: the plant measurements to be encoded are determined by the control inputs, which in turn are determined by how the controller decodes the corrupted plant measurements. Furthermore, conventional channel codes achieve reliability at the expense of delay which, if present in the feedback loop of a control system, can adversely affect its performance. 

In this context, \cite{Sahai} provides a necessary and sufficient condition on the communication reliability needed over channels that are in the feedback loop of unstable scalar linear processes, and proposes the notion of anytime capacity as the appropriate figure of merit for such channels. In essence, the encoder is causal and the probability of error in decoding a source symbol that was transmitted $d$ time instants ago should decay exponentially in the decoding delay $d$. 

Although the connection between communication reliability and control is clear, very little is known about error-correcting codes that can achieve such reliabilities. Prior to the work of \cite{Sahai}, and in the context of distributed computation, \cite{Schulman} proved the existence of codes which under maximum likelihood decoding achieve such reliabilities and referred to them as tree codes. Note that any real-time error correcting code is causal and since it encodes the entire trajectory of a process, it has a natural tree structure to it. \cite{Schulman} proves the existence of nonlinear tree codes and gives no explicit constructions and/or efficient decoding algorithms. \cite{Palaiyanur} and \cite{Schulman} also propose sequential decoding algorithms whose expected complexity per time instant is fixed but the probability that the decoder complexity exceeds $C$ decays with a heavy tail as $C^{-\gamma}$. Much more recently \cite{Ostrovsky} proposed efficient error correcting codes for unstable systems where the state grows only polynomially large with time. When the state of an unstable scalar linear process is available at the encoder and when there is noiseless feedback of channel outputs, \cite{Yuksel} and \cite{Simsek} develop encoding-decoding schemes that can stabilize such a process over the binary symmetric channel and the binary erasure channel respectively. But when the state is available only through noisy measurements or when there is no channel feedback, little is known in the way of stabilizing an unstable scalar linear process over a stochastic communication channel.

The subject of error correcting codes for control is in its relative infancy, much as the subject of block coding was after Shannon's seminal work in \cite{Shannon}. So, a first step towards realizing practical encoder-decoder pairs with anytime reliabilities is to explore linear encoding schemes. We consider rate $R=\frac{k}{n}$ causal linear codes which map a sequence of $k$-dimensional binary vectors $\{b_\tau\}_{\tau=0}^\infty$ to a sequence of $n-$dimensional binary vectors $\{c_\tau\}_{\tau=0}^\infty$ where $c_t$ is only a function of $\{b_\tau\}_{\tau=0}^t$. Such a code is anytime reliable if at all times $t$ and delays $d\geq d_o$, $P\bigl(\hat{b}_{t-d|t}\neq b_{t-d}\bigr) \leq \eta 2^{-\beta nd}$ for some $\beta > 0$. We show that linear tree codes exist and further, that they exist with a high probability. For the binary erasure channel, we propose a maximum likelihood decoder whose average complexity of decoding is constant per each time iteration and for which the probability that the complexity at a given time $t$ exceeds $KC^3$ decays exponentially in $C$. This allows one to stabilize a partially observed unstable scalar linear process over a binary erasure channel and to the best of the authors' knowledge, this has not been done before. 

In Section \ref{sec: Background}, we present some background and motivate the need for anytime reliability with a simple example. In Section \ref{sec: Sufficient Condition}, we come up with a sufficient condition for anytime reliability in terms of the weight distribution of the code. In Section \ref{sec: Existence}, we introduce the ensemble of time invariant codes and use the results from Section \ref{sec: Sufficient Condition} to prove that time invariant codes with anytime reliability exist with a high probability. In Section \ref{sec: ImprovedThresholds}, we invoke some standard results from the literature on coding theory to improve the results obtained in Section \ref{sec: Existence}. In Section \ref{sec: DecodingBEC}, we present a simple decoding algorithm for the erasure channel. 

\section{Background}
\label{sec: Background}
Owing to the duality between estimation and control, the essential complexity of stabilizing an unstable process over a noisy communication channel can be captured by studying the open loop estimation of the same process. We will motivate the kind of communication reliability needed for control through a simple example.

\textit{A toy example: }Consider tracking the following random walk, $x_{\t+1} = \lambda x_{\t} + w_\t$, where $w_t$ is Bernoulli$\bra{\frac{1}{2}}$, i.e., 0 or 1 with equal probability, $x_0 = 0$ and $|\lambda| > 1$. Suppose an observer observes $x_\t$ and communicates over a noisy communication channel to an estimator. Also assume that the estimator knows the system model and the initial state $x_0 = 0$. The observer clearly needs to communicate whether $w_\t$ is $0$ or $1$. Note that the observer only has causal access to $\{w_i\}$, i.e., at any time $\t$, the observer has access to $\{w_0,\ldots, w_{\t-1}\}$. Let the encoding function of the observer at time $\t$ be $f_\t : \GF_2^\t\mapsto \mathcal{X}^n$, where $\mathcal{X}$ is the channel input alphabet and $n$ is the number of channel uses available for each step of the system evolution. One can visualize such a causal encoding process over a binary tree as in Fig. \ref{fig: treeCode}. While the information bits determine the path in the tree, the label on each branch denotes the symbol transmitted by the observe/encoder. The codeword associated to a given path in the tree is given by the concatenation of the branch symbols along that path.
\begin{figure}
 \centering
\includegraphics[scale=0.4]{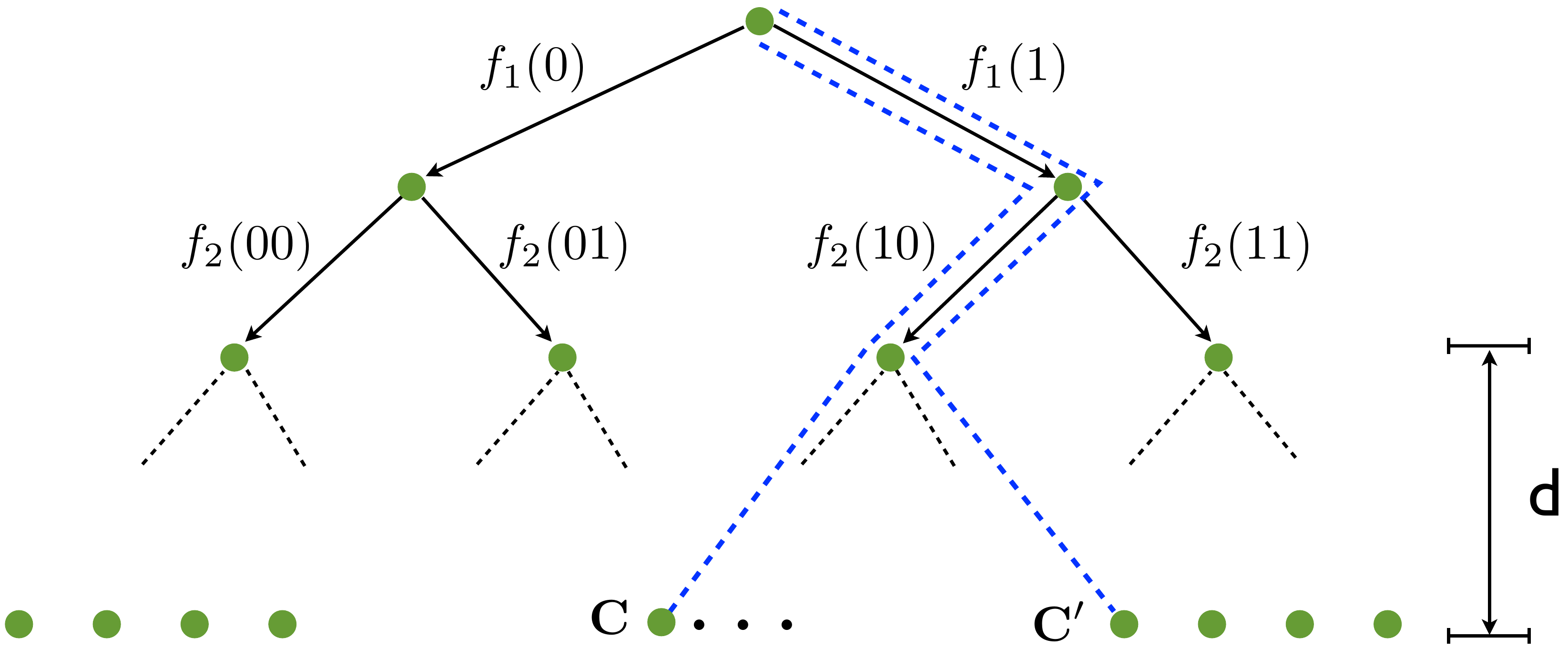}
\caption{One can visualize any causal code on a tree. The distance property is: $\|\mathbf{C}-\mathbf{C'}\|_{\mathcal{H}} \propto d$. This must be true for any two paths with a common root and of equal length in the tree}
\label{fig: treeCode}
\end{figure}
Upon receiving the channel outputs until time $t$, the estimator generates estimates $\{\hat{w}_{0|t},\hat{w}_{1|t},\ldots,\hat{w}_{t-1|t}\}$ of the noise sequence $\{w_0,w_1,\ldots,w_{t-1}\}$. Then, the estimator's estimate of the state, $\hat{x}_{\t+1|\t}$, is given by
\begin{align}
\label{eq: example}
 \hat{x}_{\t+1|\t} = \sum_{j=0}^\t\lambda_{\t-j}\hat{w}_{j|\t}
\end{align}
Suppose $P_{d,\t}^e = P\bra{\argmin_{j}(\hat{w}_{j|\t}\neq \hat{w}_{j}) = \t-d+1}$, i.e., $P_{d,t}^e$ is the probability that the position of the earliest erroneous $\hat{w}_{j|t}$ is at time $j=t-d+1$. The probability here is over the randomness of the channel. From \eqref{eq: example}, we can bound $\E\bigl|x_{\t+1}-\hat{x}_{\t+1|\t}\bigr|^2$ from above as
\begin{align*}
&\sum_{w_{0:\t},\hat{w}_{0:\t|\t}}P\bra{w_{0:\t},\hat{w}_{0:\t|\t}}\biggl| \sum_{j=1}^n\lambda^{\t-j}(w_j-\hat{w}_{j|\t})\biggr|^2\\
& \leq\sum_{d\leq \t}P_{d,\t}^e \biggl| \sum_{j=\t-d+1}^\t\lambda^{\t-j}(w_j-\hat{w}_{j|\t})\biggr|^2\\
& \leq \frac{1}{(|\lambda|-1)^2}\sum_{d\leq \t}P_{d,\t}^e|\lambda|^{2d}
\end{align*}
Clearly, a sufficient condition for $\limsup_\t\E\left|x_{\t+1}-\hat{x}_{\t+1|\t}\right|^2$ to be finite  is as follows
\begin{align}
\label{eq: anytimeReliability1} P_{d,\t}^e \leq |\lambda|^{-(2+\delta)d}\,\,\, \forall\,\,\, d\geq d_o,\,\,\, \t > \t_o\,\,\, \text{and } \delta > 0
\end{align}
where $d_o$ and $\t_o$ are constants that do no depend on $\t,d$. 

In the context of control, it was first observed in \cite{Sahai} that exponential reliability of the form \eqref{eq: anytimeReliability1} is required to stabilize unstable plants over noisy communication channels. For a given channel, encoder-decoder pairs that achieve \eqref{eq: anytimeReliability1} are said to be anytime reliable. This definition will be made more precise in Section \ref{sec: problemSetup}. In the context of distributed computation, it was observed in \cite{Schulman} that a causal code under maximum likelihood decoding over a discrete memoryless channel is anytime reliable provided that the code has a certain distance property which is illustrated in Fig. \ref{fig: treeCode}. Avoiding mathematical clutter, one can describe the distance property as follows. For any two paths with a common root and of equal length in the tree whose least common ancestor is at a height $d$ from the bottom, the Hamming distance between their codewords should be proportional to $d$. \cite{Schulman} referred to codes with this distance property as \textit{tree codes} and showed that they exist. There has recently been increased interest (e.g., \cite{Braverman, Gelles, Moitra}) in studying tree codes for interactive communication problems. But the tree codes are, in general, non-linear and the existence was not with high probability. 

We will prove the existence, with high probability, of \textit{linear tree codes} and exploit the linearity to develop an efficiently decodable anytime reliable code for the erasure channel.
\section{Problem Setup}
\label{sec: problemSetup}

\begin{figure}
\begin{center}
\includegraphics[scale=0.35]{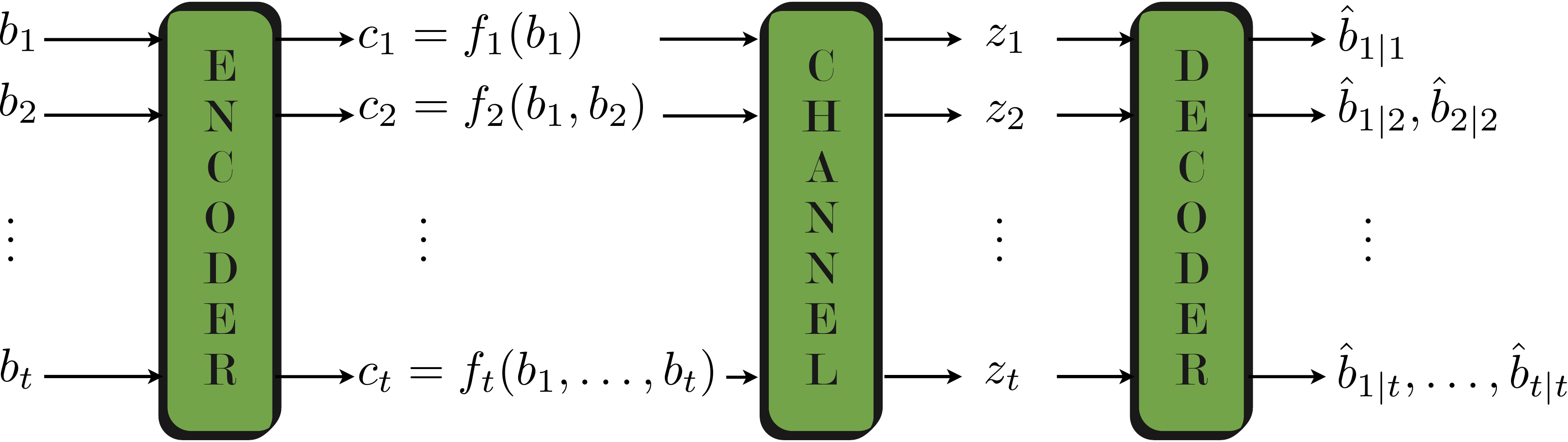}
\caption{Causal encoding and decoding}
\label{fig: operation}
\end{center}
\end{figure} 

\begin{table}
 \caption{}
 \label{tab: notation}
\begin{center}
\begin{tabular}{|r|l|}
\hline
 $H(.)$ & The binary entropy function\\
 $H^{-1}(y)$ & The smaller root of the equation $H(x) = y$\\
 For a matrix $F$, $\Fbar$ & abs($F$), i.e., $\Fbar_{i,j}=|F_{i,j}|$.$\forall$ $i,j$\\
 $\rho(F)$ & Spectral radius of $F$\\
 For a vector $x$, $x^{(i)}$ & The $i^{th}$ component of $x$\\
 $\mathbf{1}_m$ & $[1, \ldots, 1]^T$, i.e., a column with $m$ 1's\\
 For $w,v\in\Re^m$, $w \gtrless v$ & Component-wise inequality\\ 
 $\log(.)$ & Logarithm in base 2\\
 For $0\leq x,y\leq 1$, $KL\bra{x\Vert y}$ & $x\log\displaystyle\frac{x}{y} + (1-x)\log\frac{1-x}{1-y}$, i.e., Kullbeck-Leibler divergence \\
 \phantom{} & between Bernoulli($x$) and Bernoulli($y$)\\
\hline
\end{tabular}
\end{center}
\end{table}

The notation to be used in the rest of the paper is summarized in Table \ref{tab: notation}. Consider the following $m_x-$dimensional unstable linear system with $m_y-$dimensional measurements. Assume that $(F,H)$ is observable and $(F,G)$ is controllable. 
\begin{align}
\label{eq: sysmodel}
x_{t+1} = Fx_t + Gu_t + w_t,\quad y_t = Hx_t + v_t 
\end{align}
where $\rho(F)>1$, $u_t$ is the $m_u-$dimensional control input and, $w_t$ and $v_t$ are bounded process and measurement noise variables, i.e., $\Vert w_t\Vert_\infty < \frac{W}{2}$ and $\Vert v_t\Vert_\infty < \frac{V}{2}$ for all $t$. The measurements $\{y_t\}$ are made by an observer while the control inputs $\{u_t\}$ are applied by a remote controller that is connected to the observer by a noisy communication channel. We assume that the control input is available to the plant losslessly. We do not assume that the observer has access to either the channel outputs or the control inputs. As is shown to be possible, e.g., in \cite{Matveev, Sahai}, we do not use the control actions to communicate the channel outputs back to the observer through the plant because this could have a detrimental affect on the performance of the controller. 

Before proceeding further, a word is in order about the boundedness assumption on the noise. If the process and/or measurement noise have unbounded support, it is not clear how one can stabilize the system without additional assumptions on the channel. For example, \cite{Yuksel} assumes feedback of channel outputs to the observer in order to stabilize an unstable process perturbed by Gaussian noise over an erasure channel while \cite{SerdarIEEETAC2011} proposes a forward side channel between the observer and the controller that has a positive zero error capacity. We avoid this difficulty by assuming that the noise has bounded support which may be a reasonable assumption to make in practice. 

The measurements $y_{0:t-1}$ will need to be quantized and encoded by the observer to provide protection from the noisy channel while the controller will need to decode the channel outputs to estimate the state $x_t$ and apply a suitable control input $u_t$. This can be accomplished by employing a channel encoder at the observer and a decoder at the controller. For simplicity, we will assume that the channel input alphabet is binary. Suppose one time step of system evolution in \eqref{eq: sysmodel} corresponds to $n$ channel uses\footnote{In practice, the system evolution in \eqref{eq: sysmodel} is obtained by discretizing a continuous time differential equation. So, the interval of discretization could be adjusted to correspond to an integer number of channel uses, provided the channel use instances are close enough.}, i.e., $n$ bits can be transmitted for each measurement of the system. Then, at each instant of time $t$,  the operations performed by the observer, the channel encoder,  the channel decoder and the controller can be described as follows. The observer generates a $k-$bit message, $b_t\in\mathbb{GF}^k$, that is a causal function of the measurements, i.e., it depends only on $y_{0:t}$. Then the channel encoder causally encodes $b_{0:t} \in \mathbb{GF}^{kt}$ to generate the $n$ channel inputs $c_t\in\mathbb{GF}^n$. Note that the rate of the channel encoder is $R = k/n$. Denote the $n$ channel outputs corresponding to $c_t$ by $z_t \in \mathcal{Z}^n$, where $\mathcal{Z}$ denotes the channel output alphabet. Using the channel outputs received so far, i.e., $z_{0:t}\in \mathcal{Z}^{nt}$, the channel decoder generates estimates $\{\hat{b}_{\tau|t}\}_{\tau \leq t}$ of $\{b_\tau\}_{\tau\leq t}$, which, in turn, the controller uses to generate the control input $u_{t+1}$. This is illustrated in Fig. \ref{fig: operation}. Now, define
\begin{align*}
 P_{t,d}^e = P\bra{\min\{\tau:\hat{b}_{\tau|t}\neq b_\tau\} = t-d+1}
\end{align*}
Thus, $P^e_{t,d}$ is the probability that the earliest error is $d$ steps in the past.
\begin{define}[Anytime reliability]
 Given a channel, we say that an encoder-decoder pair is $(R,\beta,d_o)-$anytime reliable over that channel if
\begin{align}
\label{eq: anytimeReliability}
 P_{t,d}^e \leq \eta 2^{-n\beta d},\,\,\,\forall\,\,\,t,d\geq d_o
\end{align}
In some cases, we write that a code is $(R,\beta)-$anytime reliable. This means that there exists a fixed $d_o > 0$ such that the code is $(R,\beta,d_o)-$anytime reliable.\qedcustom
\end{define}

We will show in Sections \ref{sec: sufficient} and \ref{sec: vectorMeasurements} that $(R,\beta)-$anytime reliability with an appropriately large rate, $R$, and exponent, $\beta$, is a sufficient condition to stabilize \eqref{eq: sysmodel} in the mean squared sense\footnote{can be easily extended to any other norm}. In what follows, we will demonstrate causal linear codes which under maximum likelihood (ML) decoding achieve such exponential reliabilities.
\section{Linear Anytime Codes}
\label{sec: Sufficient Condition}
As discussed earlier, a first step towards developing practical encoding and decoding schemes for automatic control is to study the existence of linear codes with anytime reliability. We will begin by defining a causal linear code.
\begin{define}[Causal Linear Code]
 A causal linear code is a sequence of linear maps $f_\tau:\mathbb{GF}_2^{k\tau}\mapsto \mathbb{GF}_2^n$ and hence can be represented as
\begin{align}
\label{eq: GnR}
 f_\tau(b_{1:\tau}) = \gen_{\tau 1}b_1 + \gen_{\tau 2}b_2 + \ldots + \gen_{\tau \tau}b_\tau
\end{align}
where $\gen_{ij}\in \mathbb{GF}_2^{n\times k}$\qedcustom
\end{define}
We denote $c_\tau \triangleq f_\tau(b_{1:\tau})$. Note that a tree code is a more general construction where $f_\tau$ need not be linear. Also note that the associated code rate is $R = k/n$.  The above encoding is equivalent to using a semi-infinite block lower triangular generator matrix $\mathbb{G}_{n,R}$ given by
\begin{align*}
 \mathbb{G}_{n,R} = \left[\begin{array}{ccccc}
				G_{11} & 0 & \ldots & \ldots & \ldots \\
				G_{21} & G_{22} & 0 & \ldots & \ldots \\
				\vdots & \vdots & \ddots & \vdots & \vdots \\
				G_{\tau 1} & G_{\tau 2} & \ldots & G_{\tau\tau} & 0\\
				\vdots & \vdots & \vdots & \vdots & \ddots
				\end{array}\right]
\end{align*}
One can equivalently represent the code with a parity check matrix $\mathbb{H}_{n,R}$, where $\mathbb{G}_{n,R}\mathbb{H}_{n,R} = 0$. The parity check matrix is in general not unique but it is easy to see that one can choose $\mathbb{H}_{n,R}$ to be block lower triangular too.
\begin{align}
\label{eq: paritycheck}
  \mathbb{H}_{n,R} = \left[\begin{array}{ccccc}
				\pc_{11} & 0 & \ldots & \ldots & \ldots \\
				\pc_{21} & \pc_{22} & 0 & \ldots & \ldots \\
				\vdots & \vdots & \ddots & \vdots & \vdots \\
				\pc_{\tau 1} & \pc_{\tau 2} & \ldots & \pc_{\tau\tau} & 0\\
				\vdots & \vdots & \vdots & \vdots & \ddots
				\end{array}\right]
\end{align}
where $\pc_{ij}\in\{0,1\}^{\overline{n}\times n}$ and $\overline{n}=n(1-R)$. In fact, we present all our results in terms of the parity check matrix. Before proceeding further, some of the notation specific to coding is summarized in Table \ref{tab: notation2}.

\begin{table}
 \caption{}
 \label{tab: notation2}
\begin{center}
\begin{tabular}{|r|l|}
\hline
 $\mathbb{H}_{n,R}^t$ & $\overline{n}t\times nt\text{ leading principal minor of }\mathbb{H}_{n,R}$\\
 $\mathcal{C}_t$ & $\left\{c\in\{0,1\}^{nt}: \mathbb{H}_{n,R}^tc = 0\right\}$\\
 $\mathcal{C}_{t,d}$ & $\left\{c\in\mathcal{C}_t: c_{\tau<t-d+1}=0,\,\,c_{t-d+1}\neq 0\right\}$\\
 $\|c\|$ & Hamming weight of $c$\\
 $\numW{w}{d}{t}$ & $\left|\{c\in\mathcal{C}_{t,d}:\|c\|=w\}\right|$\\
 $\minW{d}{t}$ & $\argmin_{w}(\numW{w}{d}{t}\neq 0)$\\
 $P_{t,d}^e$ &  $P\bra{\min\{\tau:\hat{b}_{\tau|\t}\neq b_\tau\}=\t-d+1}$\\
\hline
\end{tabular}
\end{center}
\end{table}

The objective is to study the existence of causal linear codes which are $(R,\beta)-$anytime reliable under maximum likelihood (ML) decoding. With reference to Fig. \ref{fig: treeCode}, this amounts to choosing the branch labels, $f_\tau(b_{1:\tau})$, in such a way that they satisfy the distance property, and also are linear functions of the input, $b_{1:\tau}$. Further, we are interested in characterizing the thresholds on the rate, $R$, and exponent, $\beta$, for which such codes exist. In the interest of clarity, we will begin with a self-contained discussion of a weak sufficient condition on the distance distribution, $\{\numW{w}{d}{t},\,\minW{d}{t}\}$, of a causal linear code so that it is anytime reliable under ML decoding. This sufficient condition is an adaptation of the distance property illustrated in Fig. \ref{fig: treeCode} to the case of causal linear codes. In section \ref{sec: Existence}, we will demonstrate the existence of causal linear codes that satisfy this sufficient condition. The thresholds thus obtained will be significantly tightened in section \ref{sec: ImprovedThresholds} by invoking some standard results from random coding literature, e.g., \cite{Gallager, Barg}. 

\subsection{A Sufficient Condition}
Suppose the decoding instant is $\t$ and without loss of generality, assume that the all zero codeword is transmitted, i.e., $c_{\tau}=0$ for $\tau\leq \t$. We are interested in the error event where the earliest error in estimating $b_\tau$ happens at $\tau =\t-d+1$, i.e., $\hat{b}_{\tau|\t}=0$ for all $\tau < \t-d+1$ and $\hat{b}_{\t-d+1|\t}\neq 0$. Note that this is equivalent to the ML codeword, $\hat{c}$, satisfying $\hat{c}_{\tau < \t-d+1} = 0$ and $\hat{c}_{\t-d+1}\neq 0$, and $\mathbb{H}_{n,R}^t$ having full rank so that $\hat{c}$ can be uniquely mapped to a transmitted sequence $\hat{b}$. Then, using a union bound, we have
\begin{align}
P_{t,d}^e = P\left[\bigcup_{c\in\mathcal{C}_{t,d}}(0\text{ is decoded as }c)\right] \leq \sum_{c\in\mathcal{C}_{t,d}}P\bra{0\text{ is decoded as }c}\label{eq: weight1}
\end{align}
Consider a \textit{memoryless binary-input output-symmetric} (MBIOS) channel. Let $\mathcal{X}$ and $\mathcal{Z}$ denote the input and output alphabet respectively. The Bhattacharya parameter, $\bh$, for such a channel is defined as 
\begin{align*}
 \bh = \left\{\begin{array}{ll}
		\int\limits_{z\in\mathcal{Z}}\sqrt{p(z|X=1)p(z|X=0)}dz & \text{if }\mathcal{Z}\text{ is continuous}\\
		\sum_{z\in\mathcal{Z}}\sqrt{p(z|X=1)p(z|X=0)} & \text{if }\mathcal{Z}\text{ is discrete valued}
	     \end{array}\right.
\end{align*}
Now, it is well known (e.g., see \cite{Shamai}) that, under ML decoding 
\begin{align*}
P\bra{0\text{ is decoded as }c}\leq \bh^{\|c\|}
\end{align*}
From \eqref{eq: weight1}, it follows that $P_{t,d}^e \leq  \sum_{\minW{d}{t}\leq w\leq nd}\numW{w}{d}{t}\bh^w$.
If $\minW{d}{t} \geq \alpha nd$ and $\numW{w}{d}{t}\leq 2^{\theta w}$ for some $\theta < \log_2(1/\bh)$, then
\begin{align}
\label{eq: sufficient_condition}
 P_{t,d}^e \leq \eta2^{-\alpha nd(\log_2(1/\bh) - \theta)}
\end{align}
where $\eta = (1-2^{\log_2(1/\bh) - \theta})^{-1}$. So, an obvious sufficient condition for $\mathbb{H}_{n,R}$ can be described in terms of $\minW{d}{t}$ and $\numW{w}{d}{t}$ as follows. For some $\theta < \log_2(1/\bh)$, we need
\begin{subequations}
 \label{eq: weight_distribution}
\begin{align}
 \minW{d}{t} &\geq \alpha nd\,\,\,\forall\,\,\,t,\,\,\,d\geq d_o\\
 \numW{w}{d}{t} &\leq 2^{\theta w}\,\,\,\forall\,\,\, t,\,\,\,d\geq d_o
\end{align}
where $d_o$ is a constant that is independent of $d,t$. This brings us to the following definition
\begin{define}[Anytime distance and Anytime reliability]
 We say that a code $\mathbb{H}_{n,R}$ has $(\alpha,\theta,d_o)-$anytime distance, if the following hold
\begin{enumerate}
 \item $\mathbb{H}_{n,R}^t$ is full rank for all $t>0$
 \item $\minW{d}{t} \geq \alpha nd$, $\numW{w}{d}{t} \leq 2^{\theta w}$ for all $t > 0$ and $d \geq d_o$.\qedcustom
\end{enumerate}
\end{define}
We require that $\mathbb{H}_{n,R}^t$ have full rank so that the mapping from the source bits $b_{1:t}$ to coded bits $c_{1:t}$ is invertible. We summarize the preceeding discussion as the following Lemma.
\begin{lem}
If a code $\mathbb{H}_{n,R}$ has $(\alpha,\theta,d_o)-$anytime distance, then it is $(R,\beta,d_o)-$anytime reliable under ML decoding over a channel with Bhattacharya parameter $\bh$ where $\beta = \alpha\bra{\log(1/\bh)-\theta}$\qedcustom
\end{lem}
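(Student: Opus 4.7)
The proof essentially amounts to assembling the ingredients already laid out in the discussion leading up to the lemma, so my plan is to carry out the telescoping computation in a clean, self-contained way.

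First I would fix a decoding time $t$ and a target delay $d \geq d_o$, and (by linearity of the code and output-symmetry of the channel) assume without loss of generality that the transmitted sequence is the all-zero codeword. The error event $\{\min\{\tau:\hat b_{\tau|t}\neq b_\tau\}=t-d+1\}$ is then equivalent, by the full-rank assumption on $\mathbb{H}_{n,R}^t$, to the ML decoder outputting some nonzero codeword $c\in\mathcal{C}_{t,d}$. A union bound over $\mathcal{C}_{t,d}$ gives
\begin{equation*}
P_{t,d}^e \;\leq\; \sum_{c\in\mathcal{C}_{t,d}} P(0\text{ is decoded as } c).
\end{equation*}

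Next I would invoke the standard MBIOS-channel pairwise error bound $P(0\text{ is decoded as } c)\leq \zeta^{\|c\|}$ (quoted earlier in the paper), and reorganize the sum by Hamming weight, which turns the right-hand side into $\sum_{w\geq \minWi{d}^t} \numWi{w}{d}^t\,\zeta^{w}$. Applying the two hypotheses of $(\alpha,\theta,d_o)$-anytime distance—namely $\numWi{w}{d}^t\leq 2^{\theta w}$ and $w\geq \minWi{d}^t\geq \alpha n d$—yields
\begin{equation*}
P_{t,d}^e \;\leq\; \sum_{w\geq \alpha n d} 2^{-w(\log(1/\zeta)-\theta)}.
\end{equation*}

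Since $\theta<\log(1/\zeta)$ by assumption, this is a convergent geometric series whose sum is bounded by $\eta\,2^{-\alpha n d(\log(1/\zeta)-\theta)}$ with $\eta=(1-2^{-(\log(1/\zeta)-\theta)})^{-1}$. Setting $\beta=\alpha(\log(1/\zeta)-\theta)$ gives $P_{t,d}^e\leq \eta\,2^{-\beta nd}$ uniformly in $t$ and for every $d\geq d_o$, which is precisely the $(R,\beta,d_o)$-anytime reliability condition.

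There is really no hard step here: the only subtlety is making sure the translation from the source-bit error event to the existence of a codeword in $\mathcal{C}_{t,d}$ is justified, which is exactly why the full-rank requirement on $\mathbb{H}_{n,R}^t$ was built into the definition of anytime distance. Everything else is the union bound, the Bhattacharya bound, and a geometric sum.
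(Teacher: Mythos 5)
Your proof is correct and follows essentially the same path as the paper's own derivation: union bound over $\mathcal{C}_{t,d}$, the Bhattacharya pairwise bound, reorganization by Hamming weight, and a geometric sum using the two anytime-distance hypotheses. (Incidentally, your formula $\eta=(1-2^{-(\log(1/\zeta)-\theta)})^{-1}$ has the correct sign in the exponent, whereas the paper's inline expression for $\eta$ appears to have a sign typo.)
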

\end{subequations}

\def\pbar{\overline{p}}
\section{Linear Anytime Codes - Existence}
\label{sec: Existence}
%
%
%

Consider causal linear codes with the following Toeplitz structure
\begin{align*}
   \mathbb{H}_{n,R}^{TZ} = \left[\begin{array}{ccccc}
				\pc_{1} & 0 & \ldots & \ldots & \ldots \\
				\pc_{2} & \pc_{1} & 0 & \ldots & \ldots \\
				\vdots & \vdots & \ddots & \vdots & \vdots \\
				\pc_{\tau} & \pc_{\tau-1} & \ldots & \pc_{1} & 0\\
				\vdots & \vdots & \vdots & \vdots & \ddots
				\end{array}\right]
\end{align*}
The superscript $TZ$ in $\mathbb{H}_{n,R}^{TZ}$ denotes \lq Toeplitz'. $\mathbb{H}_{n,R}^{TZ}$ is obtained from $\mathbb{H}_{n,R}$ in \eqref{eq: paritycheck} by setting $\pc_{ij} = \pc_{i-j+1}$ for $i\geq j$. Due to the Toeplitz structure, we have the following invariance, $\minW{d}{t} = \minW{d}{t'}$ and $\numW{w}{d}{t} = \numW{w}{d}{t'}$ for all $d\leq\min(t,t')$. The code $\mathbb{H}_{n,R}^{TZ}$ will be referred to as a time-invariant code. The notion of time invariance is analogous to the convolutional structure used to show the existence of infinite tree codes in \cite{Schulman}. This time invariance allows one to prove that such codes which are anytime reliable are abundant. 

\begin{define}[The ensemble $\mathbb{TZ}_p$]
 The ensemble $\mathbb{TZ}_p$ of time-invariant codes, $\mathbb{H}_{n,R}^{TZ}$, is obtained as follows, $\pc_1$ is any fixed full rank binary matrix and for $\tau \geq 2$, the entries of $H_\tau$ are chosen i.i.d according to Bernoulli($p$), i.e., each entry is 1 with probability $p$ and 0 otherwise.
\qedcustom
\end{define}

For the ensemble $\mathbb{TZ}_p$, we have the following result
\begin{thm}[Abundance of time-invariant codes]
 \label{thm: Toeplitz}
Let $\pbar = \min\{p,1-p\}$. Then, for each $R > 0$ and
\begin{align*}
&\alpha < H^{-1}(1-R\log\bra{1/(1-\pbar)}),\,\,\,\theta > -\log\left[(1-\pbar)^{-(1-R)}-1\right],\,\,\text{we have}\\
& P\bra{\mathbb{H}_{n,R}^{TZ}\text{ has }(\alpha,\theta,d_o)-\text{anytime distance}} \geq 1-2^{-\Omega(nd_o)}
\end{align*}
\end{thm}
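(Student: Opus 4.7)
The plan is a first-moment argument tailored to exploit the block-Toeplitz (time-invariant) structure of $\mathbb{H}_{n,R}^{TZ}$. Two structural observations allow us to sidestep the apparent union over $t$: first, by the Toeplitz form, the joint distribution of $\{\numW{w}{d}{t}\}_w$ is determined by the single $\overline{n}d\times nd$ sub-matrix formed from $H_1,\ldots,H_d$ and is therefore the same for every $t\geq d$, so it suffices to control the weight enumerator for one fixed $t$; second, $\mathbb{H}_{n,R}^t$ is block lower triangular with the fixed full-rank $H_1$ on every diagonal block, so $\mathbb{H}_{n,R}^t$ is itself full rank for every $t$, and the first clause in the definition of $(\alpha,\theta,d_o)$-anytime distance is automatic. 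What remains is a probabilistic statement for a single $d$, to be promoted to all $d\geq d_o$ by a geometric union bound.

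The technical core is a bound on $P(\tilde{H}c=0)$ for a fixed nonzero $c=(c_1,\ldots,c_d)$ with $c_1\neq 0$ and $\|c\|=w$, where $\tilde{H}$ denotes the $\overline{n}d\times nd$ Toeplitz sub-matrix with blocks $H_1,\ldots,H_d$. Processing block-rows in order, row $1$ gives the deterministic constraint $H_1c_1=0$; for each $\tau\geq 2$ the constraint rearranges to $H_\tau c_1=v_\tau$ with $v_\tau\in\{0,1\}^{\overline{n}}$ a vector determined by $H_1,\ldots,H_{\tau-1}$ and the fixed $c$. Since $H_\tau$ has i.i.d.\ Bernoulli$(p)$ entries, is independent of $v_\tau$, and $c_1\neq 0$, each scalar row of $H_\tau c_1=v_\tau$ is a mod-$2$ sum of $\|c_1\|$ independent Bernoulli$(p)$ bits equaling a prescribed value, an event of probability at most $\frac{1+(1-2\pbar)^{\|c_1\|}}{2}\leq 1-\pbar$. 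Independence of the $\overline{n}$ rows of $H_\tau$ and of the $d-1$ matrices $H_2,\ldots,H_d$ then yields $P(\tilde{H}c=0)\leq(1-\pbar)^{\overline{n}(d-1)}$, and combining with the crude count $\#\{c:\|c\|=w,c_1\neq 0\}\leq\binom{nd}{w}$ gives, by linearity, $\E[\numW{w}{d}{t}]\leq\binom{nd}{w}(1-\pbar)^{\overline{n}(d-1)}$.

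The two clauses of anytime distance then fall out of Markov's inequality, followed by a union bound over $w$ and $d$. For the minimum-distance clause, $P(\minW{d}{t}<\alpha nd)\leq\sum_{w<\alpha nd}\E[\numW{w}{d}{t}]$, combined with $\binom{nd}{w}\leq 2^{H(\alpha)nd}$ (for $\alpha\leq 1/2$), produces an exponential bound which is strictly negative in the stated range of $\alpha$. For the weight-enumerator clause, Markov at each $w$ gives $P(\numW{w}{d}{t}>2^{\theta w})\leq\binom{nd}{w}(1-\pbar)^{\overline{n}(d-1)}2^{-\theta w}$; the maximum over $\omega\in[0,1]$ of $H(\omega)-\theta\omega$ is $\log(1+2^{-\theta})$, attained at $\omega^*=(1+2^\theta)^{-1}$, so requiring the resulting per-$w$ exponent to be strictly negative reduces exactly to $(1-\pbar)^{-(1-R)}>1+2^{-\theta}$, i.e., the theorem's $\theta>-\log[(1-\pbar)^{-(1-R)}-1]$. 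Summing the per-$d$ exponential bounds over $d\geq d_o$ yields the claimed $2^{-\Omega(nd_o)}$. The main place to exercise care will be the uniform-in-$d$ handling of the $\operatorname{poly}(nd)$ slack that accumulates in these Markov and union steps, which is exactly what the strict inequalities on $\alpha$ and $\theta$ in the theorem are designed to absorb into a strictly negative exponent.
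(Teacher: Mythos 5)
Your proposal is correct and follows the same overall architecture as the paper's proof (bound $\E[\numW{w}{d}{t}]$, Markov, union over $w$, geometric union over $d$), but the key probability bound is obtained by a genuinely different and more elementary argument. The paper vectorizes the block-Toeplitz constraint $\mathbb{H}_{n,R}^t c=0$ into a linear system in $h_i=\mathrm{vec}(H_i^T)$, shows the solution set is an affine flat of codimension $\overline{n}(d-1)$, enlarges to the smallest containing subspace (of codimension $\overline{n}(d-1)-1$), and applies the Kakhaki-type subspace lemma (Lemma \ref{lem: subspace}) to bound the Bernoulli$(p)$-mass of that subspace by $(1-\pbar)^{\overline{n}(d-1)-1}$. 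You instead peel block rows $\tau=1,\dots,d$ sequentially: given $H_1,\dots,H_{\tau-1}$, the $\tau$-th block row reads $H_\tau c_1=v_\tau$ with $v_\tau$ measurable w.r.t.\ the past, and since $c_1\neq 0$ each of the $\overline{n}$ scalar rows matches a prescribed bit with probability at most $\tfrac{1}{2}\bigl(1+(1-2\pbar)^{\|c_1\|}\bigr)\leq 1-\pbar$ by the piling-up lemma; chaining gives $P(\tilde H c=0)\leq(1-\pbar)^{\overline n(d-1)}$. Your route avoids the auxiliary subspace lemma entirely and is in fact slightly tighter (no $-1$ in the exponent, i.e., it saves a constant factor $(1-\pbar)^{-1}$ in the bound), though this does not affect the asymptotic thresholds on $\alpha,\theta$, which you derive in the same way as the paper. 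One small remark: the paper's statement of the threshold on $\alpha$ reads $\alpha<H^{-1}\bigl(1-R\log(1/(1-\pbar))\bigr)$, whereas both the paper's own proof and your calculation give the condition $\alpha<H^{-1}\bigl((1-R)\log(1/(1-\pbar))\bigr)$; these agree only at $p=1/2$, and your version is the one actually supported by the argument.
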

\begin{proof}
 See Appendix \ref{sec: proofToeplitz}
\end{proof}

We can now use this result to demonstrate an achievable region of rate-exponent pairs for a given channel, i.e., the set of rates $R$ and exponents $\beta$ such that one can guarantee $(R,\beta)$ anytime reliability using linear codes. Note that the thresholds in Theorem \ref{thm: Toeplitz} are optimal when $p=1/2$. So, for the rest of the analysis we fix $p=1/2$. To determine the values of $R$ that will satisfy \eqref{eq: sufficient_condition}, note that we need
\begin{align*}
 \log(1/(2^{1-R}-1)) < \log(1/\bh) \implies R < 1-\log(1+\bh)
\end{align*}
With this observation, we have the following Corollary.
\begin{cor}
\label{cor: thresholdsBEC}
For any rate $R$ and exponent $\beta$ such that
\begin{subequations}
\label{eq: thresholdsBEC}
\begin{align*}
 R &< 1- \log(1+\bh),\quad\text{and}\\
\beta  &< H^{-1}(1-R)\bra{\log\bra{\frac{1}{\bh}} + \log\bra{2^{1-R}-1}}
\end{align*}
\end{subequations}
if $\mathbb{H}_{n,R}^{TZ}$ is chosen from $\mathbb{TZ}_{\frac{1}{2}}$, then
\begin{align*}
 P\bra{\mathbb{H}_{n,R}^{TZ}\text{ is }(R,\beta,d_o)-\text{anytime reliable}} \geq 1-2^{-\Omega(nd_o)}
\end{align*}\qedcustom
\end{cor}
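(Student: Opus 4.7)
The plan is to specialize Theorem \ref{thm: Toeplitz} to $p=\tfrac{1}{2}$ and then invoke the sufficient-condition lemma from Section \ref{sec: Sufficient Condition}. Setting $p=\tfrac{1}{2}$ gives $\pbar=\tfrac{1}{2}$, so $\log(1/(1-\pbar)) = 1$ and $(1-\pbar)^{-(1-R)} - 1 = 2^{1-R}-1$. Theorem \ref{thm: Toeplitz} therefore guarantees that a code drawn from $\mathbb{TZ}_{1/2}$ has $(\alpha,\theta,d_o)$-anytime distance with probability at least $1-2^{-\Omega(nd_o)}$ whenever $\alpha < H^{-1}(1-R)$ and $\theta > -\log\bra{2^{1-R}-1}$.

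Next, I would hand this anytime distance to the sufficient-condition lemma, which converts it into $(R,\beta,d_o)$-anytime reliability with exponent $\beta = \alpha\bra{\log(1/\bh) - \theta}$ over any MBIOS channel with Bhattacharya parameter $\bh$. To reach the exponent claimed in the corollary, I would push $\alpha$ up toward $H^{-1}(1-R)$ and $\theta$ down toward $-\log(2^{1-R}-1)$ simultaneously; in the limit, this yields $\beta \to H^{-1}(1-R)\bra{\log(1/\bh) + \log(2^{1-R}-1)}$, so any $\beta$ strictly below this value is achievable by choosing $\alpha,\theta$ sufficiently close to the boundary. The probability bound $1-2^{-\Omega(nd_o)}$ is inherited verbatim from Theorem \ref{thm: Toeplitz}.

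Finally, the rate constraint $R < 1 - \log(1+\bh)$ emerges as a feasibility condition. The admissible range for $\theta$ must satisfy both $\theta > -\log(2^{1-R}-1)$ (from Theorem \ref{thm: Toeplitz}) and $\theta < \log(1/\bh)$ (from \eqref{eq: sufficient_condition}, which is what makes the resulting exponent positive). A one-line computation shows these two constraints are simultaneously satisfiable exactly when $\bh < 2^{1-R}-1$, equivalently $R < 1 - \log(1+\bh)$. There is no substantive obstacle; the corollary is essentially a bookkeeping specialization of Theorem \ref{thm: Toeplitz} to the optimal Bernoulli parameter $p=\tfrac{1}{2}$, combined with the Bhattacharya union bound. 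The only item warranting a line of care is confirming that the distance-threshold condition on $\theta$ is compatible with the reliability-threshold condition on $\theta$, which is precisely what the rate inequality encodes.
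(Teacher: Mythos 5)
Your proposal is correct and follows exactly the paper's route: specialize Theorem \ref{thm: Toeplitz} at $p=\tfrac{1}{2}$ (where $\pbar=\tfrac{1}{2}$, giving the thresholds $\alpha < H^{-1}(1-R)$ and $\theta > -\log(2^{1-R}-1)$), feed the resulting $(\alpha,\theta,d_o)$-anytime distance through the sufficient-condition lemma to get $\beta = \alpha(\log(1/\bh)-\theta)$, take $\alpha,\theta$ to their boundary values, and observe that the rate constraint $R<1-\log(1+\bh)$ is precisely the feasibility condition reconciling $\theta>-\log(2^{1-R}-1)$ with $\theta<\log(1/\bh)$. Nothing to add.
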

Note that for BEC($\epsilon$), $\bh = \epsilon$ and for BSC($\epsilon$), $\bh = 2\sqrt{\epsilon(1-\epsilon)}$. The constant in the exponent $\Omega(nd_o)$ in Corollary \ref{cor: thresholdsBEC} can be computed explicitly and it decreases to zero if either the rate or the exponent approach their respective thresholds. Further note that almost every code in the ensemble is $(R,\beta)$-anytime reliable after a large enough initial delay $d_o$.

The thresholds in Corollary \ref{cor: thresholdsBEC} have been obtained by using a simple union bound for bounding the error probability in \eqref{eq: weight1}. As one would expect, these thresholds can be improved by doing a more careful analysis. It turns out that the ensemble of random causal linear codes bears close resemblance to random linear block codes. This allows one to borrow results from the random coding literature to tighten the thresholds. 	
\section{Improving the Thresholds}
\label{sec: ImprovedThresholds}

We will examine the Toeplitz ensemble more closely and show that its delay dependent distance distribution is bounded above by that of the random binary linear code ensemble, which we will define shortly. This will enable us to significantly improve the rate, exponent thresholds of Section \ref{sec: Existence} that were obtained using a simple union bound.

\subsection{A Brief Recap of Random Coding}
For an arbitrary discrete memoryless channel, recall the following familiar definition of the random coding exponent, $E_r(R)$, from \cite{Gallager}\footnote{We use base-2 instead of the natural logarithm}
\begin{subequations}
\begin{align}
\label{eq: randomCodingExponent}E_r(R) &= \max_{0\leq\rho\leq 1}\max_{\mathbf{Q}}\left[E_o\bra{\rho,\mathbf{Q}}-\rho R\right],\,\,\text{where}\\
\label{eq: randomCodingExponent2} E_o\bra{\rho,\mathbf{Q}} &= -\log_2\sum_{z\in\mathcal{Z}}\left[\sum_{x\in\mathcal{X}}Q(x)p(z|X=x)^{\frac{1}{1+\rho}}\right]^{1+\rho}
\end{align}
\end{subequations}

In \eqref{eq: randomCodingExponent2}, $Q(.)$ denotes a distribution on the channel input alphabet. The ensemble of random binary linear codes with block length $N$ and rate $R = \frac{K}{N}$ is obtained by choosing an $(N-K)\times N$ binary parity check matrix $H$, i.e., $H\in GF_2^{(N-K)\times N}$, each of whose entries is chosen i.i.d Bernoulli$\bra{\frac{1}{2}}$. For such an ensemble, any non-zero binary word $c\in GF_2^N$ is a codeword with probability $2^{-N(1-R)}$. For a given block code, let $w_{\min}$ denote the minimum distance and $N_w$ the number of codewords with Hamming weight $w$. A quick calculation shows that $\E N_w = \binom{N}{w}2^{-N(1-R)}$ and that $w_{\min}$ grows like $H^{-1}(1-R)N$ with a high probability. A \textit{typical} code in this ensemble is defined to be one that has $w_{\min} \approx H^{-1}(1-R)N$ and $N_w \approx \binom{N}{w}2^{-N(1-R)}$. A simple Markov inequality shows that the probability that a code from this ensemble is \textit{atypical} is at most $2^{-\Omega(N)}$. For the typical code over BSC($\epsilon$), the block error probability decays as $2^{-NE_{BSC}(R)}$ where the exponent $E_{BSC}$ has been characterized in \cite{Barg}. As has been noted in \cite{Barg}, these calculations can be easily extended to a wider class of channels. In particular, the class of MBIOS channels admits a particularly clean characterization. We present the following generalization of the result in \cite{Barg} without proof.

\begin{lem}
\label{lem: BargLemma}
Consider a linear code with block length $N$, rate $R$ and distance distribution $\{N_w\}_{w=1}^N$ such that
\begin{enumerate}
 \item $N_w = 0$ if $w \leq H^{-1}(1-R-\delta)$
 \item $N_w \leq 2^{-N(1-R-\delta+o(1))}\binom{m}{w}$
\end{enumerate}
for some $\delta > 0$. Let the channel be a MBIOS channel with Bhattacharya parameter $\bh$. Then the block error probability, $P_e$, under ML decoding is bounded as
\begin{align}
 P_e \leq 2^{-N\bra{E_\bh(R) - \delta'}}
\end{align}
where 
\begin{align}
 E_\bh(R) = \left\{\begin{array}{cc}
                    H^{-1}(1-R)\log\frac{1}{\bh} & ,\,\,\,0\leq R\leq 1 - H\bra{\frac{\bh}{1+\bh}}\\
		    E_r(R)            & ,\,\,\, 1 - H\bra{\frac{\bh}{1+\bh}}\leq R\leq C
                   \end{array}
		\right.
\end{align}
and $\delta'\rightarrow 0$ as $\delta\rightarrow 0$.
\end{lem}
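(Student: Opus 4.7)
The plan is to emulate the standard analysis of a typical random binary linear code over an MBIOS channel, but using only the weight-enumerator hypotheses rather than ensemble averages. By linearity of the code and symmetry of the channel we may assume the all-zero codeword was transmitted. The starting point is the union bound
\begin{align*}
P_e \;\leq\; \sum_{w=1}^{N} N_w\, P_{\text{pair}}(w),
\end{align*}
where $P_{\text{pair}}(w)$ denotes the pairwise ML error probability for two codewords at Hamming distance $w$. For MBIOS channels one has the Bhattacharya bound $P_{\text{pair}}(w) \leq \bh^w$, and this will be sharp enough for the low-rate regime.

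In the low-rate regime $R \leq 1 - H\bra{\bh/(1+\bh)}$, I would substitute the hypotheses directly. Since $N_w = 0$ for $w \leq H^{-1}(1-R-\delta) N$ and $N_w \leq 2^{-N(1-R-\delta+o(1))}\binom{N}{w}$, Stirling's approximation $\binom{N}{w} \approx 2^{NH(w/N)}$ converts the summand into an exponential whose rate in $N$ is $-\bigl[(1-R-\delta) - H(\omega) + \omega\log(1/\bh)\bigr]$, with $\omega := w/N$. The function $g(\omega) := H(\omega) - \omega\log(1/\bh)$ is concave with unique interior maximum at $\omega^{\star} = \bh/(1+\bh)$. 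Because $R \leq 1 - H(\omega^{\star})$ forces $H^{-1}(1-R-\delta) > \omega^{\star}$, the sum is dominated by its left boundary $\omega = H^{-1}(1-R-\delta)$, at which algebraic simplification collapses the exponent to $H^{-1}(1-R-\delta)\log(1/\bh)$. Continuity in $\delta$ then yields the claim in this range.

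In the high-rate regime $R > 1 - H\bra{\bh/(1+\bh)}$ the Bhattacharya/union bound alone only produces the straight-line extension $E_o(1,\mathbf{Q}^{\star}) - R$, which falls strictly below $E_r(R)$ once $R$ exceeds the critical rate. To recover $E_r(R)$ I would invoke Gallager's $\rho$-trick: for any $\rho\in[0,1]$,
\begin{align*}
P_e \;\leq\; \sum_{y} p(y\,|\,0)\left[\,\sum_{c\neq 0}\bra{\frac{p(y\,|\,c)}{p(y\,|\,0)}}^{1/(1+\rho)}\right]^{1+\rho}.
\end{align*}
For an MBIOS channel with uniform input the inner bracket factorizes over coordinates and, after grouping codewords by weight, becomes a quantity of the form $\bigl[\sum_w N_w \gamma(\bh,\rho)^{w}\bigr]^{1+\rho}$ for a single-symbol Gallager kernel $\gamma(\bh,\rho)$. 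Substituting condition~(2) on $N_w$ and again invoking Stirling reduces the analysis to maximizing a $\rho$-dependent perturbation of $g(\omega)$ over the admissible $\omega$-range; optimizing over $\rho \in [0,1]$ then reproduces $\max_{\rho}[E_o(\rho,\mathbf{Q}^{\star}) - \rho R] = E_r(R)$.

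The main obstacle I anticipate is executing the $\rho$-trick step cleanly: because we only have an \emph{upper} bound on $N_w$ together with a minimum-distance constraint, I must verify that the unconstrained saddle point of the $\rho$-dependent exponent falls inside the admissible range $\omega \geq H^{-1}(1-R-\delta)$. This is exactly what $R > 1 - H(\bh/(1+\bh))$ enforces at $\rho = 1$, and it persists on an interval of $\rho$ that contains the optimizer for every $R \leq C$. A routine continuity argument in $\delta$ finally converts the $\delta$-perturbed exponents into the stated $E_\bh(R) - \delta'$ with $\delta' \to 0$ as $\delta \to 0$.
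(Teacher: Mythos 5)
The low-rate half of your argument is correct and is exactly what the paper's one-line citation to \cite{Barg} is pointing at: assume the all-zero codeword, apply the union bound with the Bhattacharya pairwise term $\bh^w$, substitute the two spectrum hypotheses, use $\binom{N}{w}\approx 2^{NH(w/N)}$, and observe that $g(\omega)=H(\omega)-\omega\log(1/\bh)$ is concave with peak at $\omega^\star=\bh/(1+\bh)$, so that when $R\le 1-H(\omega^\star)$ the minimum-distance hypothesis places the admissible $\omega$'s entirely to the right of the peak and the dominant term sits at $\omega=H^{-1}(1-R-\delta)$, giving exponent $H^{-1}(1-R-\delta)\log(1/\bh)\to E_\bh(R)$.

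The high-rate half has a genuine gap. First a small slip: the fixed-code Gallager bound is
\begin{align*}
P_e \;\le\; \sum_y p(y\,|\,0)\Bigl[\,\sum_{c\neq 0}\Bigl(\tfrac{p(y\,|\,c)}{p(y\,|\,0)}\Bigr)^{1/(1+\rho)}\Bigr]^{\rho},
\end{align*}
with outer exponent $\rho$, not $1+\rho$ (the $1+\rho$ version arises only after ensemble-averaging over both $x_m$ and $x_{m'}$, which you do not have here). More seriously, the step you describe as \lq\lq the inner bracket factorizes over coordinates and, after grouping codewords by weight, becomes $\bigl[\sum_w N_w\gamma(\bh,\rho)^w\bigr]^{1+\rho}$\rq\rq\ does not hold for a fixed code. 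The inner sum $\sum_{c\neq 0}\prod_{i:c_i=1}\bigl(p(y_i|1)/p(y_i|0)\bigr)^{1/(1+\rho)}$ depends on \emph{which} positions of $c$ are nonzero, not merely on $\|c\|$, and because the $[\cdot]^\rho$ is inside the sum over $y$, one cannot integrate out $y$ coordinate-by-coordinate as one does at $\rho=1$ (union bound). So the sum does not reduce to a weight-enumerator expression and the subsequent maximization over $\rho$ is not justified. What Barg's treatment (and its MBIOS generalization) actually uses for the high-rate regime is a Shulman--Feder-type comparison: for a linear code with spectrum $\{N_w\}$ one shows $P_e(\mathcal{C})\le 2^{-N E_r\bigl(R + \frac{\log\alpha(\mathcal{C})}{N}\bigr)}$ where $\alpha(\mathcal{C})=\max_w N_w\big/\bigl(\binom{N}{w}2^{-N(1-R)}\bigr)$; the hypothesis $N_w\le 2^{-N(1-R-\delta+o(1))}\binom{N}{w}$ gives $\alpha(\mathcal{C})\le 2^{N\delta+o(N)}$, hence $P_e\le 2^{-N(E_r(R)-O(\delta))}$. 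The obstacle you flagged (the saddle point lying in the admissible $\omega$-range) is not the issue; the issue is that the claimed factorization, on which the rest of that argument depends, is false for a fixed code.
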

\begin{IEEEproof}
 The proof is a straightforward generalization of the result in \cite{Barg}.
\end{IEEEproof}

\subsection{The Toeplitz Ensemble}
In the causal case, fix an arbitrary decoding instant $t$ and consider the event that the earliest error happens at a delay $d$. As seen before, the associated error probability depends on the relevant codebook $\mathcal{C}_{t,d}$ and its distance distribution $\{\numW{w}{d}{t}\}_{w=1}^{nd}$. Recall from Table \ref{tab: notation2} that
\begin{align*} 
\mathcal{C}_{t,d} \triangleq \left\{c\in\mathcal{C}_t: c_{\tau<t-d+1}=0,\,\,c_{t-d+1}\neq 0\right\}
\end{align*}
Due to the Toeplitz structure, we have $\mathcal{C}_{t,d} = \mathcal{C}_{d,d}$. So, we drop the subscript $t$ in $\numW{w}{d}{t}$ and write it as $\numWi{w}{d}$. Note that $\mathcal{C}_{d,d}$ is determined by the matrix $\mathbb{H}_{n,R}^d$. Let $c$ be a given $nd$-dimensional binary word, i.e., $c\in GF_2^{nd}$, and write $c = \left[c_1^T,c_2^T,\ldots,c_d^T\right]^T$, where $c_\tau\in GF_2^n$ notionally corresponds to the $n$ encoder output bits during the $\tau^{th}$ time slot. Suppose $c_1 \neq 0$, then it is easy to see that
\begin{align*}
 P\bra{\mathbb{H}_{n,R}^dc = 0} = 2^{-\overline{n}d}
\end{align*}
Recall that $\overline{n} = n(1-R)$.

Now observe that $\E N_{w,d} \leq \binom{nd}{w}2^{-\overline{n}d}$. This is same as the average weight distribution of the random binary linear code with a block length $nd$ and rate $R$. So, applying Lemma \ref{lem: BargLemma}, we get the following result.

\begin{thm}
\label{thm: improvedOverRCE}
 For each rate $R < C$ and exponent $\beta < E_\bh(R)$, if $\mathbb{H}_{n,R}^{TZ}$ is chosen from $\mathbb{TZ}_{\frac{1}{2}}$, then
\begin{align*}
 P\bra{\mathbb{H}_{n,R}^{TZ}\text{ is }(R,\beta,d_o)-\text{anytime reliable}} \geq 1-2^{-\Omega(nd_o)}
\end{align*}
where $C$ is the Shannon capacity of the channel and 
\begin{align}
 E_\bh(R) = \left\{\begin{array}{cc}
                    H^{-1}(1-R)\log\frac{1}{\bh} & ,\,\,\,0\leq R\leq 1 - H\bra{\frac{\bh}{1+\bh}}\\
		    E_r(R)            & ,\,\,\, 1 - H\bra{\frac{\bh}{1+\bh}}\leq R\leq C
                   \end{array}
		\right.
\end{align}\qedcustom
\end{thm}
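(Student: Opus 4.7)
The plan is to combine the Toeplitz invariance used in Theorem \ref{thm: Toeplitz} with the random-coding style weight enumerator analysis of Lemma \ref{lem: BargLemma}. First I would exploit the fact that, because $\mathbb{H}_{n,R}^{TZ}$ is block Toeplitz, the coset $\mathcal{C}_{t,d}$ depends only on the first $d$ block rows, so $\mathcal{C}_{t,d}=\mathcal{C}_{d,d}$ and the weight enumerators $\numWi{w}{d}$ are independent of $t$. Consequently the anytime-reliability condition reduces to proving a per-$d$ ML block-error bound for the rate-$R$, length-$nd$ code with parity-check matrix $\mathbb{H}_{n,R}^d$.

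Next I would compute the expected weight distribution of $\mathcal{C}_{d,d}$. For any nonzero $c\in\GF_2^{nd}$ whose first $n$-block is nonzero, the rows of $\mathbb{H}_{n,R}^d$ (other than those fixed by the deterministic top-left block $\pc_1$) are i.i.d.\ Bernoulli$(1/2)$, hence $P(\mathbb{H}_{n,R}^d c=0)\leq 2^{-\overline{n}d}$, so $\E\, \numWi{w}{d}\leq \binom{nd}{w}2^{-\overline{n}d}$. This matches the expected weight distribution of a random binary linear block code of length $N=nd$ and rate $R$. A Markov/Chernoff argument then shows that with probability at least $1-2^{-\Omega(nd)}$ the realized distribution $\{\numWi{w}{d}\}$ is \emph{typical} in the sense required by Lemma \ref{lem: BargLemma}: there exists $\delta>0$ (taken as small as needed) such that $\numWi{w}{d}=0$ for $w\leq H^{-1}(1-R-\delta)nd$ and $\numWi{w}{d}\leq 2^{-nd(1-R-\delta+o(1))}\binom{nd}{w}$ otherwise. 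The same estimate also handles the full-rank requirement on $\mathbb{H}_{n,R}^d$ (nonsingularity fails with probability $2^{-\Omega(nd)}$).

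Conditional on this typicality, Lemma \ref{lem: BargLemma} applied with $N=nd$ gives
\begin{align*}
P_{t,d}^e\ \leq\ 2^{-nd(E_\bh(R)-\delta')}
\end{align*}
with $\delta'\to 0$ as $\delta\to 0$. Since the hypothesis $\beta<E_\bh(R)$ is strict, I can pick $\delta$ small enough that $E_\bh(R)-\delta'>\beta$, which yields the desired bound $P_{t,d}^e\leq \eta 2^{-n\beta d}$ for every $d\geq d_o$. Finally, a union bound over $d\geq d_o$ of the \lq\lq atypical'' events controls the probability that the code fails to be $(R,\beta,d_o)$-anytime reliable by
\begin{align*}
\sum_{d\geq d_o} 2^{-\Omega(nd)}\ =\ 2^{-\Omega(nd_o)},
\end{align*}
since the geometric series is dominated by its first term.

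The main obstacle I anticipate is the step that upgrades the \emph{expected} weight enumerator bound to a uniform-in-$d$ high-probability statement: I need typicality to hold \emph{simultaneously} for every $d\geq d_o$, not just for a single $d$. This is exactly what forces the use of the union bound above, and the reason it closes is that the per-$d$ concentration rate $2^{-\Omega(nd)}$ is exponential in $d$, so summing over $d\geq d_o$ does not blow it up. A secondary but routine point is ensuring that the deterministic top-left block $\pc_1$ (which is fixed full-rank in the ensemble $\mathbb{TZ}_p$) does not corrupt the i.i.d.\ averaging; this only affects finitely many rows and therefore contributes at most an $O(1)$ factor that is absorbed into the $\delta'$ slack.
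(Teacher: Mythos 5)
Your proposal is correct and follows essentially the same route as the paper: exploit Toeplitz invariance to reduce to the single weight distribution $\numWi{w}{d}$ of $\mathcal{C}_{d,d}$, compute $\E\numWi{w}{d} \leq \binom{nd}{w}2^{-\overline{n}d}$ (up to the $O(1)$ factor from the fixed $\pc_1$ block), match it to the random linear block code ensemble, invoke Lemma~\ref{lem: BargLemma} with $N=nd$, and close with a union bound over $d\geq d_o$. You actually spell out the typicality concentration and the uniform-in-$d$ union bound more carefully than the paper's two-sentence sketch does, but the underlying argument is identical.
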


The problem of stabilizing unstable scalar linear systems over noisy channels in the absence of feedback has been considered in \cite{Sahai}. \cite{Sahai} showed the existence of $(R,\beta)-$anytime reliable codes for $R < C$ and $\beta < E_r(R)$. The code is not linear in general and the existence was not with high probability. Theorem \ref{thm: improvedOverRCE} proves linear anytime reliable codes for exponent, $\beta$, up to $E_\bh(R)$. When $R < 1 - H\bra{\frac{\bh}{1+\bh}}$, $E_\bh(R) > E_r(R)$. So, Theorem \ref{thm: improvedOverRCE} marks a significant improvement in the known thresholds for stabilizing unstable processes over noisy channels, as is demonstrated in Figures \ref{fig: compareExps} and \ref{fig: BECvBSC}. 

\begin{figure}
\centering
\subfigure[Binary Erasure Channel, $\epsilon=0.15$]{\includegraphics[scale=0.19]{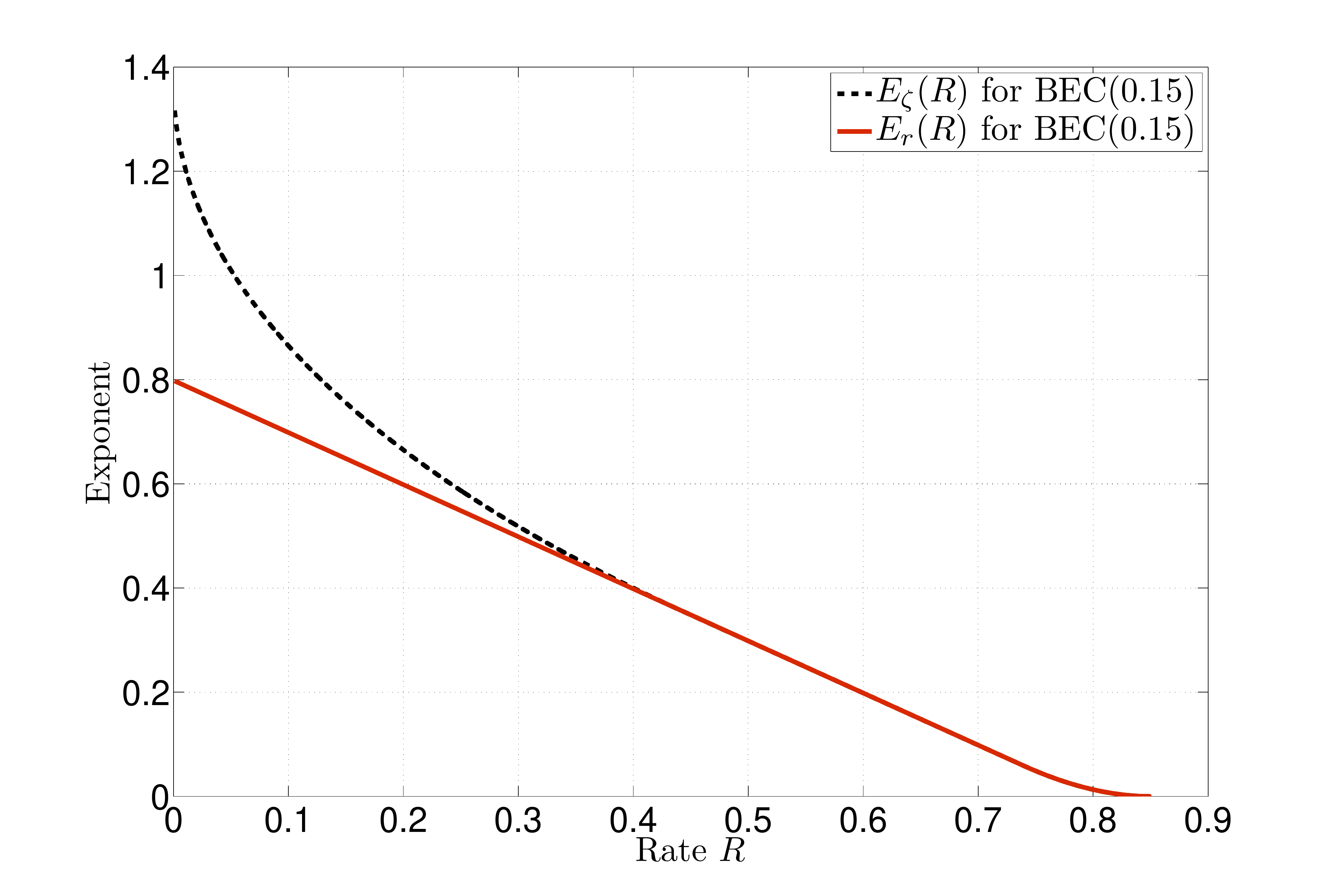}}\,\,
 \subfigure[Binary Symmetric Channel, $\epsilon=0.05$]{\includegraphics[scale=0.19]{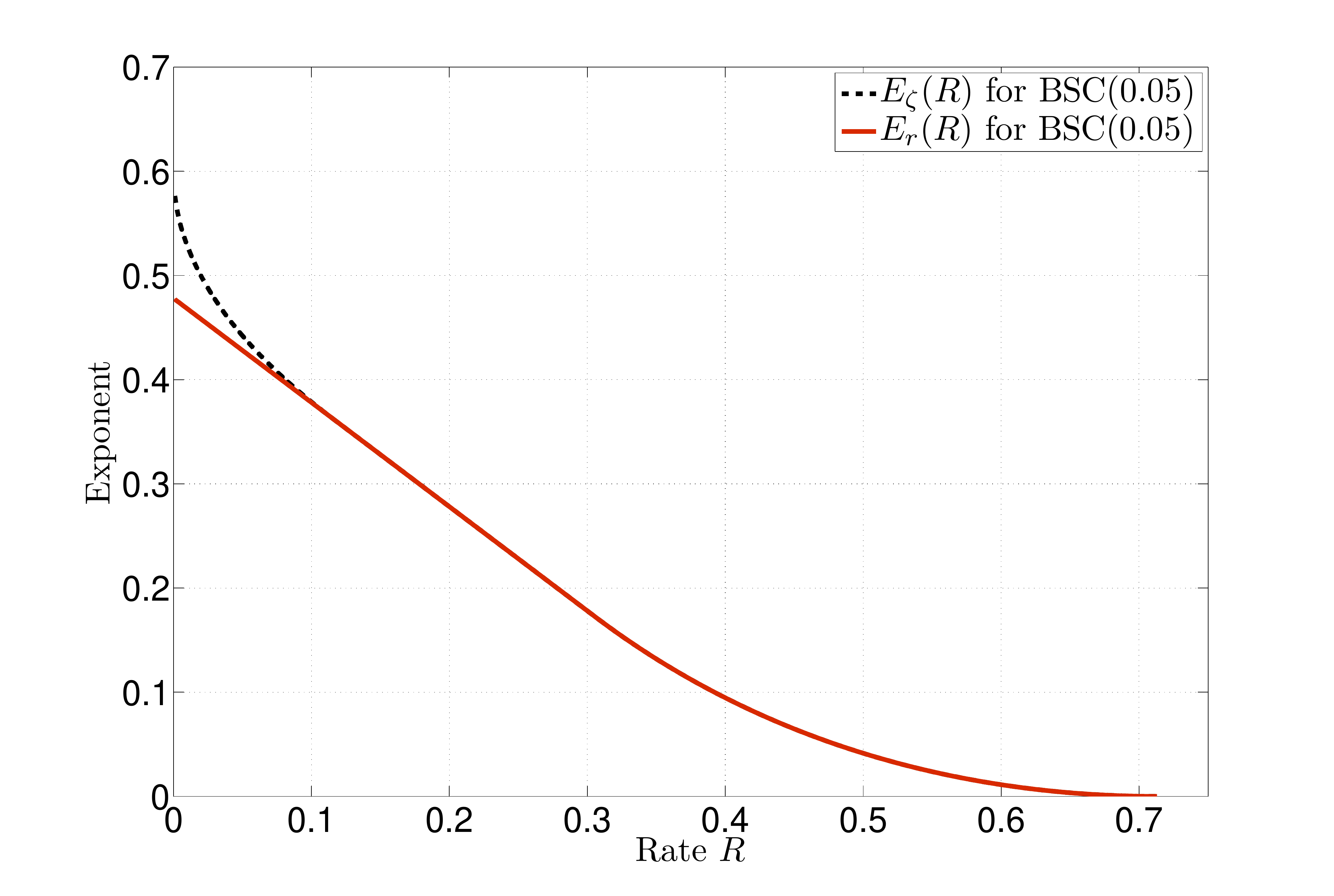}}
\caption{Comparing the thresholds obtained from Theorem \ref{thm: improvedOverRCE} and Theorem 5.2 in \cite{Sahai}}
\label{fig: compareExps}
\end{figure}
\section{Decoding over the Binary Erasure Channel}
\label{sec: DecodingBEC}

Owing to the simplicity of the erasure channel, it is possible to come up with an efficient way to perform maximum likelihood decoding at each time step. Consider an arbitrary decoding instant $t$, let $c=[c_1^T,\ldots,c_t^T]^T$ be the transmitted codeword and let $z=[z_1^T,\ldots,z_t^T]^T$ denote the corresponding channel outputs. Recall that $\mathbb{H}_{n,R}^t$ denotes the $\overline{n}t\times nt$ leading principal minor of $\mathbb{H}_{n,R}$. Let $z_e$ denote the erasures in $z$ and let $H_e$ denote the columns of $\mathbb{H}_{n,R}^t$  that correspond to the positions of the erasures. Also, let $\tilde{z}_e$ denote the unerased entries of $z$ and let $\tilde{H}_e$ denote the columns of $\mathbb{H}_{n,R}^t$ excluding $H_e$. So, we have the following parity check condition on $z_e$, $H_ez_e = \tilde{H}_e\tilde{z}_e$. Since $\tilde{z}_e$ is known at the decoder, $s\triangleq \tilde{H}_e\tilde{z}_e$ is known. Maximum likelihood decoding boils down to solving the linear equation $H_ez_e = s$. Due to the lower triangular nature of $H_e$, unlike in the case of traditional block coding, this equation will typically not have a unique solution, since $H_e$ will typically not have full column rank. This is alright as we are not interested in decoding the entire $z_e$ correctly, we only care about decoding the earlier entries accurately. If $z_e = [z_{e,1}^T,\,\, z_{e,2}^T]^T$, then $z_{e,1}$ corresponds to the earlier time instants while $z_{e,2}$ corresponds to the latter time instants. The desired reliability requires one to recover $z_{e,1}$ with an exponentially smaller error probability than $z_{e,2}$. Since $H_e$ is lower triangular, we can write $H_ez_e = s$ as
\begin{align}
\label{eq: bec1}
 \left[\begin{array}{cc}
  H_{e,11} & 0\\
  H_{e,21} & H_{e,22}
 \end{array}\right]\left[\begin{array}{c}z_{e,1}\\z_{e,2}\end{array}\right] = \left[\begin{array}{c}s_1\\s_2\end{array}\right]
\end{align}
Let $H_{e,22}^\bot$ denote the orthogonal complement of $H_{e,22}$, ie., $H_{e,22}^\bot H_{e,22} = 0$. Then multiplying both sides of \eqref{eq: bec1} with diag$(I,H_{e,22})$, we get
\begin{align}
 \label{eq: bec2}
 \left[\begin{array}{c}
  H_{e,11}\\
  H_{e,22}^\bot H_{e,21}
 \end{array}\right]z_{e,1} = \left[\begin{array}{c}s_1\\H_{e,22}^\bot s_2\end{array}\right]
\end{align}
If $[H_{e,11}^T\,\,\, (H_{e,22}^\bot H_{e,21})^T]^T$ has full column rank, then $z_{e,1}$ can be recovered exactly. The decoding algorithm now suggests itself, i.e., find the smallest possible $H_{e,22}$ such that $[H_{e,11}^T\,\,\, (H_{e,22}^\bot H_{e,21})^T]^T$ has full rank and it is outlined in Algorithm \ref{alg: algorithm}.
\begin{algorithm}
\caption{Decoder for the BEC}
\label{alg: algorithm}
\begin{enumerate}
\item Suppose, at time $t$, the earliest uncorrected error is at a delay $d$. Identify $z_e$ and $H_e$ as defined above.
\item Starting with $d'=1,2,\ldots,d$, partition
\begin{align*}
 z_e = [z_{e,1}^T\,\,z_{e,2}^T]^T\,\,\text{and}\,\,H_e = \left[\begin{array}{cc}H_{e,11}&0\\H_{e,21}&H_{e,22}\end{array}\right]
\end{align*}
where $z_{e,2}$ correspond to the erased positions up to delay $d'$.
\item Check whether the matrix $\left[\begin{array}{c}
  H_{e,11}\\
  H_{e,22}^\bot H_{e,21}
 \end{array}\right]$
has full column rank.
\item If so, solve for $z_{e,1}$ in the system of equations
\begin{align*}
  \left[\begin{array}{c}
  H_{e,11}\\
  H_{e,22}^\bot H_{e,21}
 \end{array}\right]z_{e,1} = \left[\begin{array}{c}s_1\\H_{e,22}^\bot s_2\end{array}\right]
\end{align*}
\item Increment $t=t+1$ and continue.
\end{enumerate}
\end{algorithm} 
Note that one can equivalently describe the decoding algorithm in terms of the generator matrix and it will be very similar to Alg \ref{alg: algorithm}.

\subsection{Encoding and Decoding Complexity}
Consider the decoding instant $t$ and suppose that the earliest uncorrected erasure is at time $t-d+1$. Then steps 2) and 3) in Algorithm \ref{alg: algorithm} can be accomplished by just reducing $H_e$ into the appropriate row echelon form, which has complexity $\OO{d^3}$. The earliest entry in $z_e$ is at time $t-d+1$ implies that it was not corrected at time $t-1$, the probability of which is $P_{d-1,t-1}^e \leq \eta 2^{-n\beta (d-1)}$. Hence, if nothing more had to be done, the average decoding complexity would have been at most $K\sum_{d>0}d^3 2^{-n\beta d}$ which is bounded and is independent of $t$. In particular, the probability of the decoding complexity being $Kd^3$ would have been at most $\eta 2^{-n\beta d}$. But, inorder to actually solve for $z_{e,1}$ in step 4), one needs to compute the syndromes $s_1$ and $s_2$. It is easy to see that the complexity of this operation increases linearly in time $t$. This is to be expected since the code has infinite memory. A similar computational complexity also plagues the encoder, for, the encoding operation at time $t$ is described by $c_t = G_tb_1+\ldots+G_1b_t$ where $\{b_i\}$ denote the source bits and hence becomes progressively hard with $t$.

We propose the following scheme to circumvent this problem in practice. We allow the decoder to periodically, say at $t = \ell (2T)$ ($\ell=1,2\ldots$) for appropriately chosen $T$, provide feedback to the encoder on the position of the earliest uncorrected erasure which is, say at time $t-d$. The encoder can use this information to stop encoding the source bits received prior to $t-d$, i.e., $\{b_i\}$ for $i \leq t-d-1$ starting from time $t+T$. In other words, for $\tau > t+T$, $c_\tau = G_{\tau-t+d+2}b_{t-d-1} + \ldots + G_1b_\tau$. The decoder accordingly uses the new generator matrix starting from $t+T$. In practice, this translates to an arrangement where the decoder sends feedback at time $t$ and can be sure that the encoder receives it by time $t+T$. Such feedback, in the form of acknowledgements from the receiver to the transmitter, is common to most packet-based modern communication and networked systems for reasonable values of $T$. Note that this form of feedback finds a middle ground between one extreme of having no feedback at all and another extreme where every channel output is fed back to the transmitter, the latter being impractical in most cases. The decoder proposed in Alg. \ref{alg: algorithm} is easy to implement and its performance is simulated in Section \ref{sec: Simulations}. 

\subsection{Extension to Packet Erasures}
The encoding and decoding algorithms presented so far have been developed for the case of bit erasures. But it is not difficult to see that the techniques generalize to the case of packet erasures. For example, for a packet length $L$, what was one bit earlier will now be a block of $L$ bits. Each binary entry in the encoding/parity check matrix will now be an $L\times L$ binary matrix. The rate will remain the same. So, at each time, $k$ packets each of length $L$  will be encoded to $n$ packets each of the same length $L$. Recall that the \textit{anytime performance} of the code is determined by the delay dependent codebook $\mathcal{C}_{t,d}$ and its distance distribution $\{\numW{w}{d}{t}\}_{w=1}^{nd}$. In the case of packet erasures, one can obtain analogous results by defining the Hamming distance of a codeword slightly differently. By viewing a codeword as a collection of packets, define its Hamming distance to be the number of non zero packets. The definition of the delay dependent distance distribution $\{\numW{w}{d}{t}\}$ will change accordingly. With this modification, one can easily apply the results developed in Sections \ref{sec: Sufficient Condition}, \ref{sec: Existence} and the decoding algrithm in Section \ref{sec: DecodingBEC} above to the case of packet erasures. 				
\section{Sufficient Conditions for Stabilizability - Scalar Measurements}
\label{sec: sufficient}

Recall that we do not assume any feedback about the channel outputs or the control inputs at the observer/encoder. This is the setup we imply whenever we say that no feedback is assumed. In this context \cite{Sahai} derives a sufficient condition for stabilizing scalar linear systems over noisy channels without feedback while \cite{SahaiVec} considers stabilizing vector valued processes in the presence of feedback. So, to the best of our knowledge, there are no results on stabilizing unstable vector valued processes over a noisy channel when the observer does not have access to either the control inputs or the channel outputs.

We will develop two sufficient conditions for stabilizing vector valued processes over noisy channels without feedback. The two sufficient conditions are based on two different estimation algorithms employed by the controller and neither is stronger than the other. We will then show in Section \ref{subsec: theLimitingCase} that both sufficient conditions are asymptotically tight. For ease of presentation, we will treat the case of scalar and vector measurements separately. We will present the sufficient conditions for the case of scalar measurements here while vector measurements will be treated in Section \ref{sec: vectorMeasurements}

Consider the unstable $m_x-$dimensional linear state space model in \eqref{eq: sysmodel} with scalar measurements, i.e., $\rho(F) > 1$, and $m_y = 1$. Suppose that the characteristic polynomial of $F$ is given by 
\begin{align*}
 f(z) \triangleq z^{m_x} + a_1z^{m_x-1}+\ldots+a_{m_x}
\end{align*}
Without loss of generality we assume that $(F,H)$ are in the following canonical form. 
\begin{align*}
 F = \left[\begin{array}{ccccc}
		-a_1 & 1 & 0 & \ldots & \\
		-a_2 & 0 & 1 & 0 &\\
		\vdots & \vdots & &\ddots & \\
		-a_{m-1} & \ldots & \ldots&0 & 1\\
		-a_m & 0 & \ldots & \ldots & 0
		\end{array}\right],\quad H = [1,0,\ldots,0]
\end{align*}

Owing to the duality between estimation and control, we can focus on the problem of tracking \eqref{eq: sysmodel} over a noisy communication channel. For, if \eqref{eq: sysmodel} can be tracked with an asymptotically finite mean squared error and if $(F,G)$ is stabilizable, then it is a simple exercise to see that there exists a control law $\{u_t\}$ that will stabilize the plant in the mean squared sense, i.e., $\limsup_t\E\Vert x_t\Vert^2 < \infty$. In particular, if the control gain $K$ is chosen such that $F + GK$ is stable, then $u_t = K\hat{x}_{t|t}$ will stabilize the plant, where $\hat{x}_{t|t}$ is the estimate of $x_t$ using channel outputs up to time $t$. In control parlance, this amounts to verifying that the control input does not have a \textit{dual effect} \cite{Shalom}. Hence, in the rest of the analysis, we will focus on tracking \eqref{eq: sysmodel}. The control input $u_t$ therefore is assumed to be absent, i.e., $u_t=0$.
\subsection{Hypercuboidal Filter}
\label{sec: Cuboidal}
We bound the set of all possible states that are consistent with the estimates of the quantized measurements using a hypercuboid, i.e., a region of the form $\left\{x\in\Re^{m_x} | \mathbf{a}\leq x\leq \mathbf{b}\right\}$, where $\mathbf{a},\mathbf{b}\in\Re^{m_x}$ and the inequalities are component-wise. 

Since we assume that the initial state $x_0$ has bounded support, we can write $x_{min,0|-1}\leq x_0\leq x_{max,0|-1}$ and suppose using the channel ouputs received till time $t-1$, we have $x_{min,t|t-1}\leq x_t\leq x_{max,t|t-1}$. Since $H = [1, 0,\ldots, 0]$, the measurement update provides information of the form $x_{min,t|t}^{(1)}\leq x_t^{(1)}\leq x_{max,t|t}^{(1)}$ while there will be no additional information on other components of $x_t$. Note that an estimate of the state is given by the mid point of this region, i.e., $\hat{x}_{t|t} = 0.5(x_{min,t|t} + x_{max,t|t})$. If we define $\Delta_{t|t} = x_{max,t|t}-x_{min,t|t}$, then the estimation error is asymptotically bounded if every component of $\Delta_{t|t}$ is asymptotically bounded.
Using such a filter, we can stabilize the system in the mean squared sense over a noisy channel provided that the rate $R$ and exponent $\beta$ of the $(R,\beta)-$anytime reliable code used to encode the measurements satisfy the following sufficient condition

\begin{thm}
\label{thm: cuboidalThm}
 It is possible to stabilize \eqref{eq: sysmodel} in the mean squared sense with an $(R,\beta)-$anytime code provided 
\begin{align}
\label{eq: cuboidalThm}
 R > R_{n} = \frac{1}{n}\log_2\sum_{i=1}^{m_x} |a_i|,\,\,\,\,\beta > \beta_{n} = \frac{2}{n}\log_2\rho(\Fbar)
\end{align}
\end{thm}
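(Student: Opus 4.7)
The plan is to analyze the hypercuboidal filter of Section \ref{sec: Cuboidal} in two regimes: an error-free regime that pins down the rate threshold $R_n$, and an error-afflicted regime whose effect on the mean squared error is controlled by the anytime exponent $\beta$. First I would make one iteration of the filter explicit. At time $t$ the observer quantizes the first component of its enclosing cuboid $[x_{min,t|t-1},x_{max,t|t-1}]$ into $2^{nR}$ uniform bins and transmits the $k=nR$-bit bin index through the anytime code. Conditioned on correct decoding of all past bits, the measurement update yields widths $\Delta_{t|t} = M\Delta_{t|t-1}$ with $M = \mathrm{diag}(2^{-nR},1,\ldots,1)$, and the time update gives the componentwise bound $\Delta_{t+1|t} \leq \Fbar\,\Delta_{t|t} + W\mones$ (with $u_t\equiv 0$ since we are tracking, and bounded $w_t$). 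Hence in the error-free regime $\delta_t := \Delta_{t|t-1}$ obeys $\delta_{t+1} \leq A\delta_t + W\mones$ with $A := \Fbar M$.

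Next I would show that $\rho(A) < 1$ precisely when $R > R_n$. Since $A$ inherits the companion sparsity pattern of $\Fbar$, a cofactor expansion yields the characteristic polynomial $p(\lambda) = \lambda^{m_x} - 2^{-nR}\sum_{i=1}^{m_x}|a_i|\lambda^{m_x-i}$. By Perron-Frobenius, $\rho(A)$ equals the unique positive root of $p$; equivalently, writing $p(\lambda^*)=0$ as $1 = 2^{-nR}\sum_i|a_i|(\lambda^*)^{-i}$, the right-hand side is strictly decreasing in $\lambda^* > 0$, so $\rho(A) < 1$ iff $p(1) = 1 - 2^{-nR}\sum_i|a_i| > 0$, i.e., iff $R > R_n$. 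This guarantees that $\delta_t$ is uniformly bounded along any decoding-error-free sample path, so the conditional squared estimation error given no error is bounded by a constant.

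To incorporate decoding errors, I would decompose $\E\|x_t-\hat{x}_{t|t}\|^2$ over the delay $d$ of the earliest undecoded symbol. On the event that the earliest error is at delay $d$, the filter has been running on wrong bits for $d$ steps; the initial discrepancy between the true state and the decoder's estimate is controlled by the error-free steady-state cuboid from the first part, and each subsequent time update can inflate the discrepancy componentwise by at most $\Fbar$. Hence the squared error on this event is $\OO{\|\Fbar^d\mones\|^2} = \OO{\bra{d+1}^{2(m_x-1)}\rho(\Fbar)^{2d}}$ by the spectral-radius formula. Combining with $P^e_{t,d}\leq\eta 2^{-n\beta d}$ and summing,
\begin{align*}
\E\|x_t-\hat x_{t|t}\|^2 \leq C + C'\sum_{d\geq d_o}\bra{d+1}^{2(m_x-1)}\rho(\Fbar)^{2d}\eta 2^{-n\beta d},
\end{align*}
and this geometric-type sum is finite exactly when $\rho(\Fbar)^2\,2^{-n\beta} < 1$, i.e., when $\beta > \beta_n$. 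Certainty equivalence with $u_t = K\hat{x}_{t|t}$ for any $K$ with $F+GK$ stable (available by controllability of $(F,G)$, as noted before the theorem) then transfers the estimation MSE bound to the state, yielding $\limsup_t\E\|x_t\|^2 < \infty$.

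The main technical obstacle is the error-propagation step: once the decoder acts on wrong bits, its cuboid is no longer guaranteed to contain the true state, so one cannot simply iterate the width recursion. I would instead bound the true state and the decoder's best guess separately, using the Phase 1 steady state as a constant initial seed, and then invoke the inequality $\|\Fbar^d\| \leq C_\epsilon(\rho(\Fbar)+\epsilon)^d$ for any $\epsilon>0$. The polynomial prefactor $\bra{d+1}^{m_x-1}$ arising from non-normality of $\Fbar$ is harmless since any strict slack $\beta > \beta_n$ absorbs it, so the interplay between error-free width control and worst-case error amplification cleanly couples the two halves of the argument.
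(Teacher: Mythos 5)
Your proof is correct, but the rate-threshold half follows a genuinely different route from the paper's. The paper's observer uses a \emph{fixed-resolution modular} quantizer $Q(y_t)=\lfloor y_t/\delta\rfloor\bmod 2^{nR}$, so the measurement update pins the first width to the constant $\Delta^{(1)}_{t|t}=\delta+V$ regardless of $\Delta^{(1)}_{t}$; the authors then compute the steady state of the resulting affine recursion explicitly (Lemma \ref{lem: determiningDelta}) and read off $R_n$ by letting $\delta\to\infty$ so that $\Delta^{(1)}_{m_x}/\delta\to\sum_i|a_i|$. You instead use an \emph{adaptive} quantizer that splits the current cuboid into $2^{nR}$ bins, giving the multiplicative update $\Delta^{(1)}_{t|t}=2^{-nR}\Delta^{(1)}_{t}$, and then reduce stability to $\rho(\Fbar M)<1$ via the companion characteristic polynomial and Perron--Frobenius monotonicity. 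Both reach the same threshold, but the approaches trade off differently: the paper's fixed modular quantizer is memoryless at the encoder (no filter needs to be replicated there, only the dynamic range needs to be respected), whereas your adaptive scheme requires the observer to run its own copy of the filter on its transmitted bits --- still feasible without channel feedback, but an extra moving part worth stating, since a reader might worry the encoder implicitly assumes knowledge of the decoder's cuboid. Conversely, your spectral-radius argument is algebraically cleaner and makes the role of $\sum_i|a_i|$ transparent without computing the steady state by hand. On the exponent half you and the paper agree in spirit --- bound the delay-$d$ error by the width of the measurement-free cuboid $\OO{\Fbar^d}$ and sum against $P^e_{t,d}\le\eta 2^{-n\beta d}$ --- but you spell out the step the paper only sketches (the remark in Section \ref{sec: Cuboidal}); your observation that the decoder's cuboid, even after wrong measurement updates, remains a sub-interval of the measurement-free cuboid (which does contain $x_t$) is exactly the right way to close the error-propagation gap you flag, and the polynomial prefactor from non-normality is correctly absorbed by the strict inequality $\beta>\beta_n$.
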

\begin{proof}
 See Appendix \ref{sec: proofCuboidal}
\end{proof}

Before proceeding further, we will provide a brief sketch of the proof. Note that $\Delta_{t|t} = x_{max,t|t}-x_{min,t|t}$ is a measure of the uncertainty in the state estimate. From Lemma \ref{lem: cubTimeUpdate}, $\Delta_{t+1|t} = \Fbar\Delta_{t|t} + W\mones$. The anytime exponent is determined by the growth of $\Delta_t$ in the absence of measurements, hence the bound $\beta_{n} = 2\log_2\rho(\Fbar)$. The bound on the rate is determined by how fine the quantization needs to be for $\Delta_t$ to be bounded asymptotically. It will be shown in Section \ref{subsec: limitingCase} that $\rho\bra{\Fbar}$ is always larger than $\rho\bra{F}$. By using an alternate filtering algorithm, which we call the Ellipsoidal filter, one can improve this requirement on the exponent from $\beta_n > 2\log_2\rho(\Fbar)$ to $\beta_n > 2\log_2\rho(F)$. But this will come at the price of a larger rate.
\subsection{Ellipsoidal Filter}
\label{sec: Ellipsoidal}
One can alternately bound the set of all possible states that are consistent with the estimates of the quantized measurements using an ellipsoid 
\begin{align*}
\mE(P,c)\triangleq \left\{x\in\Re^{m_x}| \langle x-c,P^{-1}(x-c)\rangle \leq 1\right\} 
\end{align*}
This can be seen as an extension of the technique proposed in \cite{Schweppe} to filtering using quantized measurements. If $m_x=1$, $\rho(\Fbar)=\rho(F)$. So, let $m_x\geq 2$. 

Let $x_0\in\mE(P_0,0)$ and suppose using the channel outputs received till time $t-1$, we have $x_t\in\mE(P_{t|t-1},\hat{x}_{t|t-1})$.  Since $H = [1, 0,\ldots, 0]$, the measurement update provides information of the form $x_{min,t|t}^{(1)}\leq x_t^{(1)}\leq x_{max,t|t}^{(1)}$, which one may call a slab. $\mE(P_{t|t},\hat{x}_{t|t})$ would then be an ellipsoid that contains the intersection of the above slab with $\mE(P_{t|t-1},\hat{x}_{t|t-1})$, in particular one can set it to be the minimum volume ellipsoid covering this intersection. Lemma \ref{lem: minVol} gives a formula for the minimum volume ellipsoid covering the intersection of an ellipsoid and a slab. For the time update, it is easy to see that for any $\epsilon' > 0$ and $P_{t+1} = (1+\epsilon')FP_{t|t}F^T + \frac{W^2}{4\epsilon'}\mones\mones^T$, $\mE(P_{t+1},F\hat{x}_{t|t})$ contains the state $x_{t+1}$ whenever $\mE(P_{t|t},\hat{x}_{t|t})$ contains $x_t$. This leads to the following Lemma, the proof of which is contained in the discussion above. For convenience, we write $P_{t}$ for $P_{t|t-1}$.
\begin{lem}[The Ellipsoidal Filter]
Whenever $\mE(P_0,0)$ contains $x_0$, for each $\epsilon' > 0$, the following filtering equations give a sequence of ellipsoids $\left\{\mE(P_{t|t},\hat{x}_{t|t})\right\}$ that, at each time $t$, contain $x_t$. 
\begin{subequations}
 \begin{align}
 \label{eq: ellipTU} P_{t+1} &= (1+\epsilon')FP_{t|t}F^T + \frac{W^2}{4\epsilon'}\mones,\,\,\hat{x}_{t+1} = F\hat{x}_{t|t}\\
 \label{eq: ellipMU} P_{t|t} &= b_tP_t - (b_t-a_t)\frac{P_te_1e_1^TP_t}{e_1^TP_te_1},\,\,\hat{x}_{t|t} = \xi_t\frac{P_te_1}{\sqrt{e_1^TP_te_1}}
 \end{align}
\label{eq: ellip}
\end{subequations}
where $a_t,b_t$ and $\xi_t$ can be calculated in closed form using Lemma \ref{lem: minVol}, and $e_1$ is the $m_x-$dimensional unit vector $e_1 = \left[1,0,\ldots,0\right]^T$.
\end{lem}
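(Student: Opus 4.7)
I would prove the lemma by induction on $t$, alternately establishing the invariants $x_t \in \mE(P_t, \hat{x}_{t|t-1})$ (pre-measurement) and $x_t \in \mE(P_{t|t}, \hat{x}_{t|t})$ (post-measurement). The base case is immediate from the hypothesis $x_0 \in \mE(P_0, 0)$, so all the substance sits in the two update steps.

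For the measurement update, assume inductively that $x_t \in \mE(P_t, \hat{x}_{t|t-1})$. The quantized observation together with $H = e_1^T$ delivers the slab constraint $x_{\min,t|t}^{(1)} \leq e_1^T x_t \leq x_{\max,t|t}^{(1)}$, so $x_t$ lies in the intersection of the prior ellipsoid with this slab. Lemma \ref{lem: minVol} (invoked in the prose above the statement) gives a closed-form expression for the minimum-volume ellipsoid covering an ellipsoid-slab intersection; plugging in $e_1$ as the slab normal and reading off the resulting centre and shape matrix yields precisely \eqref{eq: ellipMU}, with the scalars $a_t, b_t, \xi_t$ being exactly those supplied by that lemma.

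For the time update, assume $x_t \in \mE(P_{t|t}, \hat{x}_{t|t})$. Then $Fx_t - F\hat{x}_{t|t} \in \mE(FP_{t|t}F^T, 0)$ by the standard transformation rule for ellipsoids under a linear map, and $w_t$ is contained in the bounded region determined by $\|w_t\|_\infty < W/2$. I would then invoke the outer-sum-of-ellipsoids inequality: for any positive (semi)definite $A, B$ and any $\lambda \in (0,1)$,
\begin{align*}
\mE(A,0) + \mE(B,0) \subseteq \mE\bra{A/\lambda + B/(1-\lambda),\, 0},
\end{align*}
which follows from the minimum-norm characterization $\min\{\langle v, D^{-1} v\rangle : Mv = c\} = c^T(MDM^T)^{-1}c$ applied with $M = [I,\,I]$, $D = \mathrm{blockdiag}(A/\lambda, B/(1-\lambda))$ and $v = [y; z]$. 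Setting $\lambda = 1/(1+\epsilon')$ converts the right-hand side into $(1+\epsilon')A + (1+1/\epsilon')B$; choosing $A = FP_{t|t}F^T$ and $B$ equal to the shape matrix of an ellipsoid enclosing the noise box recovers $P_{t+1}$ as stated in \eqref{eq: ellipTU}. Recentring the sum at $F\hat{x}_{t|t}$ closes the induction.

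The main obstacle I anticipate is purely bookkeeping around the noise ellipsoid. Since $\mones\mones^T$ is rank one, it cannot literally serve as the shape of a full-dimensional ellipsoid containing every $w_t$ with $\|w_t\|_\infty < W/2$, so either the notation $\mE(\,\cdot\,\mones\mones^T, 0)$ must be interpreted in a generalized sense (allowing rank-deficient summands in the outer-sum inequality) or the noise bound must be replaced by a positive-definite surrogate such as $\frac{m_x W^2}{4}I$ before invoking the sum bound, with the constant $1+1/\epsilon'$ then absorbed into the $\frac{1}{4\epsilon'}$ coefficient. Alongside this, one should verify that the recursion keeps $P_{t+1} \succ 0$ whenever $F$ is invertible and $P_{t|t}\succ 0$, so that Lemma \ref{lem: minVol} remains applicable at the next step; this is straightforward but worth recording explicitly.
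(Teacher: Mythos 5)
Your proposal is correct and follows the same route as the paper — the paper's "proof" is simply a pointer to the discussion preceding the lemma, which handles the measurement update by the minimum-volume-ellipsoid formula (Lemma \ref{lem: minVol}) and asserts the time update with the phrase ``it is easy to see.'' The one piece of genuine content you supply beyond the paper is the Minkowski-sum bound
\begin{align*}
\mE(A,0) + \mE(B,0) \subseteq \mE\bra{A/\lambda + B/(1-\lambda),\,0},
\end{align*}
derived from the minimum-norm identity, and this is exactly the right tool; with $\lambda = 1/(1+\epsilon')$ it yields $(1+\epsilon')A + (1+1/\epsilon')B$. You are also right that the $\mones\mones^T$ term in the paper (written with a typo as $\mones$ in the lemma statement itself) is rank one and therefore cannot bound the noise cube $\|w_t\|_\infty < W/2$; the paper's formula is in error here. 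The only place to tighten your write-up is the claim that, after replacing the noise shape by $\frac{m_x W^2}{4}I$, the factor $1+1/\epsilon'$ is ``absorbed into the $\frac{1}{4\epsilon'}$ coefficient'': the actual coefficient that comes out is $(1+1/\epsilon')\frac{m_x W^2}{4} = \frac{m_x W^2(1+\epsilon')}{4\epsilon'}$, which differs from the paper's $\frac{W^2}{4\epsilon'}$ by a factor $m_x(1+\epsilon')$ — you cannot massage one into the other without also altering the $(1+\epsilon')$ prefactor on $FP_{t|t}F^T$. That discrepancy changes the constants in the later steady-state bounds (e.g., in Lemma \ref{lem: proofEllipUB} and the derivation of $R_{e,n}$) but not the asymptotic exponent $\beta_{e,n} = \frac{2}{n}\log_2\rho(F)$, which depends only on the $(1+\epsilon')$ factor, so the theorem that follows survives; it is worth being explicit rather than suggesting the constants cancel cleanly.
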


Using this approach, we get the following sufficient condition. 
\begin{thm}
\label{thm: ellipsoidalThm}
 It is possible to stabilize \eqref{eq: sysmodel} for $m_x\geq 2$ in the mean squared sense with an $(R,\beta)-$anytime code provided
\begin{subequations}
\label{eq: ellipsoidalThm}
\begin{align}
 R &> R_{e,n} = \frac{1}{n}\log_2\left[\sqrt{m_x}\sum_{i=1}^{m_x} |a_i|\theta^{i-1}\right]\\
 \beta &> \beta_{e,n} = \frac{2}{n}\log_2\rho(F)
\end{align}
\end{subequations}
 where $\theta = \sqrt{\frac{m_x}{m_x-1}}$
\end{thm}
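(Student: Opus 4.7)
The plan is to mirror the proof of Theorem \ref{thm: cuboidalThm}, replacing the hypercuboid by the ellipsoid $\mE(P_{t|t},\hat{x}_{t|t})$ produced by the recursion \eqref{eq: ellip}. I would first analyze the unimpeded time-update recursion: if the controller fails to correctly decode the innovations for $d$ consecutive steps, the ellipsoid matrix evolves as $P_{t+1}=(1+\epsilon')FP_{t|t}F^T+\frac{W^2}{4\epsilon'}\mones\mones^T$, so that $\Vert P_{t|t-1}\Vert$ is bounded above by $((1+\epsilon')\rho(F)^2)^d\Vert P_{t-d|t-d}\Vert$ plus a bounded noise contribution. For any $\delta>0$, one can choose $\epsilon'$ small enough and pass to a Jordan-adapted norm so that the per-step spectral growth is at most $(1+\delta)\rho(F)^2$. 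This is what drives the exponent requirement.

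Next I would handle the measurement-update geometry. The observer maintains its own ellipsoid using its local (noise-free for the channel) measurements, quantizes the scalar innovation using $nR$ bits, and transmits the quantization index via the $(R,\beta)$-anytime code. Lemma \ref{lem: minVol} shows that intersecting $\mE(P_{t|t-1})$ with a slab of width $w$ orthogonal to $e_1$ yields a minimum-volume covering ellipsoid whose principal axis along $e_1$ contracts by a factor proportional to $w/\sqrt{e_1^TP_{t|t-1}e_1}$. The key companion-form computation is that, after a time update, $e_1^TFP_{t|t}F^Te_1=\sum_{i,j}a_ia_j P_{t|t}^{(i,j)}$, which after bounding the off-diagonal entries by the diagonals via Cauchy-Schwarz gives an order $(\sum_i|a_i|\theta^{i-1})^2$ bound. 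The geometric factor $\theta^{i-1}=(m_x/(m_x-1))^{(i-1)/2}$ arises from iterating a uniform bound across the $m_x$ principal axes, and the extra $\sqrt{m_x}$ in the rate formula accounts for the need to control the ellipsoid's volume (all axes simultaneously) rather than only the $e_1$-projection.

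Combining the two ingredients yields a conditional recursion $V_{t+1|t+1}\le\gamma\,V_{t|t}+\text{const}$ for a suitable scalar size $V_t$ (e.g.\ $\log\det P_{t|t}$ or $\mathrm{tr}\,P_{t|t}$), with contraction $\gamma<1$ precisely when $R>R_{e,n}$. On the event that the earliest undecoded symbol has delay $d$, the ellipsoid has experienced $d$ unimpeded time updates, so $\Vert P_{t|t-1}\Vert\le K\rho(F)^{2d}$; anytime reliability $P_{t,d}^e\le\eta\,2^{-n\beta d}$ then gives $\E\Vert x_t-\hat{x}_{t|t}\Vert^2\le K'\sum_{d\ge d_o}\eta\,2^{-n\beta d}\rho(F)^{2d}+\text{const}$, which is summable iff $n\beta>2\log_2\rho(F)$. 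Stabilizability of $(F,G)$ together with the absence of a dual effect (as noted preceding Section \ref{sec: Cuboidal}) then lifts bounded mean-square estimation error to $\limsup_t\E\Vert x_t\Vert^2<\infty$ under $u_t=K\hat{x}_{t|t}$.

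The main obstacle I anticipate is the companion-form algebra in the measurement-update step: one must show that the stretch of $e_1^TP_{t|t-1}e_1$ induced by a single time update is captured by $\sum_i|a_i|\theta^{i-1}$ with $\theta=\sqrt{m_x/(m_x-1)}$, and that this matches the $\sqrt{m_x}$-weighted volumetric bound so that the rate threshold in \eqref{eq: ellipsoidalThm} is both sufficient and, as is to be shown in Section \ref{subsec: theLimitingCase}, asymptotically tight. Tracking principal-axis lengths through the minimum-volume covering ellipsoid construction of Lemma \ref{lem: minVol}, and ensuring the resulting recursion is uniformly contractive (rather than just contractive along $e_1$), is the technical heart of the argument.
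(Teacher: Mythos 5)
Your overall strategy matches the paper's proof: mirror the hypercuboidal argument, use Lemma \ref{lem: minVol} for the measurement update, bound off-diagonals by Cauchy--Schwarz in the companion-form time update, and observe that the time update grows the $P_t$-recursion at rate $\rho\bigl(\sqrt{1+\epsilon'}F\bigr)^2$, which with $\epsilon'\to 0$ gives the exponent threshold $2\log_2\rho(F)$; the dual-effect and summability steps are standard. Two details in your sketch deviate from what is actually needed. First, the $\sqrt{m_x}$ in $R_{e,n}$ does not come from ``controlling the ellipsoid's volume'': it comes from the specific inequality $a_t\le \frac{m_x}{4}(\delta_t-\gamma_t)^2$ in the minimum-volume ellipsoid formula (and hence $\sqrt{P_{t|t}^{11}}\le\frac{\sqrt{m_x}}{2}(\delta+V)$), i.e.\ the slab intersection can inflate the $e_1$-semiaxis by $\sqrt{m_x}/2$ relative to the slab half-width. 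Second, your proposal to close the argument with a scalar functional $V_t=\log\det P_{t|t}$ or $\mathrm{tr}\,P_{t|t}$ would not recover the stated rate threshold: the paper instead sets up the \emph{vector} recursion on $\Delta_{e,t}^{(i)}\ge\sqrt{P_t^{ii}}$ (Lemma \ref{lem: proofEllipUB}), which has exactly the structure $\Delta_{e,t+1}=(1+\epsilon')^{1/2}\Fbar\,\Delta_{e,t|t}+\frac{W}{2\sqrt{\epsilon'}}\mones$ with the non-measured axes inflated by $\theta=\sqrt{m_x/(m_x-1)}$ (Lemma \ref{lem: boundbt}) at each measurement update; only this companion-form vector recursion, followed by the same steady-state calculation as Lemma \ref{lem: determiningDelta} and the limits $\delta\to\infty$, $\epsilon'\to 0$, produces $\sqrt{m_x}\sum_i|a_i|\theta^{i-1}$. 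Your remark that tracking all principal axes is ``the technical heart'' is right; the resolution is precisely this vector recursion, not a trace- or determinant-based scalar one.
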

\begin{proof}
 See Appendix \ref{sec: proofEllipsoidal}
\end{proof}
\section{Sufficient Conditions for Stabilizability - Vector Measurements}
\label{sec: vectorMeasurements}

Like in the scalar case, we will assume without loss of generality that $(F,H)$ are in a canonical form (is obtained from a simple transformation of \textit{Scheme I} in Sec 6.4.6 of \cite{Kailath}) with the following structure. $F$ is a $q\times q$ block lower triangular matrix with $F^{i,j}$ denoting the $(i,j)^{th}$ block. So, $F^{i,j} = 0$ if $j > i$. $F^{i,j}$ is an $\ell_i\times\ell_j$ matrix and $\sum_{i=1}^q\ell_i = m_x$. The diagonal blocks $F^{i,i}$ have the following structure.
\begin{align*}
 F^{i,i} &= \left[\begin{array}{ccccc}
		-a_{i,1} & 1 & 0 & \ldots & \\
		-a_{i,2} & 0 & 1 & 0 &\\
		\vdots & \vdots & &\ddots & \\
		-a_{i,\ell_i-1} & \ldots & \ldots&0 & 1\\
		-a_{i,\ell_i} & 0 & \ldots & \ldots & 0
		\end{array}\right]
\end{align*}
while the off-diagonal blocks do not have any specific structure. The measurement matrix $H$ is of the form $H = \left[H_{1}^T,\,\, H_{2}^T\right]^T$ where $H_{1}$ is a $q\times m_x$ matrix of the following form
\begin{align}
 H_{1} = \text{block diag}\left\{\left[1\,\,0\,\,\cdots\,\,0\right],\,\,1\times\ell_i,\,\,i=1,\ldots,q\right\}
\end{align}
$H_{2}$ does not have any particular structure and is not relevant. Note that the characteristic polynomial of $F$, is given by $f(z) = \prod_{i=1}^q\bra{z^{\ell_i}+a_{i,1}z^{\ell_i-1}+\ldots+a_{i,\ell_i}}$.

If the Hypercuboidal filter is used, then Theorem \ref{thm: cuboidalThm} can be extended to the case of vector measurements is as follows.
\begin{thm}
\label{thm: cuboidalThmVec}
 It is possible to stabilize \eqref{eq: sysmodel} in the mean squared sense with an $(R,\beta)-$anytime code provided 
\begin{subequations}
\begin{align}
\label{eq: cuboidalThmVec}
 R > R_{v,n} &= \frac{1}{n}\sum_{i=1}^q\max\left\{0,\log\sum_{j=1}^{\ell_i} |a_{i,j}|\right\},\,\,\,\,\beta > \beta_{v,n} = \frac{2}{n}\log_2\rho\bra{\overline{F}}
\end{align}
\end{subequations}
\end{thm}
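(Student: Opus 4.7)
The plan is to mirror the proof of Theorem~\ref{thm: cuboidalThm}, replacing the scalar hypercuboidal filter with its vector analog and then carefully accounting for the block structure of the canonical form. Concretely, I would propagate an axis-aligned hypercuboid $\mE_t = \{x : x_{\min,t|t} \leq x \leq x_{\max,t|t}\}$ containing $x_t$, and track the side-length vector $\Delta_{t|t} = x_{\max,t|t} - x_{\min,t|t}$. By the same argument as in the scalar case, the time update satisfies
\begin{align*}
 \Delta_{t+1|t} = \overline{F}\Delta_{t|t} + W\mathbf{1}_{m_x},
\end{align*}
because the componentwise extremes of $Fx + w$ over the cuboid are obtained by using $\overline{F}$. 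This immediately explains the exponent requirement: the undamped growth of $\Delta$ between corrections is governed by $\rho(\overline{F})$, so any anytime exponent $\beta > \tfrac{2}{n}\log_2 \rho(\overline{F})$ will dominate the geometric growth of the error moments, by an argument identical to the toy example in Section~\ref{sec: Background}.

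For the measurement update I would exploit the canonical form of $(F,H)$: the rows of $H_1$ directly expose the \emph{first} coordinate of each of the $q$ observable blocks. Thus at each time $t$ the encoder can independently quantize $q$ scalars --- one per block --- while the remaining coordinates of $\Delta_{t|t}$ carry over unchanged from the time update. This reduces the problem to $q$ parallel scalar quantization subproblems, one per diagonal block $F^{i,i}$. Since each $F^{i,i}$ is in scalar companion form with coefficients $a_{i,1},\ldots,a_{i,\ell_i}$, I would apply the internal mechanism of the scalar proof to each block and conclude that allocating $\max\{0,\log \sum_{j=1}^{\ell_i}|a_{i,j}|\}$ bits per channel use to block $i$ is sufficient to keep the first coordinate of that block's cuboid bounded in steady state; the max with zero covers blocks where the companion matrix is already stable enough that no quantized measurement is required.

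Summing across $i=1,\ldots,q$ yields the total rate requirement $R > R_{v,n}$. With this rate allocation, the $q$-dimensional stream of quantized measurements is fed into the $(R,\beta)$-anytime code of Section~\ref{sec: Sufficient Condition}, so the probability that the decoded measurement from $d$ steps ago is in error is at most $\eta 2^{-n\beta d}$. Repeating the mean-square bounding step from the scalar proof --- expanding $\E\Vert x_t - \hat{x}_{t|t}\Vert^2$ into a sum over the delay $d$ of the earliest decoding error, bounding the propagated error by $\rho(\overline{F})^{2d}$ times a constant, and using $\beta > \tfrac{2}{n}\log_2\rho(\overline{F})$ to ensure the geometric series converges --- delivers $\limsup_t \E\Vert x_t - \hat{x}_{t|t}\Vert^2 < \infty$, after which the standard certainty-equivalence controller $u_t = K\hat{x}_{t|t}$ stabilizes the plant.

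The main obstacle I anticipate is legitimizing the decoupling across blocks in the presence of the lower-triangular off-diagonal blocks $F^{i,j}$ with $i>j$: these couple the cuboid growth across blocks, so the measurement update on block $j$ affects the subsequent bound on block $i$. I would handle this by exploiting the block triangularity, arguing inductively from $i=1$ upward: block $1$ is self-contained and reduces to the scalar Theorem~\ref{thm: cuboidalThm}; once blocks $1,\ldots,i-1$ have bounded cuboid widths, the contributions $\overline{F}^{i,j}\Delta^{(j)}_{t|t}$ for $j<i$ act as bounded exogenous disturbances on block $i$, which can then be absorbed into the effective process noise $W$ without changing either threshold. The exponent requirement $\beta > \tfrac{2}{n}\log_2\rho(\overline{F})$ remains governed by the joint growth, consistent with $\rho(\overline{F}) = \max_i \rho(\overline{F^{i,i}})$ for block triangular matrices.
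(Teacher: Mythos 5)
Your proposal is correct and follows essentially the same route as the paper: propagate the vector hypercuboidal filter with time update $\Delta_{t+1} = \overline{F}\Delta_{t|t} + W\mathbf{1}_{m_x}$, quantize the first coordinate of each observable block of the canonical form with a separate lattice quantizer, and sum the per-block rate requirements, with the exponent governed by $\rho(\overline{F})$. Your inductive treatment of the block lower-triangular cross-coupling --- arguing that once blocks $1,\ldots,i-1$ are controlled their contributions enter block $i$ as a bounded exogenous disturbance that washes out when $\delta_i\rightarrow\infty$ --- is the same idea the paper compresses into its nested limiting statement on the quantizer widths $\delta_1,\ldots,\delta_q$, just made more explicit.
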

\begin{proof}
 See Appendix \ref{sec: proofCuboidalVec}
\end{proof}

The thresholds if one uses an Ellipsoidal filter are given as follows.
\begin{thm}
\label{thm: ellipsoidalThmVec}
 It is possible to stabilize \eqref{eq: sysmodel} in the mean squared sense with an $(R,\beta)-$anytime code provided 
\begin{subequations}
\begin{align}
\label{eq: ellipsoidalThmVec}
 R > R_{ve,n} = \frac{1}{n}\sum_{i=1}^q\max\left\{0,\log\left[\sqrt{m_x}\sum_{j=1}^{\ell_i} |a_{i,j}|\theta^{j-1}\right]\right\},\,\,\,\,\beta > \beta_{ve,n} = \frac{2}{n}\log_2\rho\bra{F}
\end{align}
\end{subequations}
where $\theta = \sqrt{\frac{m_x}{m_x-1}}$\qedcustom
\end{thm}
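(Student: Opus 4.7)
The plan is to combine the scalar-measurement ellipsoidal filter of Section~\ref{sec: Ellipsoidal} with the block-triangular structure imposed by the canonical form. Partition the state as $x_t=[x_t^{(1)T},\ldots,x_t^{(q)T}]^T$ with $x_t^{(i)}\in\mathbb{R}^{\ell_i}$. Because $F$ is block lower triangular and $H_1$ is block diagonal,
\begin{align*}
x_{t+1}^{(i)}=F^{i,i}x_t^{(i)}+\sum_{j<i}F^{i,j}x_t^{(j)}+w_t^{(i)},\qquad y_t^{(i)}=e_1^T x_t^{(i)}+v_t^{(i)},
\end{align*}
so that, conditioned on the earlier blocks, each $(F^{i,i},e_1^T)$ is already in the scalar-measurement companion form to which Theorem~\ref{thm: ellipsoidalThm} applies, with $\sum_{j<i}F^{i,j}x_t^{(j)}$ playing the role of an additional bounded process disturbance. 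I would maintain a single ambient ellipsoid $\mE(P_{t|t},\hat{x}_{t|t})\subset\mathbb{R}^{m_x}$ and update it by the time update~\eqref{eq: ellipTU} using the full $F$ followed by $q$ sequential measurement updates~\eqref{eq: ellipMU}, one per row of $H_1$. Working in the ambient dimension $m_x$ (rather than per block) is what produces the uniform shape factor $\theta=\sqrt{m_x/(m_x-1)}$ and normalization $\sqrt{m_x}$ that appear in the statement, regardless of the per-block sizes $\ell_i$.

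I would then induct on $i=1,\ldots,q$. The base case $i=1$ has no coupling term and is a direct invocation of Theorem~\ref{thm: ellipsoidalThm}: either $\rho(F^{1,1})\le 1$, in which case the ellipsoidal contraction alone bounds block~$1$ and no channel bits are allocated (this is what the outer $\max\{0,\cdot\}$ absorbs), or $\rho(F^{1,1})>1$ and a rate $R_1>\tfrac{1}{n}\log_2\bigl[\sqrt{m_x}\sum_{j}|a_{1,j}|\theta^{j-1}\bigr]$ with exponent $\beta>\tfrac{2}{n}\log_2\rho(F^{1,1})$ delivers $\limsup_t\E\|x_t^{(1)}\|^2<\infty$. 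In the inductive step the hypothesis yields a uniform second-moment bound on each $x_t^{(j)}$ with $j<i$, hence on the coupling $\sum_{j<i}F^{i,j}x_t^{(j)}$; absorbing this term into the $W^2/4\epsilon'$ slack in~\eqref{eq: ellipTU} and reapplying Theorem~\ref{thm: ellipsoidalThm} to block $i$ reproduces the same per-block rate and exponent formulas. Summing over $i$ gives $R_{ve,n}$. Block triangularity gives $\rho(F)=\max_i\rho(F^{i,i})$, so the single exponent $\beta>\tfrac{2}{n}\log_2\rho(F)$ suffices for every block simultaneously, and the $(R,\beta)$-anytime guarantee~\eqref{eq: anytimeReliability} propagates bit-error decay to state-error decay by the same telescoping $\sum_d\rho(F)^{2d}P_{t,d}^e$ bound sketched in Section~\ref{sec: Background}.

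The main obstacle will be the ``bounded coupling'' step of the induction: the tracking ellipsoid of each earlier block is deterministic on the event that all relevant bits are already decoded, but on the tail event that the earliest uncorrected bit of block $j$ is at delay $d$ (probability $\lesssim 2^{-\beta n d}$) the block-$j$ envelope inflates by $\rho(F^{j,j})^d$. Establishing $\limsup_t\E\bigl\|\sum_{j<i}F^{i,j}x_t^{(j)}\bigr\|^2<\infty$ therefore requires $\beta>\tfrac{2}{n}\log_2\rho(F^{j,j})$ for every $j<i$, which is precisely the uniform $\beta>\tfrac{2}{n}\log_2\rho(F)$ in the statement; the delicate bookkeeping is to verify that this geometric series converges uniformly in $t$ and $i$ and can be absorbed into the slack parameter $\epsilon'$ without degrading the per-block rate formula. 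Once this is settled, $q$ sequential applications of Lemma~\ref{lem: minVol} in $\mathbb{R}^{m_x}$ produce exactly the stated $R_{ve,n}$ and $\beta_{ve,n}$.
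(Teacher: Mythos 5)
Your ambient-ellipsoid setup — one $\mE(P_{t|t},\hat{x}_{t|t})\subset\mathbb{R}^{m_x}$, a single global time update, and $q$ applications of Lemma~\ref{lem: minVol} per step — is correct and does account for the $\sqrt{m_x}$ and $\theta=\sqrt{m_x/(m_x-1)}$ factors. But the inductive ``bounded coupling'' step that you yourself flag as the obstacle does not close. The ellipsoidal time update~\eqref{eq: ellipTU} absorbs only a \emph{deterministic, almost-sure} bound on the process disturbance into the slack $\frac{W^2}{4\epsilon'}\mones$; a bound on $\limsup_t\E\|x_t^{(j)}\|^2$ for $j<i$ gives no almost-sure bound on $\sum_{j<i}F^{i,j}x_t^{(j)}$ and so cannot legitimately be fed into~\eqref{eq: ellipTU} as additional process noise. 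Worse, even on the error-free event, the deterministic envelope of block $j$ scales with that block's bin width $\delta_j$; any constant standing in for the coupling would scale with $\delta_j$ as well, and your per-block invocation of Theorem~\ref{thm: ellipsoidalThm} would then yield a rate for block $i$ containing uncontrolled ratios $\delta_j/\delta_i$ rather than the stated $R_{ve,n}$.

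The paper's route (the omitted proof mirrors that of Theorem~\ref{thm: cuboidalThmVec}) never decomposes the state and never treats the coupling as a disturbance. One keeps the full $m_x$-dimensional diagonal-element recursion — the vector analog of Lemma~\ref{lem: proofEllipUB}, driven by the full block lower triangular $\Fbar$ — and makes only the \emph{quantization} per block: row $i$ of $H_1$ gets its own bin width $\delta_i$, and $R=\sum_i R_i$. The cross-block contribution of block $j$ to block $i$'s first-component uncertainty is of order $\delta_j$; by choosing the bin widths at vastly separated scales so that these contributions are asymptotically negligible relative to the self-contribution of order $\delta_i$, the limit of (block-$i$ first-component uncertainty)$/\delta_i$ produces exactly the stated per-block summand, with $\max\{0,\cdot\}$ absorbing the stable blocks. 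The exponent $\beta>\frac{2}{n}\log_2\rho(F)$ then comes from a separate, purely deterministic pass exactly as in Theorem~\ref{thm: ellipsoidalThm}: with no measurements $P_t$ grows at the rate $\rho\bigl((1+\epsilon')F\bigr)^{2}$ per step and $\epsilon'$ can be sent to zero. The missing idea in your proposal is this multi-scale choice of the $\delta_i$; without it the stated rate formula does not emerge, and the attempt to fold a random, $\delta_j$-dependent coupling into the filter's fixed noise slack is not sound.
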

We skip the proof for Theorem \ref{thm: ellipsoidalThmVec} since it is very similar to that of Theorem \ref{thm: cuboidalThmVec}. 		
\section{Discussion - Asymptotics and the Stabilizable Region}
\label{sec: discussion}
The sufficient conditions derived above are non-asymptotic in the sense that measurements are encoded every time step. Alternately, one can encode the measurements every, say, $\ell$ time steps, and consider the asymptotic rate and exponent needed as $\ell$ grows. This is often the form in which such sufficient conditions appear in the literature \cite{Sahai,Nair,Minero}. Even though the sufficient conditions in Sections \ref{sec: sufficient} and \ref{sec: vectorMeasurements} are non-asymptotic, note that they depend only on the system matrices $F$, $H$ and not on the noise distribution. In order to compare our results with those in the literature, we examine the sufficient conditions in the asymptotic limit of large $\ell$.

\subsection{The Limiting Case}
\label{subsec: theLimitingCase}
Note that encoding once every $\ell$ measurements amounts to working with the system matrix $F^\ell$. So, one can calculate this limiting rate and exponent by writing the eigen values of $F$, $\{\lambda_i\}_{i=1}^m$, as $\lambda_i = \mu_i^n$ and letting $n$ scale. The following asymptotic result allows us to compare the sufficient conditions above with those in the literature (eg., see \cite{Sahai,Nair,Minero}). 
\begin{thm}[The Limiting Case]
\label{thm: limitingCase}
Write the eigen values of $F$, $\{\lambda_i\}_{i=1}^{m_x}$, in the form $\lambda_i = \mu_i^{n}$. Letting $n$ scale, $R_{n}$, $R_{v,n}$, $R_{e,n}$, $R_{ev,n}$ converge to $R^*$, and $\beta_{n}$, $\beta_{v,n}$, $\beta_{e,n}$, $\beta_{ev,n}$ converge to $\beta^*$, where
\begin{align}
\label{eq: limiting} R^* =\sum_{i:|\mu_i|>1}\log_2|\mu_i|,\,\,\,\beta^* = 2\log_2\max_i|\mu_i|
\end{align}
\end{thm}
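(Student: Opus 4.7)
The plan is to verify each of the four rate sequences and four exponent sequences separately, substituting $\lambda_i=\mu_i^n$ into the closed-form expressions from Theorems \ref{thm: cuboidalThm}--\ref{thm: ellipsoidalThmVec} and computing the $n\to\infty$ limit. The exponent convergences are essentially immediate, while the rate convergences hinge on tight asymptotics for sums of elementary symmetric polynomials in the $\lambda_i$'s.

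\textbf{Exponents.} For $\beta_{e,n}=\beta_{ev,n}=\frac{2}{n}\log_2\rho(F)$ the claim is exact, since $\rho(F)=\max_i|\lambda_i|=\bra{\max_i|\mu_i|}^n$ and so both expressions equal $\beta^*$ at every $n$. For $\beta_{n}=\beta_{v,n}=\frac{2}{n}\log_2\rho(\overline{F})$ the lower bound $\rho(\overline{F})\geq\rho(F)$ is immediate because $|Ax|\leq\overline{A}|x|$ componentwise. For the matching upper bound I would exploit the companion canonical form: $\rho(\overline{F})$ is the unique positive real root of $g(z):=\sum_{i=1}^{m_x}|a_i|z^{-i}=1$. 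Using $|a_i|\leq\binom{m_x}{i}\bra{\max_j|\mu_j|}^{ni}$ (from $|a_i|\leq e_i\bra{|\lambda_1|,\ldots,|\lambda_{m_x}|}$) gives $g\bra{\alpha\bra{\max_j|\mu_j|}^n}\leq \bra{1+\alpha^{-1}}^{m_x}-1$, which is strictly less than $1$ for a constant $\alpha=\alpha(m_x)$, forcing $\rho(\overline{F})\leq\alpha\bra{\max_j|\mu_j|}^n$. Hence $\frac{1}{n}\log\rho(\overline{F})\to\log\max_j|\mu_j|$ and $\beta_n,\beta_{v,n}\to\beta^*$. The vector case additionally uses that $\overline{F}$ inherits block-lower-triangular structure, so $\rho(\overline{F})$ is the maximum of the diagonal-block spectral radii $\rho(\overline{F^{i,i}})$, each of which is handled by the scalar argument.

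\textbf{Rates.} The heart of the matter is the scalar rate $R_{n}=\frac{1}{n}\log_2\sum_{i=1}^{m_x}|a_i|$. For the upper bound, $\sum_i|a_i|\leq\sum_i e_i\bra{|\lambda_1|,\ldots,|\lambda_{m_x}|}\leq\prod_i\bra{1+|\mu_i|^n}$, and $\frac{1}{n}\sum_i\log\bra{1+|\mu_i|^n}$ tends to $\log|\mu_i|$ when $|\mu_i|>1$ and to $0$ otherwise, giving $\limsup R_n\leq R^*$. For the lower bound, set $S=\left\{i:|\mu_i|>1\right\}$ and $k=|S|$. The coefficient $a_k=(-1)^ke_k\bra{\lambda_1,\ldots,\lambda_{m_x}}$ contains the single term $\prod_{i\in S}\lambda_i$ of magnitude $\prod_{i\in S}|\mu_i|^n=2^{nR^*}$, while every other $k$-subset $T\neq S$ must substitute at least one index of $S$ for one of $S^c$, so its product has magnitude at most $2^{nR^*}\tau^n$ with $\tau:=\max_{j\notin S}|\mu_j|/\min_{i\in S}|\mu_i|<1$. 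Since fewer than $\binom{m_x}{k}$ such terms appear, the triangle inequality gives $|a_k|\geq 2^{nR^*}\bra{1-\binom{m_x}{k}\tau^n}=2^{nR^*}\bra{1-o(1)}$, and therefore $\sum_i|a_i|\geq|a_k|\geq 2^{nR^*}\bra{1-o(1)}$, yielding $\liminf R_n\geq R^*$. The extra factor $\sqrt{m_x}\theta^{i-1}$ with $\theta=\sqrt{m_x/(m_x-1)}$ in $R_{e,n}$ is bounded in $m_x$, so $R_{e,n}=R_n+O(1)/n\to R^*$. For the vector rates I would apply the scalar argument block by block: the characteristic polynomial factors as $\prod_{i=1}^q\bra{z^{\ell_i}+a_{i,1}z^{\ell_i-1}+\ldots+a_{i,\ell_i}}$, so eigenvalues partition across blocks. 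For a block with at least one unstable eigenvalue, $\frac{1}{n}\log\sum_j|a_{i,j}|$ tends to a positive limit and the $\max\{0,\cdot\}$ is inactive, while for a block with only stable eigenvalues $\sum_j|a_{i,j}|$ stays bounded in $n$ and $\frac{1}{n}\max\{0,\log\sum_j|a_{i,j}|\}\to 0$, matching the zero contribution of that block to $R^*$. Summing over blocks recovers $R^*$ for both $R_{v,n}$ and $R_{ev,n}$.

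The main obstacle is ruling out catastrophic cancellation in $a_k=(-1)^ke_k\bra{\lambda_1,\ldots,\lambda_{m_x}}$ for the lower bound on $R_n$. The saving observation is the exponential separation induced by $\lambda_i=\mu_i^n$: the single subset $S$ of unstable indices dominates every other $k$-subset by a factor $\tau^n$ that decays exponentially in $n$, so the triangle inequality preserves the dominant term up to a $1-o(1)$ multiplicative factor regardless of the signs of the $\lambda_i$.
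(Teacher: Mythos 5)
Your proposal is correct and follows essentially the same strategy as the paper's proof: squeeze $\rho(\overline{F})$ between $\rho(F)$ and a constant multiple of $\rho(F)$ to handle the exponents, and establish that the coefficient $a_{|S|}$ (the elementary symmetric polynomial in the $k$ unstable eigenvalues) dominates $\sum_i |a_i|$ to handle the rates. The paper cites the van der Sluis inequality $\rho(F)\leq\rho(\overline{F})\leq\rho(F)/(\sqrt[m]{2}-1)$ and asserts the dominance of $|a_{|\mIu|,n}|$ as "easy to prove," whereas you re-derive the upper bound on $\rho(\overline{F})$ directly from the positive-root characterization of the companion form and spell out the triangle-inequality argument ruling out cancellation — a self-contained filling-in of the details the paper leaves to the reader.
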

\begin{IEEEproof}
 See Appendix \ref{subsec: limitingCase}.
\end{IEEEproof}
For stabilizing plants over deterministic rate limited channels, \cite{Nair} showed that a rate $R > R^*$, where $R^*$ is as in \eqref{eq: limiting}, is necessary and sufficient. So, asymptotically the sufficient condition for the rate $R$ in Theorem \ref{thm: cuboidalThm} is tight. But it is not clear if one do with an exponent smaller than $\beta^* = 2\log_2\max_i|\mu_i|$ asymptotically when there is no feedback. Though the above limiting case allows one to obtain a tight and an intuitively pleasing characterization of the rate and exponent needed, it should be noted that this may not be operationally practical. For, if one encodes the measurements every $\ell$ time steps, even though Theorem \ref{thm: limitingCase} guarantees stability, the performance of the closed loop system (the LQR cost, say) may be unacceptably large because of the delay we incur. This is what motivated us to present the sufficient condition in the form that we did above. 

\subsection{A Comment on the Trade-off Between Rate and Exponent}

Once a set of rate-exponent pairs $(R,\beta)$ that can stabilize a plant is available, one would want to identify the pair that optimizes a given cost function. Higher rates provide finer resolution of the measurements while larger exponents ensure that the controller's estimate of the plant does not drift away; however, we cannot have both. One can either coarsely quantize the measurements and protect the bits heavily or quantize them moderately finely and not protect the bits as much. One can easily cook up examples using an LQR cost function with the balance going either way. Studying this trade-off is integral to making the results practically applicable.

\subsection{Stabilizable Region}

Using the thresholds obtained in Theorem \ref{thm: improvedOverRCE}, and the asymptotic sufficient condition in Theorem \ref{thm: limitingCase}, we can discuss the range of the eigen values of $F$, i.e., $\{|\mu_i|\}_{i=1}^{m_x}$, for which the $\eta^{th}$ moment of $x_t$ in \eqref{eq: sysmodel} can be stabilized over some common channels. Since we are interested in the asymptotics, we assume the same limiting case as in Section \ref{subsec: theLimitingCase}. Firstly, consider the scalar case, i.e., $m_x = 1$ and let the eigen value be $\mu$. An anytime reliable code with rate $R$ and exponent $\beta$ can stabilize the process in \eqref{eq: sysmodel} for all $\mu$ such that 
\begin{align*}
 \log_2|\mu| < \min\left\{R,\frac{\beta}{\eta}\right\}
\end{align*}
So, a scalar unstable linear process in \eqref{eq: sysmodel} can be stabilized over a MBIOS channel with Bhattacharya parameter $\bh$ provided
\begin{align}
\label{eq: region1}
 \log_2|\mu| < \log_2|\mu_{\max}| = \sup_{R < C,\beta < E_\bh(R)}\min\left\{R,\frac{\beta}{\eta}\right\}
\end{align}
The stabilizable region as implied by the threshold in \cite{Sahai} is given by 
\begin{align*} 
\log_2|\mu| < \log_2|\mu_{\max}| = \sup_{R < C,\beta < E_r(R)}\min\left\{R,\frac{\beta}{\eta}\right\}
\end{align*}
For $\eta=2$, the stabilizable region for the BEC and BSC is shown in Fig \ref{fig: BECvBSC} where $|\mu_{max}|$ is plotted against the channel parameter.
\begin{figure}
 \centering
\includegraphics[scale=0.25]{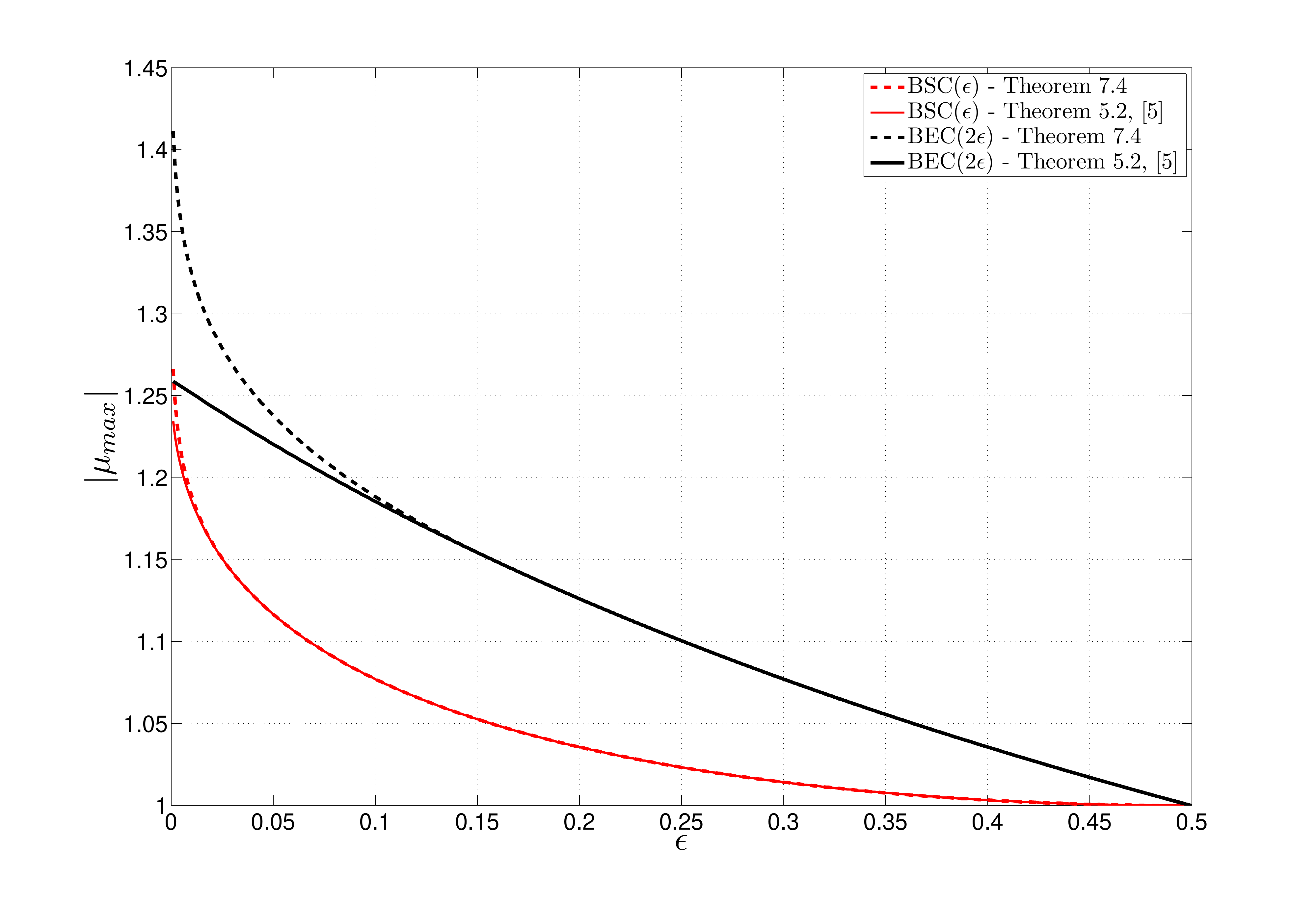}
\caption{Comparing the stabilizable regions of BSC and BEC using linear codes}
\label{fig: BECvBSC}
\end{figure}
Consider a vector valued process with unstable eigen values $\{|\mu_i|\}_{i=1}^{m}$. Such a process can be stabilized by a rate $R$ and exponent $\beta$ anytime reliable code provided $R > \sum_{i=1}^m\log|\mu_i|$ and $\beta > \log\bra{\max_{i}|\mu_i|}$. So, given a channel with Bhattacharya parameter $\bh$ for which the rate exponent curve $(R,E_{\bh}(R))$ is achievable, the region of unstable eigen values that can be stabilized is given by $\{\mu\in\Re^m,\,\,|\,\,\,\exists R < C\ni \sum_{i=1}^m\log|\mu_i| < R\,\,\text{ and }\,\,\log\bra{\max_{i}|\mu_i|} < E_{\bh}(R)\}$, where $C$ is the Shannon capacity of the channel. For example, let $m=2$ and $\eta=2$. Fig \ref{fig: 2DimStabRegion}a shows the region of $(|\mu_1|,|\mu_2|)$ that can be stabilized over three different channels, a binary symmetric channel with bit flip probability 0.1 and binary erasure channels with erasure probabilities 0.1 and 0.2 respectively.
\begin{figure}
 \centering
\subfigure[Each curve represents the outer boundary of the stabilizable region.]{\includegraphics[scale=0.26]{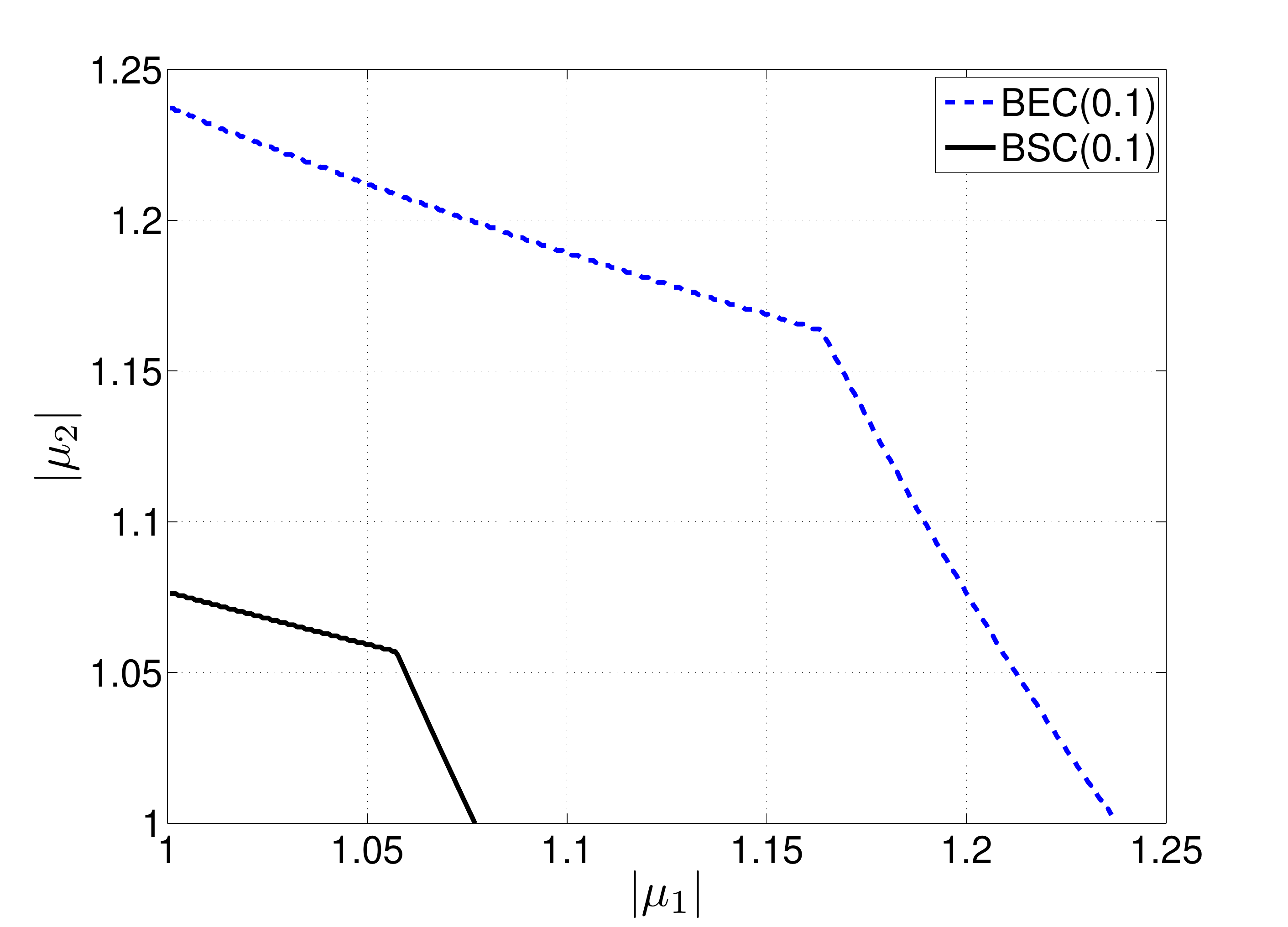}}\,\,
\subfigure[Stabilizable region with and without feedback]{\includegraphics[scale=0.25]{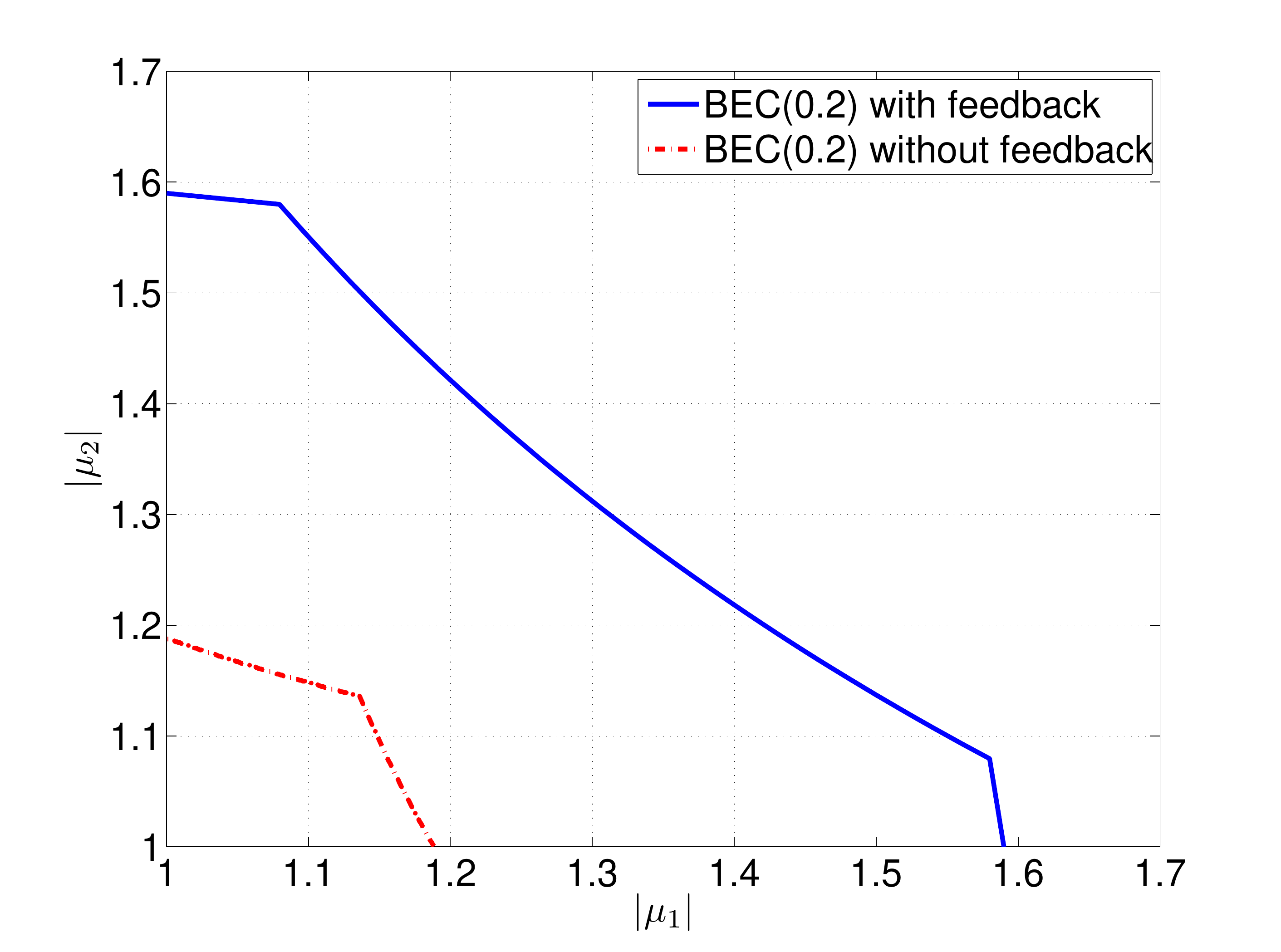}}
\caption{Comparing the stabilizable region of different channels}
\label{fig: 2DimStabRegion}
\end{figure}

We will now compare these results with the case when there is perfect feedback of the channel outputs at the observer/encoder. \cite{SahaiVec} considered a priority queuing method for stabilizing vector valued unstable processes over channels with perfect feedback. Bits from different unstable subsystems are placed in a FIFO queue. Bits are given preference in decreasing order of the size of the eigen value of the corresponding subsystem. So, bits coming from a subsystem with a larger eigen value are given preference over those from a subsystem with a smaller eigen value. A bit is removed from the queue once it is received correctly. Since the feedback anytime capacity of a binary erasure channel is known \cite{SahaiWhy}, one can use Theorem 6.1 in \cite{SahaiVec} to derive the region of eigen values that can be stabilized by such a scheme. In Fig. \ref{fig: 2DimStabRegion}b, we compare the region of $(|\mu_1|,|\mu_2|)$ that can be stabilized with and without feedback over a binary erasure channel with erasure probability 0.2. As one would expect, the region is much larger when there is feedback. Note that the stabilizable regions in Fig. \ref{fig: 2DimStabRegion} are only achievable and not necessarily tight.
\section{Simulations}
\label{sec: Simulations}
We present two examples and stabilize them over a binary erasure channel with erasure probability $\epsilon = 0.3$. The number of channel uses per measurement is fixed to $n=15$. In both cases, time invariant codes $\mathbb{H}_{15,R}\in\mathbb{TZ}_{\frac{1}{2}}$, for an appropriate rate $R$, were randomly generated and decoded using Algorithm \ref{alg: algorithm}. The controller uses the Hypercuboidal filter to estimate the state.
\subsection{Cart-Stick Balancer}
The system parameters for a cart-stick balancer (also commonly called the \textit{inverted pendulum on a cart}) with state variables of stick angle, stick angular velocity, and cart velocity, when sampled with sampling duration 0.1s are (Exercise 10.15 in \cite{Franklin})
\begin{align*}
F = \left[\begin{array}{ccc}
                  1.161& 0.105& 0\\
		  3.3& 1.161& 0.002\\
		  -3.265 & -0.160& 0.979 
                 \end{array}
\right],\,\,\,G = [-0.003\,\,\,-0.068\,\,\,0.859]^T,\,\,\,H = [10\,\,\,0\,\,\,0]
\end{align*}
The characteristic polynomial of $F$ is $x^3 - 3.3x^2 + 3.27x - 0.98$ and its eigen values are 1.75, 0.98 and 0.57. So, $F$ is open loop unstable. Each component of the process noise and measurement noise is i.i.d zero mean Gaussian with variance 0.01 truncated to lie in [-0.025,0.025]. The control input is given by $u_t = -K\hat{x}_{t|t}$, where $K = [-81.55\,\,\,-14.37\,\,\,-0.04]$. One can verify that $F-GK$ is stable. 
In order to apply Theorem \ref{thm: cuboidalThm}, we write $F$ in the following canonical form
\begin{align*}
 F_o = \left[\begin{array}{ccc}
                  3.3& 1& 0\\
		  -3.27& 0& 1\\
		  0.98 & 0& 0 
                 \end{array}
\right]
\end{align*}
Applying Theorem \ref{thm: cuboidalThm}, one can stabilize $x_t$ in the mean squared sense provided the exponent $n\beta > 2\log\bra{\rho\bra{\overline{F_o}}} = 4.1035$ and the rate $nR = k > \log\bra{3.3+3.27+0.98}=2.1$. For $k=5$, there exist anytime reliable codes with exponent upto $n\beta=4.27$. Fig \ref{fig: samplePath} plots a sample path of the above system for a randomly chosen Toeplitz code. It is clear from Fig \ref{fig: samplePathNorm} that the plant is stabilized.

\begin{figure}
\centering
\subfigure[The stick does not deviate by more than 3 degrees from the vertical]{\includegraphics[height=2in,width=3.3in]{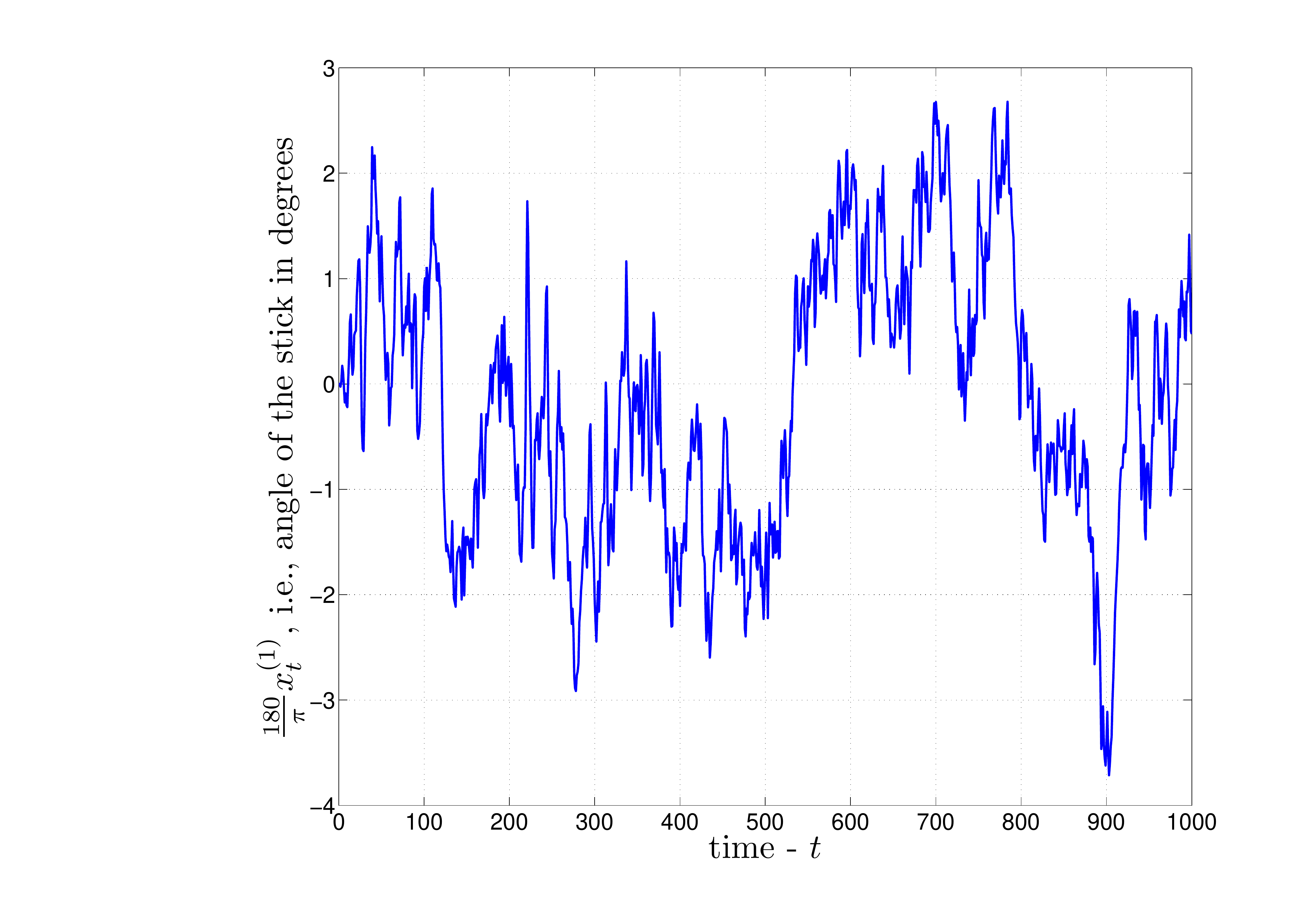}\label{fig: samplePathAngle}}\,\,
\subfigure[This shows that the plant is stabilized]{\includegraphics[height=2in,width=2.8in]{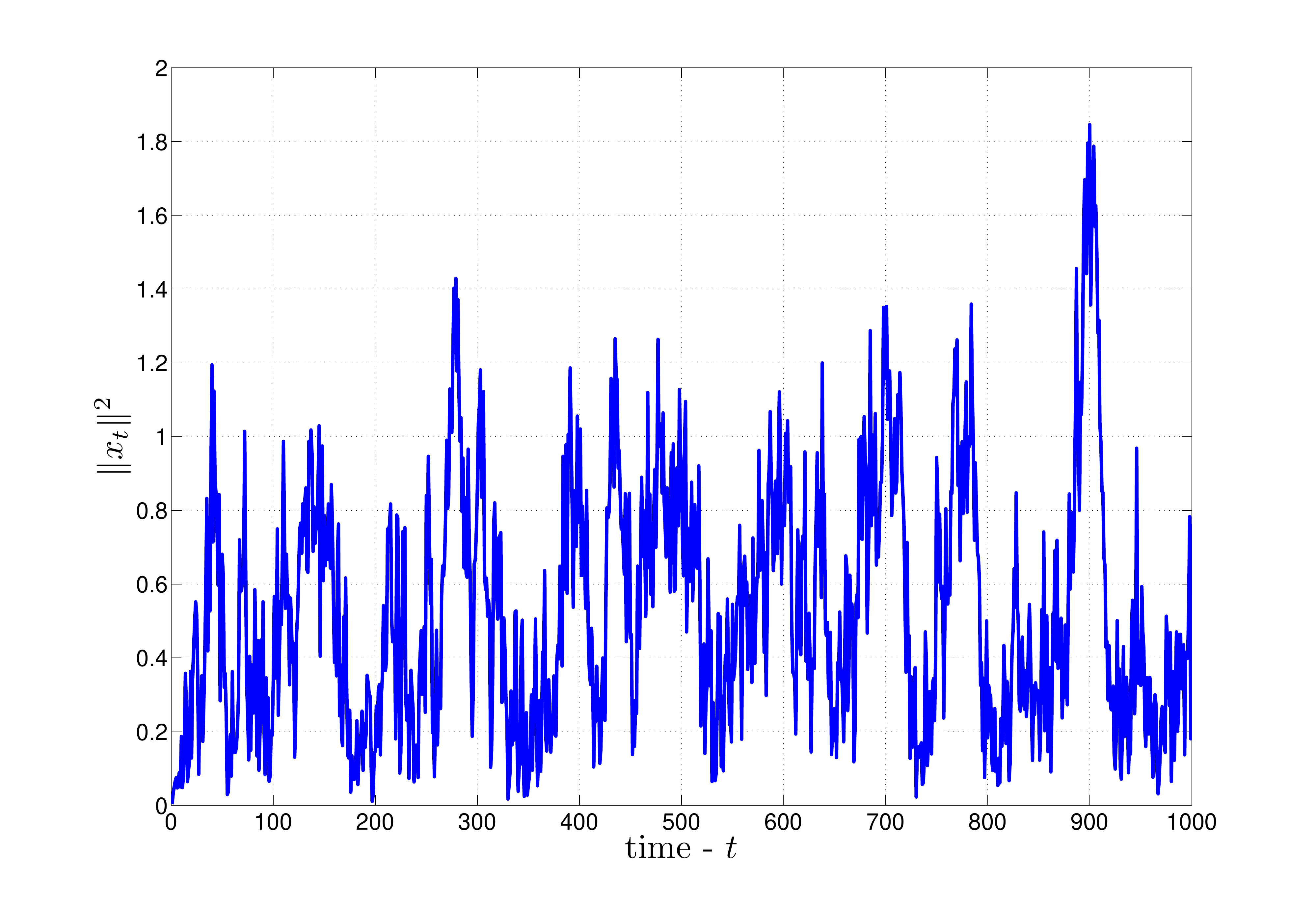}\label{fig: samplePathNorm}}
\caption{A sample path}
\label{fig: samplePath}
\end{figure} 

\subsection{Example 2}
This example is aimed at exploring the trade off between the resolution of the quantizer and the error performance of the causal code. Consider a 3-dimensional unstable system \eqref{eq: sysmodel} with 
\begin{align*}
F = \left[\begin{array}{ccc}2 & 1 & 0\\0.25 & 0 & 1\\-0.5 & 0 & 0\end{array}\right] 
\end{align*}
$G = \mathcal{I}_3$ and $H=[1 0 0]$. Each component of $w_t$ and $v_t$ is generated i.i.d $N(0,1)$ and truncated to [-2.5,2.5]. The eigen values of $F$ are $\{2,-0.5,0.5\}$ while $\lambda(\Fbar) = 2.215$. The observer has access to the control inputs and we use the hypercuboidal filter outlined in Section \ref{sec: proofCuboidal}. Using Theorem \ref{thm: cuboidalThm}, the minimum required bits and exponent are given by $k = nR \geq 2$ and $n\beta\geq 2\log_2 2.215 = 2.29$. The control input is $u_t = -\hat{x}_{t|t-1}$. For $k\leq 7$, $n\beta \geq 2.32$. If $k = 8$, $n\beta = 1.32 < 2.29$. For each value of $k$ ranging from 3 to 7, 1000 codes were generated from the ensemble $\mathbb{TZ}_{\frac{1}{2}}$. For each code, the system was simulated over a horizon of 100 time instants and the LQR cost has been averaged over 100 such runs. For a time horizon $T$, the LQR cost is defined as $\frac{1}{2T}\sum_{t=0}^T\E\bra{\|x_t\|^2 + \|u_t\|^2}$. In Fig \ref{fig: 3DimSysCDF}, the cumulative distribution function of the LQR cost is plotted for $3\leq k\leq 7$. The $x-$axis denotes the proportion of codes for which the LQR cost is below a prescribed value, e.g., with $k=6,n=15$, the cost was less than 15 for $85\%$ of the codes while with $k=5,n=15$, this fraction increases to more than $95\%$. 
The competition between the rate and the exponent in determining the LQR cost is evident when we look at Fig \ref{fig: 3DimSysLQR}. When $k = 3$, the error exponent $n\beta = 6.3$ is large. So, at any time $t$, the decoder decodes all the source bits $\{b_\tau\}_{\tau\leq t-1}$ with a high probability. Hence, the limiting factor on the LQR cost is the resolution that the source bits $b_t$ provide on the measurements. But when $k = 7$, the measurements are quantized to a high resolution but the decoder makes errors in decoding the source bits. So, the best choice appears to be $k=5$.

\begin{figure}
\centering
\subfigure[The CDF of the LQR costs for different values of the rate]{\includegraphics[scale=0.23]{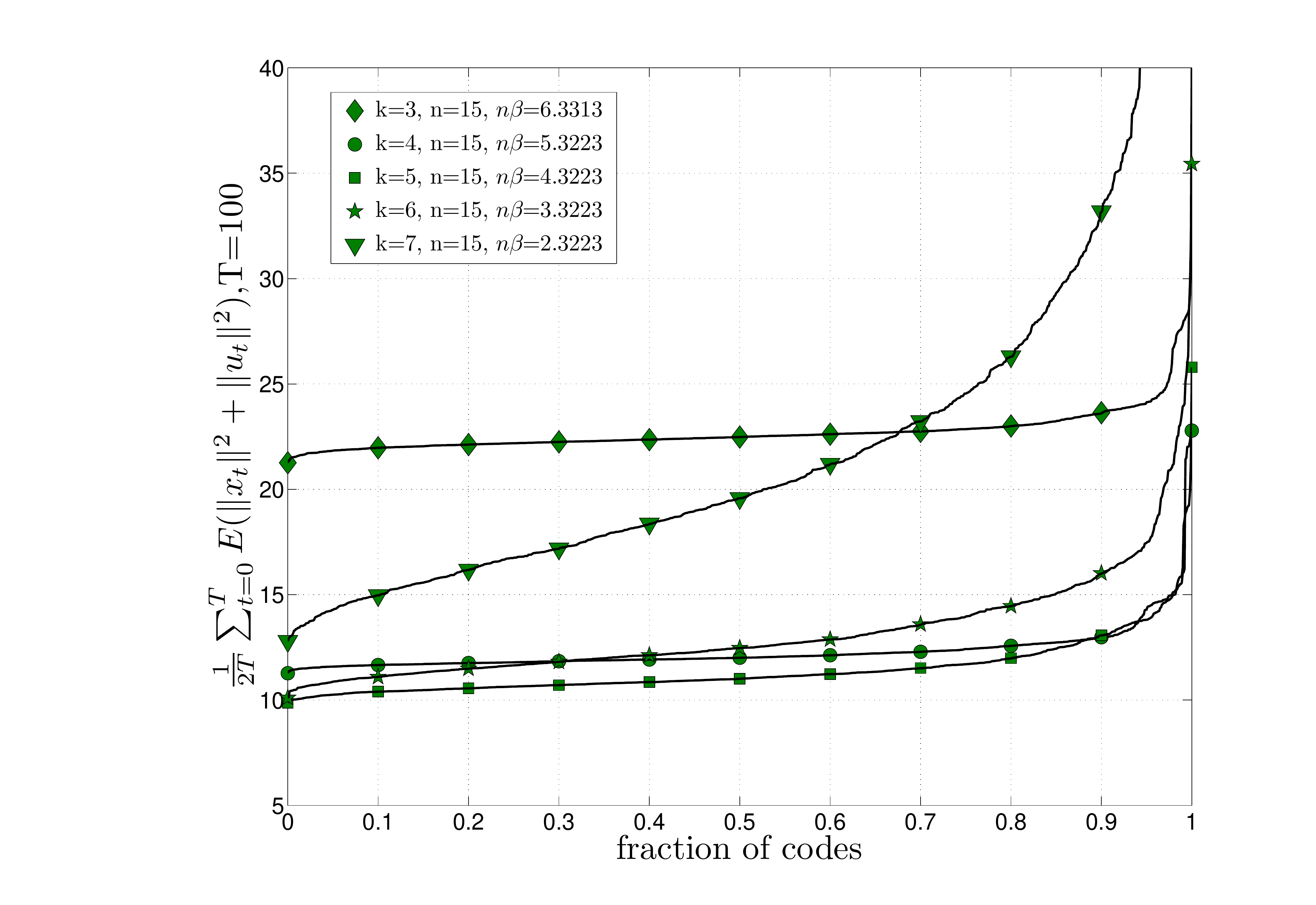}\label{fig: 3DimSysCDF}}\,\,
 \subfigure[The LQR cost averaged over the 1000 randomly generated codes is plotted against $k$]{\includegraphics[scale=0.15]{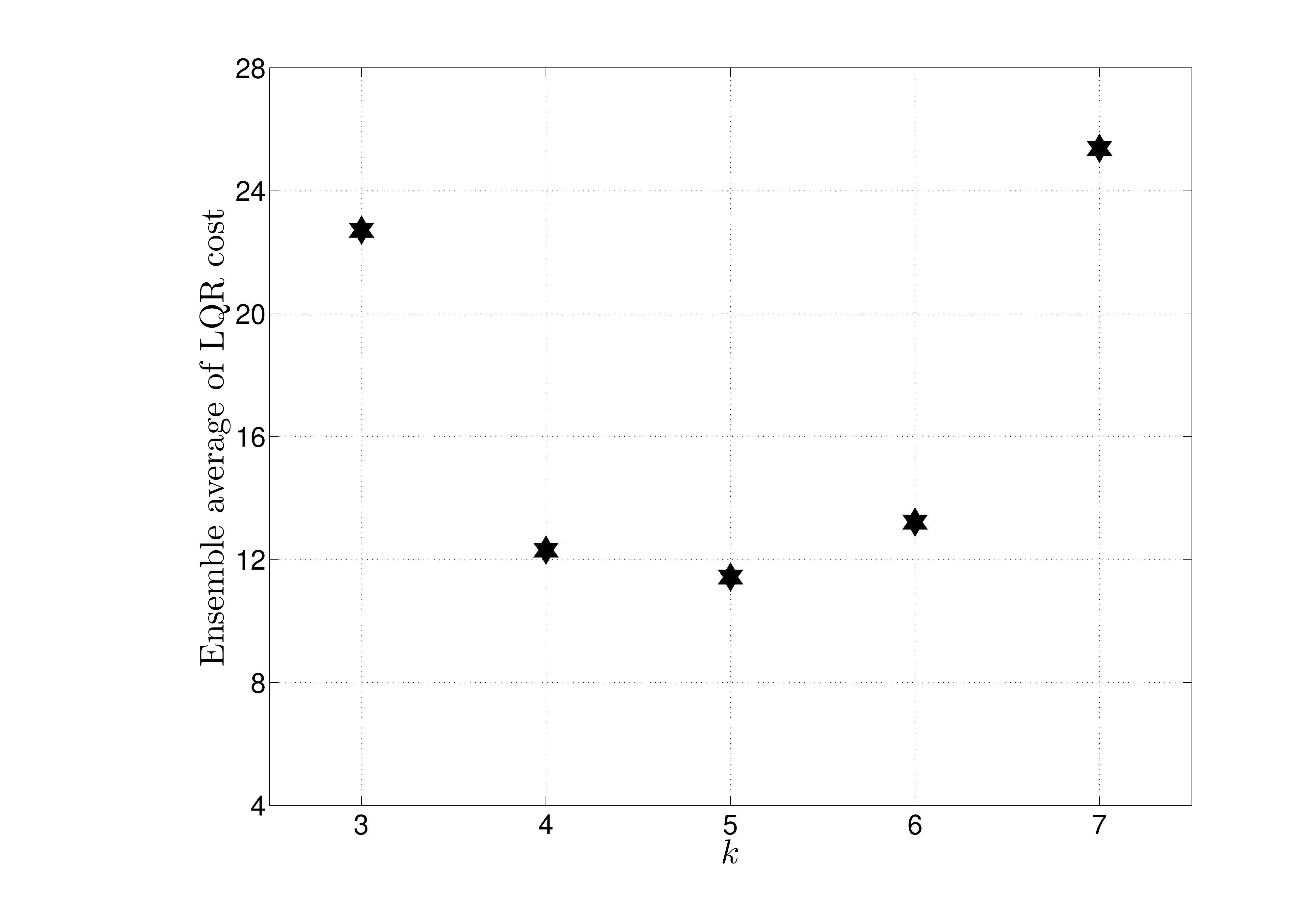}\label{fig: 3DimSysLQR}}
\caption{The best choice of the rate is $R = 5/15 = 0.33$}
\end{figure}
\bibliographystyle{IEEEbib}
\bibliography{IEEETAC}
\appendix

\subsection{Proof of Theorem \ref{thm: Toeplitz}}
\label{sec: proofToeplitz}

We will begin with some preliminary observations.
\begin{lem}[\cite{Kakhaki}]
 \label{lem: subspace}
Let $V$ be an $m-$dimensional vector space over $\GF_2$ and define a probability function over $V$ such that, for each $v\in V$, $P(v) = p^{\|v\|}(1-p)^{m-\|v\|}$.
If $U$ is an $\ell-$dimensional subspace of $V$, then
\begin{align*}
 P(U) \leq \max(p,1-p)^{m-\ell}
\end{align*}
\end{lem}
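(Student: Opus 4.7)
My plan is to prove the bound by induction on $m$, peeling off one coordinate at a time. The target, with $\pbar = \min(p, 1-p)$, is $P(U) \leq (1-\pbar)^{m-\ell}$. The base case $m = 0$ (so $U = \{0\}$, $\ell = 0$) is trivial. For the inductive step, given an $\ell$-dimensional subspace $U \subseteq \GF_2^m$, I condition on the first coordinate and split into three cases based on how $U$ relates to the first standard basis vector $e_1$. Throughout, I write $U_0 = U \cap \{x_1 = 0\}$, denote by $P_{2:m}$ the iid Bernoulli($p$) measure on $\GF_2^{m-1}$, and write $x|_{2:m}$ for the projection of $x$ onto its last $m-1$ coordinates.

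If $e_1 \in U$, then $U$ is closed under flipping the first bit, so its elements pair as $\{x, x+e_1\}$ with $x_1 = 0$; each pair's total probability is $P_{2:m}(x|_{2:m})$, giving $P(U) = P_{2:m}(U_0|_{2:m})$. Since $U_0$ has dimension $\ell-1$, so does its projection, and the inductive hypothesis yields $P(U) \leq (1-\pbar)^{(m-1)-(\ell-1)} = (1-\pbar)^{m-\ell}$. If $e_1 \notin U$ and $U \subseteq \{x_1 = 0\}$, i.e., $U = U_0$, then $P(U) = (1-p)P_{2:m}(U|_{2:m})$ for the $\ell$-dimensional projection $U|_{2:m}$, and the inductive hypothesis combined with $(1-p) \leq (1-\pbar)$ (which holds since $\pbar \leq p$) gives the bound.

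The remaining case is the interesting one: $e_1 \notin U$ and some $u_1 \in U$ has $(u_1)_1 = 1$. Then $U_0$ has dimension $\ell-1$ and $U = U_0 \sqcup (u_1 + U_0)$. Let $W = U_0|_{2:m}$ and $w_1 = (u_1)|_{2:m}$. Since $e_1 \notin U$, $w_1 \notin W$: otherwise $(0, w_1) \in U_0 \subseteq U$ would force $u_1 - (0, w_1) = e_1 \in U$. So $V^* := W + \mathrm{span}(w_1)$ is an $\ell$-dimensional subspace of $\GF_2^{m-1}$ with $V^* = W \sqcup (w_1 + W)$. Writing $a = P_{2:m}(W)$ and $b = P_{2:m}(w_1 + W)$, the inductive hypothesis applied to $V^*$ gives $a + b = P_{2:m}(V^*) \leq (1-\pbar)^{m-1-\ell}$. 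Then $P(U) = (1-p)a + pb$, and maximizing this linear functional under $a,b \geq 0$ and $a + b \leq (1-\pbar)^{m-1-\ell}$ puts all weight on whichever of $a$ or $b$ carries the larger coefficient, yielding $P(U) \leq \max(p, 1-p)(1-\pbar)^{m-1-\ell} = (1-\pbar)^{m-\ell}$.

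The main subtlety is the last case: the coset $w_1 + W$ is not itself a subspace, so the inductive hypothesis does not apply to it directly. The trick is to bundle $W$ and $w_1 + W$ into the genuine subspace $V^*$, reducing the coset estimate to a subspace bound via $a + b = P_{2:m}(V^*)$. The rest is bookkeeping: verifying dimension counts (notably $w_1 \notin W$, which follows from $e_1 \notin U$) and the elementary observation that the worst case of $(1-p)a + pb$ under a sum constraint is exactly $\max(p, 1-p) = 1 - \pbar$ times the budget.
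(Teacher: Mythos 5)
Your proof is correct, and it follows a genuinely different route from the paper's. The paper's argument is non-inductive: since WLOG $p \le 1/2$, one picks a set $E'$ of $m-\ell$ standard basis vectors complementing $U$ (so $V = U \oplus \mathrm{span}(E')$), observes that for each coset representative $u' \in \mathrm{span}(E')$ the triangle inequality for Hamming weight gives $P(U+u') \ge P(U)\bigl(\tfrac{p}{1-p}\bigr)^{\|u'\|}$, and then sums over all $2^{m-\ell}$ cosets of $U$ in $V$ to get $1 = P(V) \ge P(U)\sum_{i=0}^{m-\ell}\binom{m-\ell}{i}\bigl(\tfrac{p}{1-p}\bigr)^i = P(U)(1-p)^{-(m-\ell)}$ by the binomial theorem. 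This is a one-shot global argument whose key geometric input is that the complement can be spanned by unit vectors. Your argument instead inducts on $m$, peeling off a coordinate, and its key trick is the ``bundling'' in the third case: the coset $w_1+W$ is not a subspace, so you cannot apply the hypothesis to it directly, but packaging $W \sqcup (w_1+W)$ into the subspace $V^*$ converts the coset estimate into a subspace estimate; the final linear-programming step $(1-p)a + pb \le \max(p,1-p)(a+b)$ then closes the induction. Both proofs localize the ``waste'' to the $m-\ell$ directions transverse to $U$; the paper's version extracts all of it at once via the coset sum, while yours extracts one factor of $1-\pbar$ per peeled coordinate. The paper's proof is shorter and avoids case analysis; yours is more elementary in that it never needs to choose a standard-basis complement of $U$ (the Steinitz-style fact that such an $E'$ exists), trading that for three routine cases. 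Either would serve.
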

\begin{proof}
Suppose $p\leq 1/2$. The proof for the other case is analogous. Let $E$ be the set of unit vectors, i.e., $E = \left\{v\in V\,|\,\|v\|=1\right\}$. Then there is a subset, $E'$, of $E$ with $m-\ell$ unit vectors such that $V = U\oplus span(E')$ and $U\cap span(E') = \{0\}$. Let $u'\in span(E')$, then 
\begin{align*}
 P\bra{U + u'} = \sum_{u\in U}P(u+u') \geq \sum_{u\in U}P(u)\bra{\frac{p}{1-p}}^{\|u'\|} = P(U)\bra{\frac{p}{1-p}}^{\|u'\|}
\end{align*}
Note that for distinct $u_1',u_2'\in span(E')$, $(U+u_1')\cap(U + u_2') = \emptyset$. Also note that $\|u'\| \leq m-\ell$ $\forall$ $u'\in span(E')$.
\begin{align*}
 1 = P(V) = P\bra{\bigcup_{u'\in span(E')}(U+u')} \geq \sum_{u'\in span(E')}P(U)\bra{\frac{p}{1-p}}^{\|u'\|}
\end{align*}
Observe that there are exactly $\binom{m-\ell}{i}$ vectors in $span(E')$ with Hamming weight $i$. So, we have
\begin{align*}
 1 \geq P(U)\sum_{i=0}^{m-\ell}\binom{m-\ell}{i}\bra{\frac{p}{1-p}}^{i} = P(U)\bra{\frac{1}{1-p}}^{m-\ell}
\end{align*}
This completes the proof.
\end{proof}

\begin{rem}
\label{rem: rank}
The Toeplitz parity check matrix $\mathbb{H}_{n,R}^{TZ}$ is full rank if and only if $H_1$ is full rank. This is why we fix $H_1$ to be a full rank matrix in the definition of the Toeplitz ensemble.
\end{rem}

Recall that we choose the entries of $H_i$ to be i.i.d Bernoulli($p$) for $i \geq 2$. Also suppose $p \leq 1/2$. The results for $p\geq 1/2$ are obtained by replacing $p$ with $1-p$ in the subsequent analysis. Consider an arbitrary decoding instant, $t$. Since $\minW{d}{t} = \minW{d}{t'}$ and $\numW{w}{d}{t} = \numW{w}{d}{t'}$ for all $t,t'$, we will drop these superscripts and write $\minW{d}{t} = \minWi{d}$ and $\numW{w}{d}{t} = \numWi{w}{d}$. Let $c = [c_1^T,\ldots,c_t^T]^T$, where $c_i\in\{0,1\}^n$, be a fixed binary word such that $c_{\tau < t-d+1}=0$ and $c_{t-d+1}\neq 0$. Also, let $\mathbb{H}_{n,R}$ be drawn from the ensemble $\mathbb{TZ}_p$ and let $\mathbb{H}_{n,R}^t$ denote the $\overline{n}t\times nt$ principal minor of $\mathbb{H}_{n,R}$. We examine the probability that $c$ is a codeword of $\mathbb{H}_{n,R}^t$, i.e., $P\bra{\mathbb{H}_{n,R}^tc=0}$. Now, since $c_{\tau<t-d+1}=0$, $\mathbb{H}_{n,R}^tc=0$ is equivalent to
\begin{align}
\label{eq: toeplitzproof1}
\left[\begin{array}{cccc}
	\pc_1    & 0 	    & \ldots      &  \ldots\\
	\pc_2    & \pc_1         & 0 	     & \ldots \\
	\vdots & \vdots      & \ddots     & \vdots\\
	\pc_d     & \pc_{d-1} & \ldots       & \pc_1
\end{array}\right]\left[\begin{array}{c}c_{t-d+1}\\c_{t-d+2}\\\vdots\\c_t\end{array}\right]=\left[\begin{array}{c}0\\0\\\vdots\\0\end{array}\right]
\end{align}
Note that \eqref{eq: toeplitzproof1} can be equivalently written as follows
\begin{align}
\label{eq: cute1}
\left[\begin{array}{cccc}
	C_{t-d+1}    & 0 	    & \ldots      &  \ldots\\
	C_{t-d+2}   & C_{t-d+1}       & 0 	     & \ldots \\
	\vdots & \vdots      & \ddots     & \vdots\\
	C_t   & C_{t-1} & \ldots       & C_{t-d+1}
\end{array}\right]\left[\begin{array}{c}h_1\\h_2\\\vdots\\h_d\end{array}\right]=\left[\begin{array}{c}0\\0\\\vdots\\0\end{array}\right]
\end{align}
where $h_i = \text{vec}(\pc_i^T)$, i.e., $h_i$ is a $n\overline{n}\times 1$ column obtained by stacking the columns of $\pc_i^T$ one below the other, and $C_i \in\{0,1\}^{\overline{n}\times n\overline{n}}$ is obtained from $c_i$ as follows. 
\begin{align}
\label{eq: cute2}
 C_i =  \left[\begin{array}{cccc}
	c_i^T    & 0 	    & \ldots      &  \ldots\\
	0   & c_i^T       & 0 	     & \ldots \\
	\vdots & \vdots      & \ddots     & \vdots\\
	0  & 0 & \ldots       & c_i^T
\end{array}\right]
\end{align}
Since $H_1$ is fixed, we will rewrite \eqref{eq: cute1} as 
\begin{align}
 \label{eq: cute3}
\underbrace{ \left[\begin{array}{cccc}
	C_{t-d+1}    & 0 	    & \ldots      &  \ldots\\
	C_{t-d+2}   & C_{t-d+1}       & 0 	     & \ldots \\
	\vdots & \vdots      & \ddots     & \vdots\\
	C_{t-1}   & C_{t-2} & \ldots       & C_{t-d+1}
\end{array}\right]}_{\triangleq C}\underbrace{\left[\begin{array}{c}h_2\\h_3\\\vdots\\h_d\end{array}\right]}_{\triangleq h} = \left[\begin{array}{c}C_{t-d+2}\\C_{t-d+3}\\\vdots\\C_t\end{array}\right]h_1,\,\,\,C_{t-d+1}h_1 = 0
\end{align}
Since $c_{t-d+1} \neq 0$, $C_{t-d+1}$ has full rank $\overline{n}$ and consequently $C$ has full rank $(d-1)\overline{n}$. Since $C$ is an $(d-1)\overline{n}\times(d-1)n\overline{n}$ matrix, its null space has dimension $(d-1)n\overline{n} - (d-1)\overline{n}$. For \eqref{eq: cute3} to hold, $h$ must lie in an $(d-1)n\overline{n} - (d-1)\overline{n}$ dimensional flat which is contained in an $(d-1)n\overline{n} - (d-1)\overline{n} + 1$ dimensional subspace. Using Lemma \ref{lem: subspace}, we have
\begin{align}
\label{eq: toeplitzproof2_5}P(\mathbb{H}_{n,R}^tc = 0) &\leq (1-p)^{\overline{n}(d-1)-1}\\
\implies P\bra{\minWi{d} < \alpha nd} &\leq (1-p)^{\overline{n}(d-1)-1}\sum_{w'\leq \alpha nd}\binom{nd}{w'}\nonumber\\
 &\leq (1-p)^{\overline{n}(d-1)-1}2^{ndH(\alpha)}\nonumber\\
\label{eq: toeplitzproof2}& = \eta2^{-nd\bra{(1-R)\log_2(1/(1-p)) - H(\alpha)}}
\end{align}
where $\eta = (1-p)^{-\overline{n}-1}$. Similarly,
\begin{align}
 \label{eq: teoplitzproof3}
P\bra{\numWi{w}{d} > 2^{\theta w}} &\leq 2^{-\theta w}\E\numWi{w}{d}\nonumber\\
							      &\leq  \eta2^{-\theta w}\binom{nd}{w}(1-p)^{\overline{n}d}\nonumber\\
							      &\leq \eta2^{-nd\bra{\theta w/nd - H(w/nd) + (1-R)\log_2(1/(1-p))}}
\end{align}
For convenience, define
\begin{align*}
 \delta_1 &= (1-R)\log_2(1/(1-p)) - H(\alpha)\\
 \delta_{2,w} &= \theta \frac{w}{nd} - H\bra{\frac{w}{nd}} + (1-R)\log_2(1/(1-p))
\end{align*}
We need to choose $\theta$ such that $\delta_{2,w} > \delta > 0$ for all $\alpha \leq \frac{w}{nd} \leq 1$. Now, define
\begin{align}
 \theta^* = \max_{x\geq \alpha}\frac{H(x) - (1-R)}{x}
\end{align}
Then for each $\theta > \theta^*$, there is a $\delta > 0$ such that $\delta_{2,w} > \delta$ for all $\alpha nd \leq w \leq nd$. A simple calculation gives $\theta^* = \log_2\bra{\frac{1}{2^{1-R}-1}}$. For such a choice of $\theta > \theta^*$, continuing from \eqref{eq: teoplitzproof3}, we have
\begin{align}
\label{eq: toeplitzproof4}
 P\bra{\exists\,\alpha nd\leq w\leq nd\,\,\ni\,\,\numWi{w}{d} > 2^{\theta w}} \leq nd2^{-nd\delta}
\end{align}
for some $\delta' > 0$. For some fixed $d_o$ large enough, applying a union bound over $d\geq d_o$ to \eqref{eq: toeplitzproof2} and \eqref{eq: toeplitzproof4}, we get
\begin{align}
 \label{eq: toeplitzproof5}
P\bra{\exists\,\,d\geq d_o\,\,\ni\,\,\minWi{d}<\alpha nd\text{ or }\numWi{w}{d}>2^{\theta w}} \leq 2^{-\Omega(nd_o)}
\end{align}

\def\tFo{\tilde{F}_o}
\def\tHo{\tilde{H}_o}
\subsection{Proofs of Theorems \ref{thm: cuboidalThm} and \ref{thm: cuboidalThmVec}}
\label{sec: proofCuboidal}

\subsubsection{Proof of Theorem \ref{thm: cuboidalThm}}
The analysis will proceed in two steps. We will first determine a sufficient condition on the number of bits per measurement, $nR$, that are required to track \eqref{eq: sysmodel} when these bits are available error free. We will then determine the anytime exponent $n\beta$ needed in decoding these source bits when they are communicated over a noisy channel. 

Let $\Delta_{t|\tau}\triangleq x_{max,t|\tau}-x_{min,t|\tau}$ be the uncertainty in $x_t$ using $\{b_{\tau'}\}_{\tau'\leq \tau}$, i.e., quantized measurements up to time $\tau$. For convenience, let $\Delta_t \equiv \Delta_{t|t-1}$. Then, the time update is given by the following Lemma.
\begin{lem}[Time Update]
\label{lem: cubTimeUpdate}
The time update relating $\Delta_{t+1}$ and $\Delta_{t|t}$ is given by $\Delta_{t+1} = \Fbar\Delta_{t|t} + W\mones$
\end{lem}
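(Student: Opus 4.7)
The plan is to prove the time update directly from the state recursion $x_{t+1} = F x_t + w_t$ (since $u_t = 0$ under the tracking setup), combined with the two hypercuboidal bounds that are assumed to hold at time $t$, namely $x_{\min,t|t} \leq x_t \leq x_{\max,t|t}$ and the a-priori noise bound $-\tfrac{W}{2}\mones \leq w_t \leq \tfrac{W}{2}\mones$. The goal is to compute the tightest axis-aligned box containing $x_{t+1}$ that is implied by these bounds and to show that its side lengths form the vector $\Fbar \Delta_{t|t} + W\mones$.

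First I would write the $i$-th component of $x_{t+1}$ as
\begin{align*}
x_{t+1}^{(i)} = \sum_{j=1}^{m_x} F_{ij}\, x_t^{(j)} + w_t^{(i)},
\end{align*}
and maximize (respectively minimize) each term separately over the allowed ranges. Because the ranges are intervals and each term is linear in one coordinate, the extrema are attained at the endpoints: when $F_{ij} \geq 0$, the maximum over $x_t^{(j)}$ is $F_{ij}\, x_{\max,t|t}^{(j)}$ and the minimum is $F_{ij}\, x_{\min,t|t}^{(j)}$, while when $F_{ij} < 0$ the roles of the endpoints swap. The maximum of $w_t^{(i)}$ is $W/2$ and the minimum is $-W/2$. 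This yields explicit component-wise formulas for $x_{\max,t+1|t}^{(i)}$ and $x_{\min,t+1|t}^{(i)}$.

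Subtracting the two expressions, the contribution of coordinate $j$ to $\Delta_{t+1}^{(i)} = x_{\max,t+1|t}^{(i)} - x_{\min,t+1|t}^{(i)}$ collapses to $|F_{ij}|\,\Delta_{t|t}^{(j)}$ in both sign cases (the factor $F_{ij}$ comes with a matching sign flip in the two endpoints), and the noise contributes $W/2 - (-W/2) = W$. Summing over $j$ gives
\begin{align*}
\Delta_{t+1}^{(i)} = \sum_{j=1}^{m_x} \overline{F}_{ij}\, \Delta_{t|t}^{(j)} + W,
\end{align*}
which, stacked into a vector, is exactly $\Delta_{t+1} = \Fbar\, \Delta_{t|t} + W\mones$.

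I do not anticipate any real obstacle: the argument is a routine interval-arithmetic computation, and the only mild subtlety is keeping track of the sign of $F_{ij}$ when identifying the maximizing and minimizing endpoints, which is precisely what produces $\overline{F}$ rather than $F$ in the final recursion. The one thing worth flagging for the reader is that the bound obtained is in fact tight (the resulting box is the smallest axis-aligned hyperrectangle containing all $F x_t + w_t$ consistent with the assumed interval constraints), so no slack is introduced at the time-update step and the anytime-exponent requirement $\beta_n > \tfrac{2}{n}\log_2 \rho(\Fbar)$ that appears in Theorem~\ref{thm: cuboidalThm} is genuinely driven by $\rho(\Fbar)$.
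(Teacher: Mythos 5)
Your proof is correct, and it takes a slightly more general route than the paper's. The paper's own proof exploits the companion-form structure of $F$ that was fixed earlier in the section (row $i$ has $-a_i$ in column $1$ and a $1$ in column $i{+}1$), and simply writes out the resulting component recursions $\Delta_{t+1}^{(i)} = |a_i|\Delta_{t|t}^{(1)} + \Delta_{t|t}^{(i+1)} + W$ for $i < m_x$ and $\Delta_{t+1}^{(m_x)} = |a_{m_x}|\Delta_{t|t}^{(1)} + W$, then observes that these stack up to $\Fbar\Delta_{t|t} + W\mones$. Your argument instead runs the interval-arithmetic computation for an arbitrary matrix $F$: for each row, split the sum according to the sign of $F_{ij}$, take the endpoint extrema, and subtract to get $\sum_j |F_{ij}|\Delta_{t|t}^{(j)} + W$. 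The two derivations yield the same recursion; yours has the advantage of being form-agnostic (and of making the tightness of the hypercuboidal time update explicit, which is a fair observation to flag), while the paper's is a faster check given that the canonical form is already assumed throughout Section \ref{sec: sufficient}. Either way, no gap.
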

\begin{IEEEproof}
 From the system dynamics in \eqref{eq: sysmodel}, the following is immediate
\begin{align*}
 \Delta_{t+1}^{(i)} &= W + \max\left\{\left|\pm a_i\Delta_{t|t}^{(1)} + \Delta_{t|t}^{(i+1)}\right|,\left|\Delta_{t|t}^{(i+1)}\right|,\left|a_i\Delta_{t|t}^{(1)}\right|\right\}\\
 &= |a_i|\Delta_{t|t}^{(1)} + \Delta_{t|t}^{(i+1)} + W,\,\,\,i\leq m-1\\
\Delta_{t+1}^{(m)} &= |a_m|\Delta_{t|t}^{(1)} + W
\end{align*}
In short, the above equations amount to $\Delta_{t+1} = \Fbar\Delta_{t|t} + W\mones$.
\end{IEEEproof}

Towards the measurement update, the observer simply quantizes the measurements $y_t$ according to a $2^{nR}-$regular lattice quantizer with bin width $\delta$, i.e., the quantizer is defined by $Q:\Re\mapsto\{0,1,\ldots,2^{nR}-1\}$, where $Q(x) = \lfloor\frac{x}{\delta}\rfloor\text{ mod }2^{nR}$. In order for this to work, we need $\delta 2^{nR} \geq \Delta_t^{(1)}$ for any time $t$. Assuming that the rate, $R$, is large enough, we will first find the steady state value of the recursion for $\Delta_t$, which we then use to determine $R$. At each time $t$, the observer can communicate the measurement $y_t$ to within an uncertainty of $\delta$, i.e., the estimator knows that the measurement lies in an interval of width $\delta$. Adding to this the effect of the observation noise, $-\frac{V}{2} \leq v_t \leq \frac{V}{2}$, the estimator knows $x_t^{(1)}$ to within an uncertainty of $\Delta_{t|t}^{(1)} = \delta + V$. Note that $\Delta_{t|t}^{(i)} = \Delta_{t}^{(i)}$ for $i\neq 1$. Combining this observation with Lemma \ref{lem: cubTimeUpdate}, it is straightforward to see that $\Delta_{t}$ converges, to say $\Delta_{tu}$, in exactly $m_x$ time steps, i.e., $\Delta_{t} = \Delta_{tu}$ for all $t \geq m_x$. The subscript \lq tu\rq\phantom{} in $\Delta_{tu}$ denotes \lq time update\rq\phantom{}. The following result is now immediate.
\begin{lem}[Steady State value of $\Delta_t$]
\label{lem: determiningDelta}
$\Delta_{tu} = (\delta+V)L_ua + WL_u\mones$, where $a = [|a_1|,\ldots,|a_m|]^T$ and $L_u = [\ell_{ij}]_{1\leq i,j\leq m}$ with $\ell_{ij} = \mathbb{I}_{i\leq j}$.
\end{lem}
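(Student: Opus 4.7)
The plan is to combine the measurement update (which resets $\Delta^{(1)}_{t|t}$ to $\delta+V$ and leaves the other components untouched) with the time update given in Lemma \ref{lem: cubTimeUpdate} to obtain an explicit component-wise recursion for $\Delta_t$, and then observe that this recursion has a ``triangular'' structure that makes it terminate after exactly $m_x$ steps in a fixed point that we can write in closed form.

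Concretely, since $H=[1,0,\ldots,0]$, the quantized measurement plus the bound $|v_t|\le V/2$ yields $\Delta_{t|t}^{(1)}=\delta+V$, while $\Delta_{t|t}^{(i)}=\Delta_t^{(i)}$ for $i\ge 2$. Plugging this into $\Delta_{t+1}=\Fbar\Delta_{t|t}+W\mones$ and using the canonical form of $\Fbar$ yields
\begin{align*}
\Delta_{t+1}^{(i)} &= |a_i|(\delta+V) + \Delta_t^{(i+1)} + W, \quad 1\le i\le m_x-1,\\
\Delta_{t+1}^{(m_x)} &= |a_{m_x}|(\delta+V) + W.
\end{align*}
The last equation is already at its fixed point for all $t\ge 1$. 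Substituting it back, $\Delta_{t+1}^{(m_x-1)}$ reaches its fixed point for all $t\ge 2$, and so on. A straightforward induction on the index $i$, starting from $i=m_x$ and decreasing, shows that $\Delta_t^{(i)}$ is constant in $t$ for all $t\ge m_x-i+1$, hence $\Delta_t=\Delta_{tu}$ for every $t\ge m_x$.

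It remains to verify that the fixed point $\Delta_{tu}$ equals $(\delta+V)L_ua+WL_u\mones$. Solving the recursion from the bottom up gives
\[
\Delta_{tu}^{(i)} \;=\; (\delta+V)\sum_{j=i}^{m_x}|a_j| \;+\; (m_x-i+1)W.
\]
Since $(L_u a)_i=\sum_{j\ge i}|a_j|$ and $(L_u\mones)_i=m_x-i+1$ by the definition of $L_u$, the two expressions match component-wise, which is exactly the claim. The argument is essentially just unrolling the cascade; the only mild subtlety is making the boundedness assumption explicit, namely that the rate $R$ is large enough that $\delta\cdot 2^{nR}\ge \Delta_t^{(1)}$ at every step, which is ensured by choosing $\delta$ consistent with the steady-state bound $\Delta_{tu}^{(1)}=(\delta+V)\sum_i|a_i|+m_xW$; this feasibility check is what ties Lemma~\ref{lem: determiningDelta} into the sufficient-rate condition $nR>\log_2\sum_i|a_i|$ of Theorem~\ref{thm: cuboidalThm}, but is not required for the lemma's statement itself.
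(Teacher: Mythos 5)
Your proof is correct and follows the same route as the paper: combine the measurement update $\Delta_{t|t}^{(1)}=\delta+V$, $\Delta_{t|t}^{(i)}=\Delta_t^{(i)}$ $(i\neq 1)$ with the time update of Lemma~\ref{lem: cubTimeUpdate}, observe the resulting cascade settles in $m_x$ steps, and read off the fixed point. The paper states this as ``straightforward'' and ``immediate'' without writing out the component-wise recursion, the backward induction on $i$, or the identifications $(L_u a)_i=\sum_{j\ge i}|a_j|$ and $(L_u\mones)_i=m_x-i+1$; your version simply makes that implicit calculation explicit and is consistent with it.
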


Now, we need to go back and calculate $R$. So we just need $\delta2^{nR}\geq \max\left\{\Delta_0^{(1)},\Delta_1^{(1)},\ldots,\Delta_{m_x}^{(1)}\right\}$. Further, a simple calculation gives $\lim_{\delta\rightarrow\infty}\frac{\Delta_i^{(1)}}{\delta} = |a_1| + \ldots + |a_i|$. The minimum rate is thus given by $\frac{1}{n}\log_2\sum_{i=1}^m|a_i|$ and this completes the proof Theorem \ref{thm: cuboidalThm}. 

\subsubsection{Proof of Theorem \ref{thm: cuboidalThmVec}}
\label{sec: proofCuboidalVec}

The proof is very similar to that of Theorem \ref{thm: cuboidalThm}. The observations are quantized as follows. At any time, for $1\leq i\leq q$, the $i^{th}$ component of the measurement vector is quantized using a $2^{nR_i}-$regular lattice quantizer with bin width $\delta_i$. The remaining components of the measurement vector are ignored. The overall rate, $R$, is then given by $R = R_1+R_2\ldots+R_q$. The time update again is given by $\Delta_{t+1} = \overline{F}\Delta_{t|t} + W\mones$. The limiting values of $\{R_i\}_{i=1}^q$ are obtained by letting $\delta_1\rightarrow\infty$ and $\frac{\delta_i}{\delta_{i+1}}\rightarrow\infty$. An argument similar to the one in the previous section gives the following threshold, $R_i \geq \frac{1}{n}\max\left\{0,\log\bra{|a_{i,1}|+|a_{i,2}|+\ldots+|a_{i,\ell_i}|}\right\}$.

\subsection{The Minimum Volume Ellipsoid}
\begin{lem}[Theorem 6.1 \cite{Guler}]
\label{lem: minVol}
 The minimum volume ellipsoid $\mE(\hat{P},c)$ covering 
\begin{align*}
\left\{x\in\Re^m| x\in\mE(P,0),\gamma\sqrt{h^TPh}\leq \langle h,x\rangle\leq\delta\sqrt{h^TPh}\right\} 
\end{align*}
where $|\delta| \geq |\gamma|$, is given by
\begin{align}
 \hat{P} = bP - (b-a)\frac{Phh^TP}{h^TPh},\,\,\,c = \xi \frac{Ph}{\sqrt{h^TPh}}
\end{align}
where 
\begin{enumerate}
 \item If $\gamma\delta < -\frac{1}{m}$, then $\xi = 0$, $a = b = 1$
 \item If $\gamma+\delta = 0$ and $\gamma\delta > -\frac{1}{m}$, then
\begin{align*}
 \xi = 0,\,\,a = m\delta^2,\,\,b = \frac{m(1-\delta^2)}{m-1}
\end{align*}
 \item If $\gamma+\delta\neq 0$ and $\gamma\delta > -\frac{1}{m}$, then
\begin{align*}
 \xi &= \frac{m(\gamma+\delta)^2 + 2(1+\gamma\delta) - \sqrt{D}}{2(m+1)(\gamma+\delta)}\\
a &= m(\xi-\gamma)(\delta-\xi),\,\,b = \frac{a-a\gamma^2}{a-(\xi-\gamma)^2}\\
\text{where }D&= m^2(\delta^2 - \gamma^2)^2 + 4(1-\gamma^2)(1-\delta^2)
\end{align*}
\end{enumerate}
\end{lem}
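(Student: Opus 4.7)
The plan is to reduce this to a canonical optimization problem in standard position and then solve it with Lagrange multipliers, with the three listed cases arising naturally from the KKT conditions.

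First I would perform the change of variables $y = P^{-1/2}x$ and let $\tilde h = P^{1/2}h/\sqrt{h^T P h}$, which is a unit vector. Under this map the ellipsoid $\mE(P,0)$ becomes the unit ball $B$, the slab becomes $\{y : \gamma \leq \langle \tilde h, y\rangle \leq \delta\}$, and volumes of ellipsoids transform by a fixed factor $\det(P)^{1/2}$. So the problem becomes: find the minimum-volume ellipsoid covering $K = B \cap \{\gamma \leq \langle \tilde h, y\rangle \leq \delta\}$. By rotational symmetry of $K$ about the $\tilde h$ axis, a standard averaging argument (if $\hat E$ covers $K$, then so does any rotation of $\hat E$ around $\tilde h$, and the John ellipsoid of the intersection of all these rotations still covers $K$ and has no larger volume) shows the optimizer may be taken to be an ellipsoid of revolution about $\tilde h$. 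Thus I parametrize it as $\mE(\hat P, \xi \tilde h)$ with $\hat P = bI - (b-a)\tilde h\tilde h^T$ for scalars $a, b > 0$, whose volume is proportional to $a^{1/2}b^{(m-1)/2}$, matching the stated form under the inverse transformation.

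Next I would identify the binding constraints. A point $y$ on the boundary of $B$ with $\langle \tilde h, y\rangle = s$ lies on the circle $y = s\tilde h + \sqrt{1-s^2}\,u$ for $u \perp \tilde h$, $\|u\|=1$; plugging into the ellipsoidal inequality gives the scalar condition $(s-\xi)^2/a + (1-s^2)/b \leq 1$. Since $K$ is contained in $\hat E$ iff every such circle for $s \in [\gamma,\delta]$ is, and this is a concave-in-$s^2$ / quadratic-in-$s$ condition, the binding $s$ values are only the endpoints $s = \gamma$ and $s = \delta$ (or an interior point only when the quadratic is degenerate). I would then write the Lagrangian with multipliers $\lambda_\gamma, \lambda_\delta \geq 0$ for these two inequalities and minimize $\tfrac12\log a + \tfrac{m-1}{2}\log b$ subject to them.

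The KKT stationarity conditions in $\xi, a, b$ give three scalar equations, and complementary slackness splits into the stated cases. When $\gamma\delta < -1/m$, the slab is wide enough that $B$ itself is already optimal, giving $\xi = 0, a = b = 1$. The symmetric case $\gamma + \delta = 0$ with $\gamma\delta > -1/m$ forces $\xi = 0$ by symmetry and reduces to two active equality constraints, giving $a = m\delta^2$ and $b = m(1-\delta^2)/(m-1)$ from a direct elimination. In the generic asymmetric case both endpoint constraints are active; solving the resulting $2\times 2$ system (after eliminating one of the multipliers) yields a quadratic in $\xi$, and selecting the root that keeps $\hat P \succ 0$ produces the formula with $D = m^2(\delta^2-\gamma^2)^2 + 4(1-\gamma^2)(1-\delta^2)$. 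Finally I would untransform via $x = P^{1/2}y$ to recover $\hat P = bP - (b-a)Phh^TP/(h^TPh)$ and $c = \xi \,Ph/\sqrt{h^TPh}$ as stated.

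The main obstacle I expect is the case split: showing cleanly that the symmetry reduction to an ellipsoid of revolution is loss-free for every $m \geq 2$, and then verifying that the three KKT regimes exhaust all sign/magnitude configurations of $(\gamma, \delta)$ and meet continuously at the transitions $\gamma\delta = -1/m$ and $\gamma + \delta = 0$. The algebra for the third case (the correct branch of the quadratic and the nonnegativity of $D$) is where the bookkeeping is heaviest, but once $D \geq 0$ is confirmed and the root is chosen to satisfy $0 < a \leq b$, the remaining verification that the ellipsoid indeed covers $K$ and is volume-minimal follows from the KKT sufficiency for this (convex-in-$\log a, \log b$) program.
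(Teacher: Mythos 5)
The paper does not prove this lemma: it imports it verbatim from an external reference, as signaled by the bracketed attribution ``[Theorem~6.1 \cite{Guler}]'' in the statement. The only text the paper attaches to it is a short remark on what to do when $|\delta|<|\gamma|$, that $\hat P$ is positive semidefinite, and that $\gamma\leq\xi\leq\delta$. So there is no in-paper proof to compare against, and your proposal is necessarily a from-scratch derivation of a cited result.

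As such a derivation, your strategy (normalize to the unit ball, reduce to ellipsoids of revolution by symmetry, then do a one-parameter KKT analysis) is the standard and correct route, but two steps need repair. First, the averaging argument should invoke the \emph{L\"owner} (minimum-volume circumscribing) ellipsoid of $\bigcap_\theta R_\theta\hat E$, not the John (maximum-volume inscribed) ellipsoid; with that substitution the step is sound, since $\bigcap_\theta R_\theta\hat E\subseteq\hat E$ gives the volume bound and uniqueness of the L\"owner ellipsoid transfers the rotational invariance. Second, the claim that the covering constraint is binding only at $s=\gamma$ and $s=\delta$ is not automatic: $(s-\xi)^2/a+(1-s^2)/b\leq 1$ is a quadratic in $s$ with leading coefficient $1/a-1/b$ (and a linear cross term when $\xi\neq 0$, so the ``concave-in-$s^2$'' phrasing does not apply), and it is maximized at the endpoints of $[\gamma,\delta]$ only when $a\leq b$. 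Your write-up uses the endpoint-binding assumption to derive the KKT system and then checks $a\leq b$ from the resulting formulas, which is circular. You should either argue a priori that the optimal ellipsoid of revolution must have $a\leq b$ (e.g., by showing one could otherwise shrink $b$ toward $a$ while preserving coverage and decreasing volume), or split on the sign of $1/a-1/b$ and rule out the interior-maximum branch by feasibility. With those two fixes the outline is a legitimate proof of the Güler result that the paper takes as given.
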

If $|\delta| < |\gamma|$, change $x$ to $-x$ and apply the above result. And it is easy to verify that $\hat{P}$ is indeed positive semidefinite. Also, a quick calculation shows that $\gamma \leq \xi\leq \delta$. This confirms the intuition that the center of the minimum volume ellipsoid lies within the slab. 

\subsection{Proof of Theorem \ref{thm: ellipsoidalThm}}
\label{sec: proofEllipsoidal}

The proof is in the same spirit as that of Theorem \ref{thm: cuboidalThm}. We will first determine a sufficient condition on the number of bits per measurement, $nR$, that are required to track \eqref{eq: sysmodel} when these bits are available error free. We will then determine the anytime exponent $n\beta$ needed in decoding these source bits when they are communicated over a noisy channel. 

Consider the time update in \eqref{eq: ellipTU}. Let $P_t^{ij}$ denote the $(i,j)^{th}$ element of $P_t$, then the time update implies
\begin{subequations}
\begin{align}
\label{eq: proofEllip1} P_{t+1}^{ii} &= (1+\epsilon')\bra{a_i^2P_{t|t}^{11} + P_{t|t}^{i+1,i+1} - a_iP_{t|t}^{1,i+1} - a_iP_{t|t}^{i+1,1}} + \frac{W^2}{4\epsilon'},\,\,1\leq i\leq m_x-1\\
 P_{t+1}^{m_x,m_x} &= (1+\epsilon')a_{m_x}^2P_{t|t}^{11} + \frac{W^2}{4\epsilon'}
\end{align}
\label{eq: proofEllipTU}
\end{subequations}
Since the matrix $P_{t|t}$ is positive semidefinite, we have $P_{t|t}^{1,i+1} = P_{t|t}^{i+1,1}$ and $\bra{P_{t|t}^{1,i+1}}^2 \leq P_{t|t}^{11}P_{t|t}^{i+1,i+1}$. Using this in \eqref{eq: proofEllip1}, for $1\leq i\leq m_x-1$, we get
\begin{align}
 \label{eq: proofEllip0}P_{t+1}^{ii} \leq (1+\epsilon')\bra{|a_i|\sqrt{P_{t|t}^{11}} + \sqrt{P_{t|t}^{i+1,i+1}}}^2 + \frac{W^2}{4\epsilon'}
\end{align}
This prompts us to bound the recursion \eqref{eq: ellip} by bounding the diagonal elements of $P_t$. Now, considering the measurement update \eqref{eq: ellipMU}, it is easy to see that
\begin{subequations}
 \begin{align}
\label{eq: proofEllip2} P_{t|t}^{11} &= a_tP_t^{11}\\
a_tP_t^{ii} \leq P_{t|t}^{ii} &\leq b_tP_t^{ii}
\end{align}
\end{subequations}
We will first show that $b_t \leq \frac{m_x}{m_x-1}$. 
\begin{lem}
 \label{lem: boundbt}
$b_t \leq \frac{m_x}{m_x-1}$
\end{lem}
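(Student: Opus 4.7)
The plan is to verify the uniform bound $b_t \leq \frac{m_x}{m_x-1}$ by a direct case analysis on the three branches of the minimum volume ellipsoid formula in Lemma \ref{lem: minVol}, applied at each step of the ellipsoidal filter. The first two branches dispose of themselves immediately: in the first branch $b = 1$, and $1 \leq \frac{m_x}{m_x-1}$ since $m_x \geq 2$; in the second branch $b = \frac{m_x(1-\delta^2)}{m_x-1}$, which is bounded above by $\frac{m_x}{m_x-1}$ simply because $\delta^2 \geq 0$. So the substance of the proof lies entirely in the third branch, where $\gamma+\delta \neq 0$ and $\gamma\delta > -\frac{1}{m_x}$.

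For that third branch I would introduce the shorthand $u \triangleq \xi - \gamma \geq 0$ and $v \triangleq \delta - \xi \geq 0$ (the nonnegativity follows from the observation made in Lemma \ref{lem: minVol} that $\gamma \leq \xi \leq \delta$), so that $a = m_x uv$ and
\[
b \;=\; \frac{a(1-\gamma^2)}{a - u^2} \;=\; \frac{m_x v(1-\gamma^2)}{m_x v - u},
\]
after canceling $u$ in the nontrivial case $u > 0$; the boundary case $u = 0$ gives $b = 1-\gamma^2 \leq 1$ directly. Clearing denominators in the desired inequality $b \leq \frac{m_x}{m_x-1}$ reduces it to the equivalent linear-in-$\xi$ statement
\[
\xi - \gamma \;\leq\; (\delta - \xi)\bigl[\,1 + (m_x-1)\gamma^2\,\bigr],
\]
or, solving for $\xi$, to $\xi \leq \frac{\gamma + \delta\bigl[1+(m_x-1)\gamma^2\bigr]}{2+(m_x-1)\gamma^2}$.

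It then remains to verify this bound on $\xi$ using the closed form given in Lemma \ref{lem: minVol},
\[
\xi \;=\; \frac{m_x(\gamma+\delta)^2 + 2(1+\gamma\delta) - \sqrt{D}}{2(m_x+1)(\gamma+\delta)}, \qquad D = m_x^2(\delta^2-\gamma^2)^2 + 4(1-\gamma^2)(1-\delta^2).
\]
The plan is to isolate $\sqrt{D}$ on one side of the target inequality, confirm that both sides have the correct sign so squaring is legitimate, and then check that the resulting polynomial inequality in $(\gamma,\delta,m_x)$ holds on the admissible region $-1 \leq \gamma \leq \delta \leq 1$, $\gamma\delta > -1/m_x$. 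The main obstacle is precisely this last step: it is elementary but fiddly, and the main risks are sign errors when squaring and an explosion of cross terms. I would mitigate both by first exploiting the reflection symmetry $(\gamma,\delta) \mapsto (-\delta,-\gamma)$ (under which $\xi \mapsto -\xi$, $D$, and the target inequality are all invariant) to reduce to $\gamma + \delta \geq 0$, and then expressing everything in the variables $s = \gamma + \delta$ and $p = \gamma\delta$ to collapse the cross terms into a simpler polynomial that is visibly nonnegative on the admissible region.
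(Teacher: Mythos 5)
Your case analysis and algebra are correct as far as they go, and the overall strategy (branch on the three cases of Lemma \ref{lem: minVol}, then reduce case 3 to a linear inequality in $\xi$) is the same one the paper uses. The substantive difference is that the paper makes one extra simplification you skip: since $1-\gamma^2 \leq 1$, it bounds
\[
b = \frac{1-\gamma^2}{\,1 - \tfrac{\xi-\gamma}{m_x(\delta-\xi)}\,} \;\leq\; \frac{1}{\,1 - \tfrac{\xi-\gamma}{m_x(\delta-\xi)}\,},
\]
so the sufficient condition becomes simply $\xi - \gamma \leq \delta - \xi$, i.e.\ $\xi \leq \tfrac{\gamma+\delta}{2}$. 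That is \emph{stronger} than your target $\xi - \gamma \leq (\delta-\xi)\bigl[1+(m_x-1)\gamma^2\bigr]$ (which it implies, since $1+(m_x-1)\gamma^2 \geq 1$), but it pays off: after substituting the closed form for $\xi$, isolating $\sqrt{D}$, and squaring, the paper's inequality collapses to the visibly true identity $(1-m_x^2)(\gamma^2-\delta^2)^2 \leq 0$, with no residual dependence on $m_x$ and $\gamma$ in the way your version has. Your plan of verifying the weaker inequality by the same isolate-and-square route is sound in principle—and your proposed symmetry and $(s,p)$ reductions would help—but the extra factor $1+(m_x-1)\gamma^2$ injects cross terms that will not cancel as cleanly, so you would be doing strictly more work to establish a strictly weaker intermediate fact. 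If you drop $1-\gamma^2$ up front, as the paper does, your own calculation terminates in one line.
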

\begin{proof}
 To prove this, consider the setup of Lemma \ref{lem: minVol} and suppose $|\delta|\geq|\gamma|$. Then, in cases 1) and 2), it is clear that $b\leq \frac{m}{m-1}$ since $|\delta|,|\gamma| \leq 1$. In case 3), we have 
\begin{align*}
 b = \frac{1-\gamma^2}{1-(\xi-\gamma)^2/a} = \frac{1-\gamma^2}{1-\frac{\xi-\gamma}{m(\delta-\xi)}} \leq \frac{1}{1-\frac{\xi-\gamma}{m(\delta-\xi)}}
\end{align*}
It suffices to show that $\xi-\gamma \leq \delta-\xi$. This easily follows from the formulae in case 3). The proof for the case when $|\delta|\leq |\gamma|$ is obtained by replacing $\xi$ with $-\xi$.
\end{proof}

Like in Section \ref{sec: proofCuboidal}, the observer quantizes the measurements $y_t$ according to a $2^{nR}-$regular lattice quantizer with bin width $\delta$. In order for the controller to know $y_t$ to within a resolution of $\delta$, it is not hard to see that one needs $\delta 2^{nR} > 2\sqrt{P_t^{11}}+v$. We begin by assuming that the rate $R$ is large enough to provide the same resolution $\delta$ on $y_t$ at each time $t$. The actual rate required to accomplish this will be calculated determining an asymptotic upper bound on $P_t^{11}$. So, at time $t$, the controller knows that $y_t$ to within a resolution $\delta$ and hence $x_t^{(1)}$ to within a resolution of $\delta+V$. Suppose $\sqrt{P_{t}^{11}}\gamma_t \leq x_t^{(1)} \leq \sqrt{P_{t}^{11}}\delta_t$, where $\sqrt{P_{t}^{11}}(\delta_t - \gamma_t) \leq \delta+V$. Then using Lemma \ref{lem: minVol} and noting that $\gamma_t\leq \xi_t\leq\delta_t$, we have
\begin{align*}
 a_t = m_x(\xi_t-\gamma_t)(\delta_t - \xi_t) &\leq \frac{m_x}{4}(\delta_t-\gamma_t)^2\\
\implies P_t^{11}a_t \leq \frac{m_x}{4}(\delta+V)^2
\end{align*}
Using this in \eqref{eq: proofEllip2}, we get
\begin{align}
 \label{eq: proofEllip3}
P_{t|t}^{11} = a_tP_t^{11} \leq \frac{m_x}{4}(\delta+V)^2
\end{align}

Combining Lemma \ref{lem: boundbt} and \eqref{eq: proofEllip3}, we get
\begin{subequations}
 \begin{align}
  \sqrt{P_{t|t}^{11}} &\leq \frac{\sqrt{m_x}}{2}(\delta+V) \\
  \sqrt{P_{t|t}^{ii}} &\leq \sqrt{\frac{m_x}{m_x-1}}\sqrt{P_{t}^{ii}},\,\,\,i\neq 1
 \end{align}
\label{eq: proofEllipMU}
\end{subequations}

In the following Lemma, we will develop an upper bound on the diagonal elements of $P_t$ which will help us determine an upper bound on $P_{t}^{11}$.
\begin{lem}
 \label{lem: proofEllipUB}
Let $\Delta_{e,0}\in \Re^{m_x}$ be such that $\Delta_{e,0}^{(i)} = P_0^{ii}$ for $1\leq i\leq m_x$ and suppose its evolution is governed by
\begin{align*}
 \Delta_{e,t+1} &= (1+\epsilon')^{\frac{1}{2}}\Fbar \Delta_{e,t|t} + \frac{W}{2\sqrt{\epsilon'}}\mones \\
 \Delta_{e,t|t}^{(i)} &= \left\{\begin{array}{cc}
		     \delta+V & i = 1\\
		     \theta \Delta_{e,t}^{(i)} & i\neq 1
                  \end{array}\right.
\end{align*}
where $\theta = \sqrt{\frac{m_x}{m_x-1}}$. Then $\sqrt{P_{t}^{ii}} \leq \Delta_{e,t}^{(i)}$ and $\sqrt{P_{t|t}^{ii}} \leq \Delta_{e,t|t}^{(i)}$ for all $t$ and $1\leq i\leq m_x$.
\end{lem}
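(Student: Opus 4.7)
The plan is to prove Lemma~\ref{lem: proofEllipUB} by induction on $t$, simultaneously tracking the measurement--update bound $\sqrt{P_{t|t}^{ii}}\le \Delta_{e,t|t}^{(i)}$ and the time--update bound $\sqrt{P_{t}^{ii}}\le \Delta_{e,t}^{(i)}$ at every step. The base case $t=0$ is immediate from the initialization (reading the stated $\Delta_{e,0}^{(i)}=P_0^{ii}$ as $\sqrt{P_0^{ii}}$, since otherwise the inequality is inconsistent with the recursions below). The inductive hypothesis is $\sqrt{P_t^{ii}}\le \Delta_{e,t}^{(i)}$ for all $i$, and the goal is to push this through one measurement update followed by one time update.

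For the measurement update, split on the coordinate. For $i\neq 1$, equation \eqref{eq: ellipMU} gives $P_{t|t}^{ii}\le b_tP_t^{ii}$ because the rank-one subtraction $\tfrac{P_te_1e_1^TP_t}{e_1^TP_te_1}$ has nonnegative diagonal and the scalar prefactor $b_t$ dominates. Lemma~\ref{lem: boundbt} bounds $b_t\le \tfrac{m_x}{m_x-1}=\theta^2$, so taking square roots and applying the inductive hypothesis yields $\sqrt{P_{t|t}^{ii}}\le \theta\sqrt{P_t^{ii}}\le \theta\Delta_{e,t}^{(i)}=\Delta_{e,t|t}^{(i)}$, as required. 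For $i=1$ the bound is delivered by \eqref{eq: proofEllip3}, where the slab-based quantization argument combined with the formula $a_t\le m_x(\delta_t-\gamma_t)^2/4$ from Lemma~\ref{lem: minVol} directly yields the desired inequality against $\Delta_{e,t|t}^{(1)}$.

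For the time update, start from the diagonal bound \eqref{eq: proofEllip0},
\[
P_{t+1}^{ii}\le (1+\epsilon')\bigl(|a_i|\sqrt{P_{t|t}^{11}}+\sqrt{P_{t|t}^{i+1,i+1}}\bigr)^{2}+\tfrac{W^2}{4\epsilon'},\qquad i\le m_x-1,
\]
and its analogue for $i=m_x$. Applying the elementary inequality $\sqrt{A+B}\le \sqrt{A}+\sqrt{B}$ gives
\[
\sqrt{P_{t+1}^{ii}}\le (1+\epsilon')^{1/2}\bigl(|a_i|\sqrt{P_{t|t}^{11}}+\sqrt{P_{t|t}^{i+1,i+1}}\bigr)+\tfrac{W}{2\sqrt{\epsilon'}}.
\]
Substituting the measurement-update bounds just established, the right-hand side becomes exactly the $i$-th component of $(1+\epsilon')^{1/2}\Fbar\Delta_{e,t|t}+\tfrac{W}{2\sqrt{\epsilon'}}\mones$, using the canonical sparsity pattern of $\Fbar$ (nonzero entries in column $1$ equal to $|a_i|$ and a unit superdiagonal). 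This is $\Delta_{e,t+1}^{(i)}$, closing the induction.

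The main delicate point is the bookkeeping of constants in the $i=1$ measurement update, where the factor $\sqrt{m_x}/2$ arising from the minimum-volume ellipsoid through a slab is precisely what produces the $\sqrt{m_x}$ in the rate bound of Theorem~\ref{thm: ellipsoidalThm}; the cross-coordinate coupling is then absorbed into $\theta$ via Lemma~\ref{lem: boundbt}. Everything else is mechanical: concavity of $\sqrt{\cdot}$ to linearize the quadratic time update, and positive semidefiniteness of $P_{t|t}$ to reduce off-diagonal covariance terms to diagonal bounds as in \eqref{eq: proofEllip0}. No estimate beyond the inductive hypothesis, Lemma~\ref{lem: boundbt}, and the slab/ellipsoid geometry of Lemma~\ref{lem: minVol} is needed.
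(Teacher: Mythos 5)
Your induction is exactly the argument the paper compresses into "combine \eqref{eq: proofEllipTU}, \eqref{eq: proofEllip0}, \eqref{eq: proofEllipMU}," and the mechanics — Cauchy--Schwarz on the off-diagonal entries to reach \eqref{eq: proofEllip0}, then $\sqrt{A+B}\le\sqrt{A}+\sqrt{B}$ to linearize the time update, then Lemma~\ref{lem: boundbt} for the $i\ne 1$ measurement update — are all correct and identical in spirit. You also rightly repair the initialization to $\Delta_{e,0}^{(i)}=\sqrt{P_0^{ii}}$.

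One point deserves to be made explicit rather than swept into a side remark. Equation \eqref{eq: proofEllip3} yields $\sqrt{P_{t|t}^{11}}\le \tfrac{\sqrt{m_x}}{2}(\delta+V)$, whereas the lemma as printed sets $\Delta_{e,t|t}^{(1)}=\delta+V$. These agree only when $m_x\le 4$; for $m_x>4$ the printed value is too small and the inductive step does not close — the $i$-th component you obtain after substitution is $(1+\epsilon')^{1/2}\bigl(|a_i|\tfrac{\sqrt{m_x}}{2}(\delta+V)+\theta\Delta_{e,t}^{(i+1)}\bigr)+\tfrac{W}{2\sqrt{\epsilon'}}$, which exceeds $\Delta_{e,t+1}^{(i)}$ as defined. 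You clearly know the $\sqrt{m_x}/2$ constant is there (you cite it as the source of the $\sqrt{m_x}$ in the rate bound of Theorem~\ref{thm: ellipsoidalThm}), so the fix is simply to state that the lemma must be read with $\Delta_{e,t|t}^{(1)}=\tfrac{\sqrt{m_x}}{2}(\delta+V)$; this is forced by \eqref{eq: proofEllip3} and is the only choice consistent with the asymptotic rate calculation that follows. As written, your $i=1$ step claims to deliver "the desired inequality against $\Delta_{e,t|t}^{(1)}$" without noting that this requires the corrected constant.
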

\begin{proof}
 The proof follows by combining the observations from \eqref{eq: proofEllipTU}, \eqref{eq: proofEllip0}, \eqref{eq: proofEllipMU}.
\end{proof}
Note that the recursion for $\Delta_{e,t}$ above is very similar to that for $\Delta_t$ in Section \ref{sec: proofCuboidal}. So, the steady state value of $\Delta_{e,t}^{(1)}$ can be determined by a calculation similar to that in Lemma \ref{lem: determiningDelta}. The desired threshold for $R$ is obtained by letting $\delta\rightarrow\infty$ for a fixed $\epsilon'$. Since $\epsilon'$ can be made arbitrarily small, we get the following bound on $R$
\begin{align*}
 R > \frac{1}{n}\log\left[\sqrt{m_x}\sum_{i=1}^{m_x}|a_i|\theta^{i-1}\right]
\end{align*}
Now, we need to determine the exponent needed to track \eqref{eq: sysmodel} with a bounded mean squared error. In the absence of any measurements, it is easy to see from \eqref{eq: ellipTU} that the growth of $P_t$ is determined by the spectral radius of $\sqrt{1+\epsilon'}F$. Since $\epsilon'$ can be made arbitrarily small, inorder to track \eqref{eq: sysmodel} with a bounded mean squared error, we need an anytime exponent $n\beta > 2\log\rho(F)$. This completes the proof.

\def\mIu{\mathcal{I}_u}
\subsection{The Limiting Case}
\label{subsec: limitingCase}

There are several bounds in the Mathematics literature on the roots of a polynomial in terms of the polynomial coefficients, a standard and near optimal bound being the Fujiwara's bound which we state below.
\begin{lem}[Fujiwara's Bound]
 \label{lem: fujiwara}
 Consider the monic polynomial with complex coefficients $f(z) = z^m + c_1z^{m-1}+\ldots+z_m$ and let $\rho(f)$ denote the largest root in magnitude. Then
\begin{align*}
 \rho(f) \leq K(f) = 2\max\left\{|c_1|,|c_2|^{\frac{1}{2}},\ldots,|c_{m-1}|^{\frac{1}{m-1}},\left|\frac{c_m}{2}\right|^{\frac{1}{m}}\right\}
\end{align*}
\end{lem}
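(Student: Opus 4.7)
The plan is to prove the bound by contrapositive: I will show that if $|z|>K(f)$, then $f(z)\neq 0$, so every root satisfies $|z|\leq K(f)$. Let $M$ denote the quantity inside the outer maximum in the definition of $K(f)$, so that $K(f)=2M$. By the definition of $M$, the coefficients satisfy the homogenized bounds $|c_k|\leq M^k$ for $1\leq k\leq m-1$, together with the slightly relaxed bound $|c_m|\leq 2M^m$. These are precisely the inequalities encoded by the exponents $1,\tfrac{1}{2},\ldots,\tfrac{1}{m-1},\tfrac{1}{m}$ and the extra factor of $2$ inside the last entry.

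Next, for any $z$ with $|z|>2M$, the reverse triangle inequality gives
\begin{align*}
|f(z)| \;\geq\; |z|^m - \sum_{k=1}^{m}|c_k|\,|z|^{m-k} \;\geq\; |z|^m - \sum_{k=1}^{m-1}M^k|z|^{m-k} - 2M^m.
\end{align*}
Dividing through by $|z|^m$ and introducing $r=M/|z|$ (so that $r<\tfrac{1}{2}$), this becomes $|f(z)|\geq |z|^m\bigl(1 - S(r)\bigr)$ where $S(r)=\sum_{k=1}^{m-1}r^k+2r^m$. It therefore suffices to show $S(r)<1$ for every $r\in(0,\tfrac{1}{2})$.

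The critical observation is the identity $S(\tfrac{1}{2})=1$, which one checks by noting $\sum_{k=1}^{m-1}(\tfrac{1}{2})^k = 1-(\tfrac{1}{2})^{m-1}$ and $2(\tfrac{1}{2})^m = (\tfrac{1}{2})^{m-1}$, so the two tail terms telescope exactly. Since $S(r)$ is a polynomial in $r$ with strictly positive coefficients, it is strictly increasing on $[0,\infty)$; combined with $S(\tfrac{1}{2})=1$ this yields $S(r)<1$ for $r<\tfrac{1}{2}$, hence $|f(z)|>0$. Thus no root of $f$ can satisfy $|z|>2M=K(f)$, giving $\rho(f)\leq K(f)$.

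The only delicate point is the calibration that makes $r=\tfrac{1}{2}$ the exact threshold; this is precisely what forces the constant $2$ in front of the outer maximum and the factor $\tfrac{1}{2}$ inside the last entry $|c_m/2|^{1/m}$. Everything else is standard triangle-inequality and geometric-series manipulation, so I expect no further obstacle.
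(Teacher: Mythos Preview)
Your proof is correct and is precisely the standard argument for Fujiwara's bound. The paper itself does not prove this lemma; it merely states it as a classical result from the literature, so there is no ``paper's proof'' to compare against. Your contrapositive via the reverse triangle inequality and the strict monotonicity of $S(r)=\sum_{k=1}^{m-1}r^k+2r^m$ with $S(\tfrac{1}{2})=1$ is exactly how this bound is usually established. One minor edge case you might mention for completeness is $M=0$ (all $c_k=0$), where $f(z)=z^m$ and $\rho(f)=0=K(f)$ trivially; your argument as written assumes $|z|>0$ so that $r=M/|z|$ is defined, but this degenerate case poses no real issue.
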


We will detail the proof for the case of scalar measurements. The extension to the vector measurements will then suggest itself.
Let $F$ is any $m_x$-dimensional square matrix and $f(z)$ denotes its characteristic polynomial. Then the following bounds hold (for details see \cite{Sluis})
\begin{align}
\label{eq: sluis1}  \rho(F) \leq \rho(\Fbar) \leq \frac{\rho(F)}{\sqrt[m]{2}-1},\,\,K(f) \leq 2\rho(\Fbar)
\end{align} 

By the hypothesis of the Lemma, the eigen values of $F_n$ are of the form $\{\mu_i^n\}_{i=1}^{m_x}$. To emphasize the fact that $F$ depends on $n$, we write it as $F_n$ and $a_i$ as $a_{i,n}$. Recall that the characteristic polynomial of $F_n$ is given by $f_n(z) = z^{m_x} + a_{1,n}z^{m_x-1}+\ldots+a_{m_x,n}$. Let $\mIu\triangleq\{i\,\,\arrowvert\,\, |\mu_i| \geq 1\}$, then the following is easy to prove
\begin{align}
\label{eq: appendix1}
\lim_{n\rightarrow\infty}\frac{|a_{i,n}|}{\left|a_{|\mIu|,n}\right|} = 0,\,i\neq |\mIu|,\,\,\lim_{n\rightarrow\infty}\frac{1}{n}\log_2\left|a_{|\mIu|,n}\right| = \sum_{i\in\mIu}\log_2|\mu_i|
\end{align}
From \eqref{eq: appendix1}, it is obvious that $\lim_{n\rightarrow\infty}R_n = \sum_{i\in\mIu}\log_2|\mu_i|$. The asymptotics of $R_{e,n}$, $R_{v,n}$ and $R_{ev,n}$ can be similarly derived. Also, from \eqref{eq: sluis1}, it is clear that $\lim_{n\rightarrow\infty}\frac{1}{n}\log\rho\bra{\Fbar_n} = \lim_{n\rightarrow\infty}\frac{1}{n}\log\rho\bra{F_n}$. The asymptotics of $\beta_n$ and $\beta_{v,n}$ now follow immediately.

\end{document}